	\newcommand{\inConference}[1]{#1}
	\newcommand{\inFull}[1]{}
	\newcommand{\inConference}[1]{}
	\newcommand{\inFull}[1]{#1}
\title{Practical Budgeted Submodular Maximization}
\author{Moran Feldman}   {University of Haifa, Israel}{moranfe@cs.haifa.ac.il}{0000-0002-1535-2979}{}
\author{Zeev Nutov}          {The Open University of Israel, Ra'anana, Israel}{nutov@openu.ac.il}{}{}
\author{Elad Shoham}        {The Open University of Israel, Ra'anana, Israel}{shoham.elad@gmail.com}{}{}
\authorrunning{M. Feldman, Z. Nutov, and E. Shoham}
\begin{document}

\maketitle

\newcommand {\ignore} [1] {}

\newcommand{\BSM}{{\textsc{BSM}}}
\newcommand{\BSMFull}{{\textsc{Budgeted Submodular Maximization}}}
\def\Gr  {\sc Greedy}
\def\PG  {\sc Plain Greedy}
\def\GP  {\sc Greedy$^+$}
\def\OG {\sc 1-Guess}
\def\TG  {\sc 2-Guess}
\def\ThG  {\sc 3-Guess}

\def\al     {\alpha}
\def\be     {\beta}
\def\ga     {\gamma}
\def\eps  {\varepsilon}
\def\th     {\theta}

\newcommand{\defcal}[1]{\expandafter\newcommand\csname c#1\endcsname{{\mathcal{#1}}}}
\newcommand{\defbb}[1]{\expandafter\newcommand\csname b#1\endcsname{{\mathbb{#1}}}}
\newcounter{calBbCounter}
\forLoop{1}{26}{calBbCounter}{
    \edef\letter{\Alph{calBbCounter}}
    \expandafter\defcal\letter
		\expandafter\defbb\letter
}

\def\sem  {\setminus}
\def\subs {\subseteq}
\def\f       {\frac}
\newcommand{\nnR}{{\bR_{\geq 0}}}
\newcommand{\pR}{{\bR_{> 0}}}

\keywords{submodular function, knapsack constraint, practical approximation algorithms}

\begin{abstract}
We consider the problem of maximizing a non-negative monotone submodular function subject to a knapsack constraint, 
which is also known as the {\BSMFull} (\BSM) problem. 
Sviridenko~\cite{Svir} showed that by guessing $3$ appropriate elements of an optimal solution, and then executing a greedy algorithm, 
one can obtain the optimal approximation ratio of $\al=1-\nicefrac{1}{e} \approx 0.632$ for {\BSM}. 
However, the need to guess (by enumeration) $3$ elements makes the algorithm of~\cite{Svir} impractical 
as it leads to a time complexity of roughly $O(n^5)$ (this time complexity can be slightly improved using the thresholding technique of Badanidiyuru \& Vondr\'{a}k~\cite{BV}, but only to roughly $O(n^4)$). Our main results in this paper show that fewer guesses suffice. 
Specifically, by making only $2$ guesses, 
we get the same optimal approximation ratio of $\alpha$ with an improved time complexity of roughly $O(n^3)$. 
Furthermore, by making only a single guess, we get an almost as good approximation ratio of $0.6174 > 0.9767\alpha$ in roughly $O(n^2)$ time.

Prior to our work, the only approximation algorithms that were known to obtain an approximation ratio close to $\alpha$ for {\BSM} 
were the algorithm of Sviridenko~\cite{Svir} and an algorithm of Ene \& Nguyen~\cite{EN} that achieves $(\al-\eps)$-approximation. 
However, the algorithm of~\cite{EN} requires ${(1/\eps)}^{O(1/\eps^4)}n\log^2 n$ time, and hence, is of theoretical interest only since 
${(1/\eps)}^{O(1/\eps^4)}$ is huge even for moderate values of $\eps$. In contrast, all the algorithms we analyze 
are simple and parallelizable, which makes them good candidates for practical use.

Recently, Tang et al.~\cite{TTLH} studied a simple greedy algorithm that already has a long research history, 
and proved that it admits an approximation ratio of at least $0.405$ (without any guesses). 
The last part of this paper improves over the result of~\cite{TTLH}, 
and shows that the approximation ratio of this algorithm is within the range $[0.427, 0.462]$.
%
\end{abstract}
\section{Introduction} \label{s:intro}

The last two decades have seen an impressive advancement in the theoretical understanding of submodular functions maximization. Furthermore, much of this advancement translated into improved results for applications domains such as non-parametric learning~\cite{mirzasoleiman2016distributed}, regression under human assistance~\cite{de2020regression}, interpreting neural networks~\cite{elenberg2017streaming}, adversarial attacks~\cite{lei2019discrete}, data summarization~\cite{kazemi2018scalable,mitrovic2018data}, fMRI parcellation~\cite{salehi2017submodular} and DNA sequencing~\cite{libbrecht2018choosing}, to name a few.

Naturally, early theoretical works on submodular maximization considered simple algorithms for simple kinds of constraints such as a single cardinality or knapsack constraint. Later works then extended the research field by considering more involved algorithms such as Continuous Greedy~\cite{calinescu2011maximizing} and non-oblivious local search~\cite{filmus2014monotone}; or more involved constraints such as intersection of $k$ matroids~\cite{lee2010submodular}, a constant number of knapsack constraints~\cite{kulik2013approximations} and $k$-exchange systems~\cite{feldman2011improved,ward2012approximation}. However, to this day, the early work on simple algorithms for simple constraints is the one that had the largest contribution to application domains due to two reasons. First, practitioners prefer to employ algorithms that are easy to implement and are practically fast (as opposed to just having a polynomial time complexity -- as is often the case with advanced theoretical algorithms). Second, simple (and in particular greedy) algorithms are often easily adaptable to less traditional models of computation such distributed computing and the data stream model.\footnote{We refer the reader to~\cite{YZA} for examples of such adaptations of an algorithm called {\GP} which plays a central part in this paper and is introduced below.}

Motivated by the above observation, we study in this work simple greedy algorithms for the problem of maximizing a 
non-negative monotone (increasing)  submodular function subject to a single knapsack constraint. 
In the following we refer to the last problem as {\BSMFull}, or {\BSM} for short; 
we refer the reader to Section~\ref{s:notation} for a more formal definition of this problem. 
An important special case of {\BSM} is the case in which all elements have the same cost, which makes the knapsack constraint simply a cardinality constraint. 
Already in $1978$, Nemhauser, Wolsey \& Fisher~\cite{NWF} showed that a simple greedy algorithm achieves 
an approximation ratio of $1-\nicefrac{1}{e}$ for this variant, 
which is tight (up to lower order terms) due to an inapproximability result 
published in the same year by Nemhauser \& Wolsey~\cite{nemhauser1978best}.


The natural extension of the greedy algorithm of Nemhauser et al.~\cite{NWF} to the general case of {\BSM} 
is an algorithm that we refer to in this paper as {\PG}. 
This algorithm starts with an empty solution, and then grows this solution in iterations. In each iteration {\PG} adds to its solution an element that 
(i) does not violate feasibility, and (ii)~among all elements obeying (i) has a maximum {\em density} -- 
that is, the quotient between the element's contribution to the objective function and its cost.
Section~\ref{s:notation} includes a pseudocode of {\PG} and the other algorithms mentioned in the current section. 

While {\PG} does not have any constant approximation ratio for general {\BSM}, several variants of it have been shown 
to possess such an approximation guarantee, and we survey them below. 
In this survey it will be useful to express approximation ratios as fractions of $\alpha = 1 - \nicefrac{1}{e}$ (one can recall that, by the hardness result of~\cite{nemhauser1978best}, an approximation ratio of $\alpha + o(1)$ is the best we can hope for). The reason that {\PG} fails to guarantee a constant approximation ratio is that most of the value of the optimal solution might belong to a single element $u$, and {\PG} might start by picking low value (but high density) elements that prevent $u$ from being added to the solution later (note that this scenario cannot occur in the case of uniform costs).
This observation naturally suggests an algorithm that we simply term {\Gr}, which outputs the best among the solution of {\PG} and the best singleton feasible solution. It is relatively easy to argue that {\Gr} guarantees $0.5\alpha$-approximation 
(see, for example, Khuller et al.~\cite{KMN} who proved this for a special case of {\BSM} known as {\sc Budgeted Coverage}), and already in $1982$, Wolsey~\cite{W-B} showed that {\Gr} in fact guarantees $0.35 \approx 0.554 \alpha$-approximation.\footnote{Technically, Wolsey~\cite{W-B} proved this approximation ratio for another variant of {\PG} that outputs either the solution of {\PG} or a particular singleton calculated based on the execution of {\PG}. However, since {\Gr} considers all feasible singletons as possible solutions, it is at least as good as the algorithm of~\cite{W-B}.}

Khuller, Moss \& Naor~\cite{KMN} also showed that for {\sc Budgeted Coverage} (the above mentioned special case of {\BSM}), guessing $3$ elements of the optimal solution that contribute the most to this solution (by iterating over all subsets of size $3$) and then executing {\PG} on the residual instance achieves the optimal approximation ratio of $\alpha \approx 0.632$. In this paper we denote the combination of {\PG} and guessing $k$ elements of the optimal solution by {\sc $k$-Guess \PG}. Sviridenko~\cite{Svir} adapted the analysis of Khuller et al.~\cite{KMN} to the general case of {\BSM}, and showed that {\ThG~\PG} obtains the optimal approximation ratio of $\alpha$ also for this case.
Unfortunately, however, the {\ThG~\PG} algorithm is very slow. Sviridenko~\cite{Svir} provided a na\"{i}ve implementation of this algorithm whose time complexity is as high as $O(n^5)$.\footnote{In the time complexity analysis, it is standard practice to assume that $f$ can be evaluated on any given set in constant time.} A somewhat more involved implementation can be obtained using the thresholding technique of~\cite{BV}; this implementation has the the same approximation ratio as the original algorithm (up to a factor $1 - \eps)$, but enjoys a reduced time complexity of $\tilde{O}(n^4/\eps)$, where the $\tilde{O}$ notation disregards poly-logarithmic terms.\footnote{We refer the reader to~\cite{YZA} for an example of the application of the thresholding technique to a greedy algorithm in the context of {\BSM}. This technique can be applied to all the greedy algorithms considered in this paper in a similar way.}

The high time complexity of the {\ThG~\PG} algorithm severely restricts its practical use, raising the need for faster algorithms for {\BSM} that still guarantee an approximation ratio close to the optimal ratio of $\alpha$. Prior to this work, the only work (we are aware of) that achieved this goal was the work of Ene \& Nguyen~\cite{EN} that guarantees an approximation ratio of $\alpha - \eps$ in time ${(1/\eps)}^{O(1/\eps^4)}n\log^2 n$.\footnote{Badanidiyuru \& Vondr\'{a}k~\cite{BV} also claimed, as one of their many results, an algorithm that is faster then {\ThG~\PG}, but an error was found in their analysis (see~\cite{EN} for details).} The algorithm of~\cite{EN} was an amazing theoretical breakthrough, but its practical significance is very low both because it is quite involved, and more importantly, because its time complexity is huge even for moderate values of $\eps$ (note that even for a relatively large value of $\eps$ such as $\eps = 0.1$ we have $(1/\eps)^{1/\eps^4} = 10^{10000}$).

In this work we achieve a speed up over {\ThG~\PG} using a different approach. 
Instead of taking advantage of complex tools such as the multilinear extension used by~\cite{EN}, 
we strive to reduce the number of elements of the optimal solution that need to be guessed, 
while keeping the algorithm used after the guessing a simple greedy algorithm. 
Our first result shows that two guesses suffice to achieve the {\em tight} approximation ratio of $\alpha$. 
Namely, we prove the guarantees that appear in the next theorem for the {\TG~\PG} algorithm,
that guesses the two most valuable elements of an optimal solution by enumerating over all the options,
and then executes {\PG} on the residual instance. 
\begin{restatable}{theorem}{TwoGuessTheorem} \label{t:1}
{\TG~\PG} admits an approximation ratio $\alpha \approx 0.632$ in $O(n^4)$ time. 
Furthermore, using the thresholding technique of~\cite{BV}, 
the time complexity of the algorithm can be reduced to $\tilde{O}(n^3/\eps)$ 
at the cost of worsening the approximation ratio by a factor of $1 - \eps$.
\end{restatable}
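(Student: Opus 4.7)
The plan is to follow the standard framework of Khuller et al.~\cite{KMN} and Sviridenko~\cite{Svir} for analyzing budgeted greedy with guesses, but to compensate for having only two guesses (rather than three) by means of a refined case analysis on the element that blocks {\PG}.

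First, fix an optimal solution $OPT$ and order its elements $o_1, o_2, \ldots, o_m$ so that $o_i$ maximizes $f(o_i \mid \{o_1, \ldots, o_{i-1}\})$ among the elements of $OPT \setminus \{o_1, \ldots, o_{i-1}\}$. Consider the iteration of {\TG~\PG} in which the guessed pair is exactly $(o_1, o_2)$; let $U = \{o_1, o_2\}$, and let $a_1, \ldots, a_k$ be the elements added by {\PG} on the residual instance, in order. Let $a_{k+1}$ be the element that {\PG} would select at step $k+1$ but rejects because $c(U) + \sum_{j=1}^{k} c(a_j) + c(a_{k+1}) > B$. Writing $S = \{a_1, \ldots, a_k\}$, the starting point is the classical knapsack-greedy inequality, which follows from submodularity and the density-based selection rule:
\[
  f(U \cup S) + \gamma \cdot f(a_{k+1} \mid U \cup S) \;\geq\; (1 - 1/e)\, f(OPT),
\]
where $\gamma \in [0,1]$ is the fractional portion of $a_{k+1}$ that would exactly fill the remaining budget. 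This part of the argument is independent of how many elements were guessed.

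The heart of the proof, and where I expect the main obstacle to lie, is bounding the marginal term $f(a_{k+1} \mid U \cup S)$ tightly enough that it can be absorbed into $f(U \cup S)$ and yield the full $1 - 1/e$ ratio. With three guesses Sviridenko~\cite{Svir} obtains the clean bound $f(a_{k+1} \mid U) \leq f(OPT)/3$; with only two guesses the analogous direct bound is roughly $f(OPT)/2$, which a priori only gives an $(1-1/e)/2$ guarantee. The plan to recover the optimal ratio is a case analysis on $a_{k+1}$. If $a_{k+1} = o_j \in OPT$ for some $j \geq 3$, then by submodularity and the marginal ordering of $OPT$ we have $f(a_{k+1} \mid U \cup S) \leq f(o_j \mid o_1) \leq f(o_2 \mid o_1)$; since $o_1, o_2 \in U \subseteq U \cup S$, the greedy solution already contains the value $f(U) \geq f(o_1) + f(o_2 \mid o_1)$, and this can be used to charge the blocked marginal. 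If instead $a_{k+1} \notin OPT$, then by {\PG}'s density rule the density of every element of $OPT \setminus (U \cup S)$ that still fits in the residual budget is at most that of $a_{k+1}$; combining this with the fact that $c(a_{k+1})$ exceeds the remaining budget lets the marginal $f(a_{k+1} \mid U \cup S)$ be paid for out of the value $\sum_{i} \Delta_i$ already accumulated by greedy. The quantitative trade-off between these two regimes is exactly where having $o_1$ and $o_2$ (rather than just $o_1$) as guesses is essential.

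For the running time, enumerating unordered pairs of distinct elements gives an $O(n^2)$ factor, and each execution of {\PG} runs in $O(n^2)$ time in the value-oracle model (at most $O(n)$ greedy steps, each scanning $O(n)$ candidate elements), for a total of $O(n^4)$. The improvement to $\tilde{O}(n^3 / \eps)$ follows by replacing each {\PG} execution with the thresholding variant of Badanidiyuru and Vondr\'{a}k~\cite{BV}, which uses only $O((n/\eps) \log n)$ value queries at the cost of a multiplicative $(1-\eps)$ loss in the approximation guarantee.
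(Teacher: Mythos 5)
Your proposal has the right skeleton (enumerate pairs, run the density greedy on the residual instance, bound the loss caused by the element that no longer fits, and charge that loss against $f(Y)$ via the ordering of $OPT$), and you correctly locate the crux. But the crux is exactly where the argument does not close as written. In the ``blocked element'' framework you adopt, the usable form of the inequality after guessing $U=\{o_1,o_2\}$ is $f(U\cup S)\ge (1-1/e)\,f(OPT)+\frac{1}{e}f(U)-\gamma\, f(a_{k+1}\mid U\cup S)$, so to recover the ratio $1-1/e$ you must show $f(a_{k+1}\mid U\cup S)\le \frac{1}{e}f(U)\approx 0.368\, f(U)$ in the worst case $\gamma=1$. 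The best bound two guesses can give (your Case~1, via $f(o_j\mid U\cup S)\le f(o_j\mid o_1)\le f(o_2\mid o_1)\le \frac{1}{2}f(U)$) is $\frac{1}{2}f(U)$, which is not small enough; this is precisely why Sviridenko uses three guesses, where the analogous bound is $\frac{1}{3}f(U)<\frac{1}{e}f(U)$. Your Case~2 ($a_{k+1}\notin OPT$) is also not yet an argument: without Sviridenko's pre-filtering of the residual ground set (discarding every $e$ with $f(e\mid U)$ exceeding the guessed marginals) there is no bound at all on $f(a_{k+1}\mid U)$, the sketched ``pay for it out of the value already accumulated'' step is not justified, and even with the filtering one only gets back to the insufficient $\frac{1}{2}f(U)$.

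The paper escapes this quantitative trap by not reasoning about the blocked max-density element at all. Its Lemma~\ref{lem:one_element_less_guarantee} takes $w$ to be the first element of $OPT\setminus Y$ whose \emph{cost} exceeds the remaining budget (so $w\in OPT$ automatically, which eliminates your Case~2 and the need for filtering), and runs the greedy recursion of Lemma~\ref{l:r} with $R=\{w\}$ against the \emph{reduced target} $f(S^*)-f(w)$ over the shortened budget $1-c(w)$. This yields $f(Y\cup S'_{\ell'})\ge \alpha f(OPT)-\alpha f(w\mid Y)+(1-\alpha)f(Y)$, i.e., the loss term carries an extra factor of $\alpha$. The requirement therefore relaxes from $f(w\mid Y)\le\frac{1}{e}f(Y)$ to $f(w\mid Y)\le\frac{1-\alpha}{\alpha}f(Y)=\frac{1}{e-1}f(Y)\approx0.582\,f(Y)$, and now the two-guess bound $f(w\mid Y)\le\frac{1}{2}f(Y)$ does suffice. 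Without this strengthening (or some substitute for it), your case analysis cannot reach $1-1/e$. The running-time part of your proposal is correct and matches the paper.
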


Arguably, {\TG~\PG} is currently the most practical algorithm guaranteeing the tight $\alpha$-approximation for {\BSM}, 
especially if one takes into account the observation that the enumeration step, used to implement the guessing, 
makes the algorithm highly amenable to parallelization. 
Nevertheless, if one is willing to make a small sacrifice in the approximation ratio, 
it turns out that a single guess is almost as good as two.

To formalize the last claim, we need to introduce an algorithm called {\GP} due to Yaroslavtsev, Zhou \& Avdiukhin~\cite{YZA}. This algorithm runs {\PG} as usual, but then outputs the best out of all the feasible solutions that can be obtained by augmenting any of the intermediate solutions of {\PG} with up to a single additional element. The description of {\GP} might give the impression that it is much slower than {\PG}, but in fact the two algorithms have roughly the same run time since {\PG} also needs to calculate the value of the combination of its current solution with every single element to find the next element that should be added to the solution. The following theorem considers an algorithm called {\OG~\GP} that guesses an element of the optimal solution contributing the most to this solution, and then executes {\GP} on the residual instance. Note that, as promised, the theorem shows that {\OG~\GP} guarantees an approximation ratio which diviates from the optimal $\alpha$ ratio by less than $3\%$.
\begin{restatable}{theorem}{OneGuessTheorem} \label{t:2}
{\OG~\GP} admits an approximation ratio of $\f{3-\ln 4}{4-\ln 4}>0.6174>0.9767 \al$ and runs in $O(n^3)$ time. Furthermore, using the thresholding technique of~\cite{BV}, the time complexity of of the algorithm can be reduced to $\tilde{O}(n^2/\eps)$ at the cost of worsening the approximation ratio by a factor of $1 - \eps$.
\end{restatable}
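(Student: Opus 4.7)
The plan is to reduce the analysis to the ``correct'' guess. Fix an optimal solution $O$, let $u^{*}\in\arg\max_{v\in O}f(\{v\})$, and write $g = f(\{u^{*}\})$, $\theta = g/f(O)$, and $\mu=c(u^{*})/B$. Since {\OG~\GP} enumerates over all singletons, it suffices to lower-bound the value of {\GP} when started from $S_{0}=\{u^{*}\}$. Denote by $u_{1},\dots,u_{k}$ the elements added by the greedy phase on the residual instance, $S_{i}=S_{i-1}\cup\{u_{i}\}$, and $\tau_{i} = \sum_{j\leq i}c(u_{j})/B$.

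The first step is the classical density lemma. As long as at step $i$ the maximum-density element $v$ of $O\setminus S_{i-1}$ is feasible (i.e., $c(v)\leq B-c(S_{i-1})$), {\PG} picks some $u_{i}$ with $f(u_{i}\mid S_{i-1})/c(u_{i})\geq (f(O)-f(S_{i-1}))/B$ by submodularity and averaging. Iterating yields
\[
f(O)-f(S_{i})\;\leq\;(f(O)-g)\prod_{j=1}^{i}\!\left(1-\tfrac{c(u_{j})}{B}\right)\;\leq\;(f(O)-g)\,e^{-\tau_{i}}.
\]
Let $j^{*}$ be the first step at which this argument fails: the witnessing max-density OPT element $v^{*}$ of $O\setminus S_{j^{*}-1}$ satisfies $c(v^{*})>B-c(S_{j^{*}-1})$. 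If no such step exists, {\PG} already achieves $f(O)$ and we are done. Moreover, $c(v^{*})+c(u^{*})\leq c(O)\leq B$ forces $j^{*}\geq 2$.

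The second step exploits the augmentation built into {\GP}. Since $v^{*}\in O\setminus\{u^{*}\}$, the set $S_{0}\cup\{v^{*}\}$ is feasible and is therefore considered by the algorithm. Combining submodularity with the density lower bound on $v^{*}$ against $S_{j^{*}-1}$ and with $c(v^{*})>B-c(S_{j^{*}-1})$, we obtain
\[
f(S_{0}\cup\{v^{*}\})\;\geq\;g+f(v^{*}\mid S_{j^{*}-1})\;\geq\;g+(1-\sigma)\bigl(f(O)-f(S_{j^{*}-1})\bigr),
\]
where $\sigma := \mu+\tau_{j^{*}-1}$. Of course, the output also satisfies $f(\text{output})\geq f(S_{j^{*}-1})$, and more generally $\geq f(S_{i}\cup\{v\})$ for every feasible augmentation $(i,v)$; when needed I would strengthen the above by augmenting at a later $S_{i}$ or with a different $v$.

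The final step is a careful optimization that produces the $\ln 4$ in the target ratio. Writing $\hat F = f(S_{j^{*}-1})/f(O)$, the output ratio is at least $\max\{\hat F,\;\theta+(1-\sigma)(1-\hat F)\}$, while the density lemma gives $\hat F\geq 1-(1-\theta)e^{-(\sigma-\mu)}$. Saturating the density bound and equating the two arguments of the max eliminates $\hat F$ and reduces the worst case to a transcendental equation of the form $e^{\tau}=2-\sigma$, whose critical configuration involves $\ln 2=\tfrac12\ln 4$; substituting back gives the ratio $(3-\ln 4)/(4-\ln 4)$. The main obstacle will be exactly this optimization, because a naive minimization admits degenerate boundary configurations (e.g.\ $\mu=1$ or $\theta=0$), so one must combine several structural constraints---notably $\mu<\sigma$ (from $j^{*}\geq 2$) and the subadditivity inequality $f(O)-g\leq g\cdot (|O|-1)$ (from $u^{*}$ having maximum singleton value in $O$)---to pin down the true extremal point. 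The time bound is routine: {\GP} runs in $O(n^{2})$ because each of its $O(n)$ rounds performs an $O(n)$ scan for both the next greedy element and the best feasible augmentation; enumerating over $n$ initial singletons yields $O(n^{3})$ overall, and applying the thresholding technique of~\cite{BV} replaces the linear scan by $O(\log(n)/\eps)$ density buckets, bringing the total to $\tilde O(n^{2}/\eps)$ at a $(1-\eps)$ multiplicative loss in the approximation ratio.
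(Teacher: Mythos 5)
Your guessing step matches the paper's (both guess the maximum-value element $w=u^*$ of $OPT$), and the time-complexity accounting is fine, but the approximation analysis has a genuine gap: you truncate the greedy analysis at the \emph{first} step $j^*$ at which some element of $OPT$ becomes infeasible, and from then on retain only the single augmentation $S_0\cup\{v^*\}$. Everything the algorithm gains after step $j^*$ is discarded, and this loses far too much. Concretely, writing $\tau=\sigma-\mu$, the optimization you set up --- minimize $\max\{\hat F,\ \theta+(1-\sigma)(1-\hat F)\}$ subject to $\hat F\ge 1-(1-\theta)e^{-\tau}$ --- already at $\mu=\theta=0$ gives the crossing condition $e^{\tau}=2-\tau$ and the value $1-e^{-\tau}\approx 0.358$; worse, as $\mu\to 1$ (forcing $\tau\to 0$) both terms of the max tend to $\theta$, which your constraints allow to be arbitrarily small (subadditivity only gives $\theta\ge 1/|OPT|$, and $\mu<\sigma$ is still satisfied). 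So this optimization cannot certify $\f{3-\ln 4}{4-\ln 4}>0.6174$; it does not even recover the known $1/2$ guarantee for {\GP}, and the transcendental equation it produces is not the one whose solution involves $\ln 2$ in the stated bound.

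The missing idea is the one underlying the paper's Lemma~\ref{l:1-z}. Let $r'$ be the costliest element of $OPT-w$. Applying the density lemma with $R=\{r'\}$ (Lemma~\ref{l:r}) keeps the greedy recursion valid at \emph{every} step until a $1-c(r')$ fraction of the residual budget has been spent --- not merely until the first obstruction --- at the price of subtracting $f(r')$ from the target; simultaneously, the augmented sets $S_i+r'$ are feasible and considered by {\GP} at all of these steps, so (arguing by contradiction) one also gets the constant lower bound $g'(t)\ge z/(1-c(r'))$. Stitching these two differential inequalities together at a switch point $t_s$ is what yields the defining equation $y/z-1=\ln(z/(1-y))$ of $z(y)$, and then the case split on whether $f(w)\ge 1/3$ (Theorem~\ref{t:1/2} versus Lemma~\ref{l:ph}) and the minimization of $p(x)$ in Lemma~\ref{lem:P} give $\f{3-\ln 4}{4-\ln 4}$. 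Your sketch contains neither the two-regime analysis nor the use of augmentations at all intermediate steps, and the structural constraints you list do not substitute for them.
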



Since the {\Gr} algorithm is highly natural and has a long research history, a recent paper by Tang et al.~\cite{TTLH} suggested 
studying what is the best approximation ratio that can be proved for this algorithm. 
In retrospect, some answer for this question was given by Cohen \& Katzir~\cite{CK}, 
who showed that {\Gr} guarantees a ratio of at least 
$\f{\al}{1+\al} > 0.61 \al \approx 0.385$.\footnote{To be precise, Cohen \& Katzir~\cite{CK} considered a special case of {\BSM}, 
and proved their result only for this special case. However, the proof extends to the general case.}
Tang et al.~\cite{TTLH} themselves further improved the analysis of {\Gr} to guarantee $0.405 \approx 0.64\alpha$-approximation. 
Our final result, given by the next theorem, identifies the exact approximation ratio of {\Gr} up to an error of $0.035$. 
In particular, it shows that {\Gr} is strictly worse than more modern algorithms such as {\GP} (Yaroslavtsev et al.~\cite{YZA} showed that {\GP} achieves $\nicefrac{1}{2}$-approximation on its own) and the algorithms suggested by the current work.
\begin{restatable}{theorem}{NoGuessTheorem} \label{t:3}
The approximation ratio of {\Gr} is in the range $[0.427, 0.462]$, namely,
it is at least $0.427 > 0.675 \al$ and is no better than $0.462 \approx 0.73 \alpha$.
\end{restatable}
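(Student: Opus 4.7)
The theorem has two independent parts: a universal lower bound of $0.427$ on {\Gr}'s approximation ratio, and the exhibition of an instance showing the ratio is no better than $0.462$. For the lower bound, let $O$ be an optimal solution, $S$ the output of {\PG}, and $v=\max_{u\,:\,c(u)\le B}f(\{u\})$; then {\Gr} outputs a set of value at least $\max\{f(S),v\}$. The plan is to sharpen the analysis of Tang et al.~\cite{TTLH}. Let $o^*\in O$ be the first OPT element that {\PG} rejects as infeasible, and let $S_\ell$ denote the state of {\PG} immediately before this rejection. Three ingredients drive the proof: (i)~the standard density-decreasing chain along $S_0,\dots,S_\ell,\,S_\ell\cup\{o^*\}$ yields $f(S_\ell\cup\{o^*\})\ge\alpha f(O)$, since $c(S_\ell)+c(o^*)>B$; (ii)~submodularity together with the singleton bound gives $f(S_\ell)\ge\alpha f(O)-f(\{o^*\})\ge\alpha f(O)-v$; (iii)~the tail $f(S)-f(S_\ell)$ is bounded from below by density arguments on the iterations after step $\ell$, using the residual budget $B-c(S_\ell)$. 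Encoding (i)--(iii) as constraints in the variables $c(S_\ell),\,c(o^*),\,f(S_\ell),\,f(\{o^*\}),\,f(S),\,v$ turns the worst-case ratio $\max\{f(S),v\}/f(O)$ into the objective of a small factor-revealing program whose minimum I would compute to obtain the promised $0.427$.

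For the upper bound, I would construct a parametric family of instances attaining a {\Gr}/OPT ratio close to $(e-1)/(e+1)\approx 0.4621$, which arises naturally as the fixed point of the exponential estimate underlying (i). A natural template sets $B=1$ and uses three pieces: a large collection of tiny ``lure'' elements with density marginally above the average OPT density, which {\PG} consumes first; a bulk OPT element; and a complementary OPT element that the lure forces out. Tuning the cost split and the number of lure elements so that (a)~after the lure is swallowed {\PG} can only fit the bulk element, (b)~the best feasible singleton equals the bulk value, and (c)~OPT retains its full value $f(O)$, makes both $f(S)$ and $v$ collapse to the target ratio in the $\varepsilon\to 0$ limit. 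Using a weighted-coverage or matroid-rank construction for $f$ guarantees monotonicity and submodularity.

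The principal obstacle is the lower bound. Combining only (i) and (ii) yields the trivial $\alpha/2\approx 0.316$; Tang et al.~\cite{TTLH} already extracted $0.405$ by tightening the singleton bookkeeping. Pushing the guarantee to $0.427$ hinges on the tail argument (iii), whose correct deployment requires identifying the critical configuration of the resulting multi-variable optimization, likely via KKT conditions or a case split by the cost $c(o^*)$ relative to the budget. The upper bound is comparatively routine: once the lower bound analysis reveals which inequalities are (nearly) tight, the witnessing instance can be reverse-engineered to match.
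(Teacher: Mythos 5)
There are two genuine gaps, one in each half.

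\textbf{Lower bound.} Your ingredients (i)--(iii) track only a single special element $o^*$ (which, note, is forced to be the costliest element of $OPT$, since that is the first one to stop fitting). This is essentially the setup that yields the earlier $0.405$ bound of Tang et al. The paper's improvement to $0.427$ comes precisely from tracking the \emph{two} costliest elements $r,r'$ of $OPT$: applying the basic density lemma with $R=\varnothing$, $R=\{r\}$ and $R=\{r,r'\}$ gives three simultaneous lower bounds on the growth rate of $g(t)=f(S_{i(t)})$, valid on the overlapping ranges $t\le 1-c(r)$, $t\le 1-c(r')$ and $t\le c(r)+c(r')$ respectively, with $f(r),f(r')<\rho$ supplied by the singleton step. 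The resulting system of differential inequalities is then integrated numerically (a discretized recursion over a grid of approximate values for $(c(r),c(r'))$, verified by a computer program). Your proposed six-variable factor-revealing program cannot reproduce this: it contains no variable for the second element, and constraints (i)--(ii) record only the endpoint value $f(S_\ell)\ge\alpha-v$ rather than the exponential trajectory of $g$ on each phase, so the ``tail'' in (iii) has no quantitative handle. Without the second element and the phase-by-phase integration, the optimization will bottom out near $0.405$, not $0.427$.

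\textbf{Upper bound.} Your template with ``a bulk OPT element and a complementary OPT element'' has $|OPT|=2$, and is therefore defeated by the singleton step of {\Gr}: by subadditivity, $\max\{f(\text{bulk}),f(\text{comp})\}\ge f(OPT)/2=1/2>0.462$, so {\Gr} would achieve at least $1/2$ on any such instance. A witnessing instance must have $|OPT|\ge 3$ so that the best feasible singleton can fall below $1/2$. The paper's construction takes $OPT=\{z_1,z_2,w\}$ with $c(z_1)=c(z_2)=\alpha$, $c(w)=1-2\alpha$ and $\alpha=0.461$, together with \emph{two} successive families of lure elements (the first crowds out $z_1,z_2$, the second then competes only with $w$); the best singleton is then $\approx\alpha$ and greedy's output is $\approx 1-e^{\alpha-1}\bigl[2\alpha+(1-2\alpha)e^{1-\alpha(1-2\alpha)^{-1}}\bigr]\approx 0.462$, not the value $(e-1)/(e+1)$ you conjecture.
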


\subparagraph*{Paper Structure.} Section~\ref{s:notation} formally defines {\BSM} and the notation we use. 
This section also includes the pseudocode for all the algorithms we analyze in this paper, and a basic analysis used by later sections.
The proofs of our results are split between the next three sections. Section~\ref{s:2_guess} analyzes {\TG~\PG} and proves \autoref{t:1}. The algorithm {\OG~\GP} is analyzed in Section~\ref{s:1_guess}, which proves Theorem~\ref{t:2}. Finally, the analysis of {\Gr} and the proof of Theorem~\ref{t:3} appear in Section~\ref{s:no_guess}.
\section{Notation and Pseudocode of Algorithms} \label{s:notation}

In this section we formally define the notation we use and the problem {\BSM}. Using these definitions, we then give the pseudocode for the algorithms that we analyze in this work, and prove a basic result about these algorithms that is used by the next sections.

\paragraph*{Definitions}

Given a ground set $V$, a set function $f \colon 2^V \to \bR$ assigns a number to every subset of $V$.
The function $f$ is \textbf{monotone} (or non-decreasing) if $f(S) \leq f(T)$ for every two sets $S \subs T \subseteq V$.
Furthermore, it is \textbf{submodular} if for every set $S \subseteq V$ and two elements $u,v \notin S$ we have
\[
	f(S \cup \{u\}) - f(S) \geq f(S \cup \{u,v\}) - f(S \cup \{v\}) \enspace . 
\]
Intuitively, a set function is submodular if the marginal contribution of an element to a set $S$ can only decrease when other elements are added to the set. For simplicity of notation, given an element $v \in V$ and a set $S \subseteq V$, we often use below $S+v$, $S-v$, $f(v)$ and $f(v \mid S)$ as shorthands for $S \cup \{v\}$, $S \sem \{v\}$, $f(\{v\})$ and $f(S + v) - f(S)$, respectively. The expression $f(v \mid S)$ defined in the previous sentence is called the \textbf{marginal contribution} of $v$ with respect to the set $S$. Occasionally, it is also useful to consider the marginal contribution of a set $T \subseteq V$ with respect to another set $S \subseteq V$, which is defined as $f(T \mid S) \triangleq f(T \cup S) - f(S)$.

In the {\BSMFull} problem (\BSM), we are given a non-negative monotone submodular function $f\colon 2^V \to \nnR$, a positive cost function $c\colon V \to \pR$ and a positive budget $B$. The objective of the problem is to find a set maximizing $f$ among the sets $S \subseteq V$  whose cost is at most $B$ (i.e., $c(S) \triangleq \sum_{v \in S} c(v) \leq B$). For simplicity, we assume that $c(v) \leq B$ for every element $v \in V$. Clearly, any element violating this assumption cannot be a part of any feasible solution, and thus, can be discarded. Additionally, we denote by $OPT$ an arbitrary optimal solution for this problem, and occasionally assume that $OPT$ contains at least two elements. One can verify that all the algorithms we consider return an optimal solution when this assumption is violated. Finally, to avoid visual clutter, unless otherwise is explicitly mentioned, we assume both $f(OPT) = 1$ and $B = 1$ (these assumptions are without loss of generality since one can scale the costs and values to obtain these equalities, and the behavior of the algorithms we consider is independent of such scaling).

\paragraph*{Pseudocode of Algorithms}

One can recall that {\PG} starts with the empty solution, and then adds in every iteration the element with the maximum density (with respect to the current solution) among the elements whose addition to the solution does not violate feasibility. Formally, the density of an element $v \in V$ with respect to a set $S \subs V - v$ is defined as $f(v \mid S)/c(v)$. The pseudocode of {\PG}, which uses this definition, appears as Algorithm~\ref{alg:greedy}. Algorithm~\ref{alg:greedy} also includes the pseudocodes of {\Gr} and {\GP}, which differ from {\PG} only in their last lines. The last line of {\Gr} returns the better solution among the output of {\PG} and the best singleton; while the last line of {\GP} returns the better solution among the output of {\PG} and the best feasible solution that can be obtained by combining any solution that {\PG} had at some iteration with a single additional element. We would like to stress that all three algorithms {\PG}, {\Gr} and {\GP} have the same asymptotic time complexity despite the different time complexities required by their respective last lines.

\begin{algorithm}
\caption{\PG~\slash~\Gr~\slash~\GP} \label{alg:greedy}
Let $i \gets 0$ and $S_i \gets \varnothing$.\\
\While{there exists an element $v \in V \setminus S_i$ such that $c(S_i + v) \leq B$\label{line:loop}}
{
	Let $s_{i + 1}$ denote the element $v$ maximizing $f(v \mid S_i)/c(v)$ among the elements satisfying the condition given in Line~\ref{line:loop}.\\
	Let $S_{i + 1} \gets S_i + s_{i + 1}$.\\
	Increase $i$ by $1$.
}
In {\PG}: \Return{$S_i$}.\\
In {\Gr}: \Return{the set maximizing $f$ in $\{S_i\} \cup \{\{v\} \mid v \in V\}$}.\\
In {\GP}: \Return{the set maximizing $f$ in $\{S_i\} \cup \{S_{i'} + v \mid 0 \leq i' < i, v \in V \text{ and } c(S_{i'} + v) \leq B\}$}.
\end{algorithm}

We often consider algorithms that guess $k$ elements of $OPT$, for some positive integer $k$, and then execute a given algorithm $ALG$ such as {\Gr} or {\GP} on the residual instance. A general template of an algorithm of this kind is given as Algorithm~\ref{alg:guess}. To handle correctly also cases in which $OPT$ contains less than $k$ elements, the algorithm also considers all feasible solutions consisting of less than $k$ elements as possible outputs. Informally, the \textbf{residual instance}, that we refer to above, is the instance of {\BSM} obtained by assuming that the guessed elements are implicitly added to the solution. One can observe that this informal definition is consistent with the construction of the instance on which Algorithm~\ref{alg:guess} executes $ALG$ given guess $Y$.

\begin{algorithm}
\caption{\textsc{$k$-Guess}~$ALG$} \label{alg:guess}
\DontPrintSemicolon
\For{every set $Y \subseteq \cN$ of size $k$ that obeys $c(Y) \leq B$}{
	Execute $ALG$ on the residual instance defined by the ground set $V \setminus Y$, the objective function $h(S) = f(S \mid Y)$ and the budget $B - c(Y)$.\\
	Add $Y$ to the output set of $ALG$ to get a feasible solution for the original problem. \label{line:add_Y}
}
Let $\cS_1$ be the set of solutions obtained in Line~\ref{line:add_Y} in any of the iterations of the loop.\\
Let $\cS_2$ be the set of all feasible solutions containing less than $k$ elements.\\
\Return{the best solution in $\cS_1 \cup \cS_2$}.
\end{algorithm}

It is also worth mentioning that running any of the algorithms {\PG}, {\Gr} or {\GP} on the residual instance corresponding to a guess $Y$ is equivalent to executing the same algorithm on the original instance, but starting from $Y$ as $S_0$ instead of the empty set.

\paragraph*{Basic Analysis}

We complete this section with a basic analysis that applies to all the greedy algorithms we consider. To present this analysis, let us denote by $\ell$ the number of iterations performed by \autoref{alg:greedy}, and let us define for every $i \in \{0, 1, \dotsc, \ell\}$, $g(c(S_i)) = f(S_i)$. In other words, given that \autoref{alg:greedy} spent an $x$ fraction of its budget after some number $i$ of iterations, $g(x)$ gives the fraction of the value of the optimal solution that \autoref{alg:greedy} had at this point.

Let us now denote by $\Delta_g(c(S_i))$ the \emph{rate} in which $g$ increases between $c(S_i)$ and $c(S_{i + 1})$, i.e., the ratio
\[
	\frac{g(c(S_{i + 1})) - g(c(S_{i}))}{c(S_{i + 1}) - c(S_{i})}
	=
	\frac{f(S_{i + 1}) - f(S_{i})}{c(S_{i + 1}) - c(S_{i})}
	=
	\frac{f(s_{i + 1} \mid S_i)}{c(s_{i + 1})}
\]
(note that this ratio is also the density of the element $s_{i + 1}$). The next simple lemma lower bounds $\Delta_g(c(S_i))$.\inConference{ Its proof is deferred to Appendix~\ref{app:missing_preliminaries}, but intuitively it follows by (i) showing that the average density of the elements of $S^* \setminus R$ with respect to $S_i \cup R$ is large, and then (ii) using the submodularity of $f$ to argue that one of these elements has a high density also with respect to $S_i$, and therefore, $s_{i + 1}$ also has such a high density.}
\begin{restatable}{lemma}{lr}  \label{l:r}
For every two sets $R \subs S^*$ and an integer $i=0,1,\ldots,\ell - 1$, if we have $c(S^*) \leq 1$ and $1 - c(S_i) \geq \max_{s \in S^* \sem (R \cup S_i)} c(s)$, then
\begin{align*}
 (1- c(R)) \cdot \Delta_g(c(S_i)) \geq{} & f(S^*) -f(S_i \cup R) \\ \geq{} & f(S^*)-g(c(S_i))-\sum_{s \in R \sem S_i} \mspace{-9mu} f(s) \geq f(S^*)-g(c(S_i))-\sum_{s \in R} f(s) \enspace .
\end{align*}
\end{restatable}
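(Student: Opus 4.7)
The plan is to establish the three displayed inequalities from right to left, since each successive step requires a deeper use of submodularity and, eventually, of the greedy selection rule.

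The rightmost inequality is immediate from the non-negativity of $f$: replacing $\sum_{s \in R \sem S_i} f(s)$ by the larger sum $\sum_{s \in R} f(s)$ only decreases the right-hand side. For the middle inequality, recall that $g(c(S_i)) = f(S_i)$ by definition, so after rearrangement it is equivalent to the bound $f(S_i \cup R) - f(S_i) \leq \sum_{s \in R \sem S_i} f(s)$. Adding the elements of $R \sem S_i$ to $S_i$ one at a time and invoking submodularity twice (first to bound each marginal over the growing set by the marginal over $S_i$, then to bound the marginal over $S_i$ by the marginal over $\varnothing$), we obtain
\[
f(S_i \cup R) - f(S_i) \leq \sum_{s \in R \sem S_i} f(s \mid S_i) \leq \sum_{s \in R \sem S_i} f(s) \enspace ,
\]
where $f(s) = f(s \mid \varnothing)$ uses $f(\varnothing) \geq 0$.

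For the leftmost (and principal) inequality, I would first apply monotonicity and then peel $S^* \sem (S_i \cup R)$ off the set $S^* \cup S_i \cup R$ one element at a time, invoking submodularity once more to replace the marginal contributions with respect to $S_i \cup R$ by marginal contributions with respect to $S_i$:
\[
f(S^*) - f(S_i \cup R) \leq f(S^* \cup S_i \cup R) - f(S_i \cup R) \leq \mspace{-9mu} \sum_{s \in S^* \sem (S_i \cup R)} \mspace{-9mu} f(s \mid S_i \cup R) \leq \mspace{-9mu} \sum_{s \in S^* \sem (S_i \cup R)} \mspace{-9mu} f(s \mid S_i) \enspace .
\]
Now the hypothesis $1 - c(S_i) \geq \max_{s \in S^* \sem (R \cup S_i)} c(s)$ ensures that every such $s$ is still an admissible candidate for selection by \autoref{alg:greedy} in its $(i+1)$-th iteration, so by the greedy rule $f(s \mid S_i)/c(s) \leq f(s_{i+1} \mid S_i)/c(s_{i+1}) = \Delta_g(c(S_i))$. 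Multiplying by $c(s)$ and summing, and then using $R \subs S^*$ together with $c(S^*) \leq 1$ to deduce that
\[
\sum_{s \in S^* \sem (S_i \cup R)} c(s) \leq c(S^* \sem R) = c(S^*) - c(R) \leq 1 - c(R) \enspace ,
\]
yields the claimed inequality.

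The only genuinely delicate point is this last feasibility step: the hypothesis on $1 - c(S_i)$ is included in the statement precisely so that the greedy density bound applies to every element of $S^* \sem (R \cup S_i)$. Without it, such an element could exhibit a larger density than $s_{i+1}$ merely because it was disqualified from the greedy choice on budget grounds, breaking the chain. Everything else is a direct combination of monotonicity, submodularity, and the greedy selection rule.
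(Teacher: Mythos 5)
Your proof is correct and follows essentially the same route as the paper's: the right two inequalities via submodularity and non-negativity, and the main inequality via monotonicity, peeling off $S^* \sem (S_i \cup R)$ with submodularity, and then invoking the greedy density bound (justified by the cost hypothesis) together with $c(S^* \sem R) \leq 1 - c(R)$. The only cosmetic difference is that you make the intermediate step through $f(s \mid S_i \cup R)$ explicit, whereas the paper collapses it into a single application of submodularity.
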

\newcommand{\lrProof}{
\begin{proof}
The second and last inequalities of the lemma follow from the following calculation, which holds by the submodularity of $f$, the definition of $g$ and the non-negativity of $f$.
\[
	f(S_i \cup R) \leq f(S_i)+\sum_{s \in R \sem S_i} f(s)=g(c(S_i))+\sum_{s \in R \sem S_i} f(s) \leq g(c(S_i))+\sum_{s \in R} f(s)\ .
\]

It remains to prove the first inequality of the lemma. To do that, it is useful to define $R_i=S_i \cup R$. Then,
\begin{align*}
	f(S^*)-f(S_i \cup R)
	={} &
	f(S^*)-f(R_i)
	\leq
	f(S^* \cup R_i)-f(R_i) \leq
	\sum_{s \in S^* \sem R_i} \mspace{-9mu} f(s \mid S_i)\\
	={} &
	\sum_{s \in S^* \sem R_i} \mspace{-9mu} c(s) \f{f(s \mid S_i)}{c(s)}
  \leq
	\sum_{s \in S^* \sem R_i} \mspace{-9mu} c(s) \f{f(s_{i + 1} \mid S_i)}{c(s_{i + 1})}\\
	={} &
	(c(S^*) - c(S^* \cap R_i)) \cdot \Delta_g(c(S_i))
	\leq
	(1 - c(R)) \cdot \Delta_g(c(S_i)) \enspace .
\end{align*}
The first inequality holds since $f$ is monotone.
The second inequality holds since $f$ is submodular. 
The third inequality follows from the way $s_{i + 1}$ is chosen by {\Gr} and the observation that the condition $1 - c(S_i) \geq  \max_{s \in S^* \sem (R \cup S_i)} c(s)$ implies that it is feasible to add any element $s \in S^* \sem R_i$ to $S_i$. 
Finally, the last inequality holds since $c(S^*) \leq 1$ and $R \subs S^* \cap R_i$. 
\end{proof}}\inFull{\lrProof}
\section{Analyzing {\TG~\PG} (Theorem~\ref{t:1})} \label{s:2_guess}

In this section we prove the approximation guarantee stated in \autoref{t:1}. We do not explicitly prove the time complexities stated in this theorem (or any of our other theorems) because they immediately follow from previous works such as~\cite{YZA}. For convenience, we repeat \autoref{t:1} below.
\TwoGuessTheorem*

We begin by the proof of the theorem with the following lemma regarding {\PG}.\inConference{ Since similar lemmata have been proved in many previous works, and due to space constraints, we defer the proof of this lemma to Appendix~\ref{app:two_guesses_missing}.}
\begin{restatable}{lemma}{lemOneElementLessGuarantee} \label{lem:one_element_less_guarantee}
Let $S^*$ be a feasible solution, and let $k$ be the smallest integer $1 \leq k \leq \ell - 1$ such that there is an element $w \in S^*$ such that $c(w) > B - c(S_k)$. If $k$ exists, then {\PG} produces a solution of value at least $\alpha \cdot [f(S^* - f(w)]$. Otherwise, {\PG} produces a solution of value at least $\alpha \cdot f(S^*)$.
\end{restatable}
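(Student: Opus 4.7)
My plan is to apply Lemma~\ref{l:r} iteratively in each of the two cases, choosing the set $R$ to capture the ``problematic'' element $w$ (if any), and then telescoping the resulting one-step recurrence with the inequality $1 - x \leq e^{-x}$.

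In the main case, where $k$ exists, I would take $R = \{w\}$ and invoke Lemma~\ref{l:r} at every $i \in \{0, 1, \dotsc, k - 1\}$. By the minimality of $k$, every element of $S^*$ (and hence every element of $S^* \setminus \{w\}$) satisfies $c(s) \leq B - c(S_i)$ for such $i$, so the premise of Lemma~\ref{l:r} holds. Writing $M = f(S^*) - f(w)$, the lemma rearranges into
\[
	M - f(S_{i+1}) \leq \left(1 - \frac{c(s_{i+1})}{1 - c(w)}\right)\bigl(M - f(S_i)\bigr),
\]
and telescoping from $i = 0$ to $i = k - 1$ gives
\[
	M - f(S_k) \leq M \exp\!\left(-\frac{c(S_k)}{1 - c(w)}\right).
\]
The $k$-defining inequality $c(w) > B - c(S_k)$ makes the exponent strictly below $-1$, so $M - f(S_k) < M/e$, which combined with the monotonicity of $f$ yields $f(S_\ell) \geq f(S_k) > \alpha\bigl(f(S^*) - f(w)\bigr)$.

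In the ``otherwise'' case, no element of $S^*$ is squeezed out during any of the iterations indexed by $[1, \ell - 1]$, so I would run the same style of iteration but with $R = \emptyset$ at every $i \in \{0, 1, \dotsc, \ell - 1\}$ (the premise of Lemma~\ref{l:r} holds either by the case hypothesis or, at $i = 0$, by the feasibility of $S^*$). Telescoping produces the standard bound $f(S_\ell) \geq (1 - e^{-c(S_\ell)})\,f(S^*)$, which is at least $\alpha\,f(S^*)$ as soon as $c(S_\ell) \geq 1$; for the complementary sub-case $c(S_\ell) < 1$ I would use PG's termination condition in tandem with the ``no $k$'' hypothesis to conclude that $S^*$ must already be contained in $S_\ell$, after which $f(S_\ell) \geq f(S^*)$ follows by monotonicity.

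The main obstacle is executing the Case~1 algebra cleanly: the contraction factor $1 - c(s_{i+1})/(1 - c(w))$ can in principle become negative, when some $s_{i+1}$ has cost exceeding $1 - c(w)$. The standard remedy is to observe that as soon as such an index is reached the recurrence already delivers $f(S_{i+1}) \geq M$, a strictly stronger conclusion than $\alpha M$, so the telescoping can be truncated at that index without loss and the remaining iterations can be ignored.
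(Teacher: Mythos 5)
Your proposal is correct and follows essentially the same route as the paper: apply Lemma~\ref{l:r} with $R=\{w\}$ (resp.\ $R=\varnothing$) for $i=0,\dotsc,k-1$, rearrange into a one-step contraction of $f(S^*)-f(w)-f(S_i)$, telescope via $1-x\le e^{-x}$, and use $c(S_k)\ge 1-c(w)$ to push the exponent below $-1$. The only differences are cosmetic: the paper neutralizes the possibly-negative contraction factor by assuming WLOG that $f(S_i)<f(S^*)-f(w)$ for all $i\le k-1$ rather than truncating the telescope at the first offending index, and it dispatches the ``otherwise'' case by treating $w$ as a cost-zero dummy, whereas your explicit split on $c(S_\ell)$ and appeal to the termination condition spells out a step the paper leaves implicit.
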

\newcommand{\proofLemOneElementLessGuarantee}{\begin{proof}
We begin the proof by considering the case in which $k$ exists, and as usual assume $B = 1$. Furthermore, we also assume that $f(S_i) < f(S^*) - f(w)$ for every $0 \leq i \leq k - 1$, which is without loss of generality since the lemma follows immediately from the monotonicity of $f$ when this assumption is violated.

We now observe that the definition of $w$ implies that Lemma~\ref{l:r} applies for every $0 \leq i \leq k - 1$ and set $R \subseteq S^*$. Choosing $R = \{w\}$, we get
\[
	\Delta_g(c(S_i))
	\geq
	\frac{f(S^*) - g(c(S_i)) - f(w)}{1 - c(w)}
	=
	\frac{f(S^*) - f(w) - f(S_i)}{1 - c(w)}
	\quad
	\forall 0 \leq i \leq k - 1
	\enspace.
\]
Plugging the definition of $\Delta_g(c(S_i))$ into the last inequality now gives, for every $0 \leq i \leq k - 1$,
\[
	\frac{f(S_{i + 1}) - f(S_i)}{c(s_{i + 1})} \geq \frac{f(S^*) - f(w) - f(S_i)}{1 - c(w)}
	\enspace,
\]
and rearranging this inequality implies
\begin{align*}
	f(S^*) - f(w) - f(S_{i + 1})
	\leq{} &
	\left(1 - \frac{c(s_{i + 1})}{1 - c(w)}\right)[f(S^*) - f(w) - f(S_i)]\\
	\leq{} &
	e^{-c(s_{i + 1}) / (1 - c(w))} \cdot [f(S^*) - f(w) - f(S_i)]
	\enspace.
\end{align*}
Unraveling the last inequality for all $0 \leq i \leq k - 1$ gives,
\begin{align*}
	f(S^*) - f(w) - f(S_{k}&)
	\leq
	\prod_{i = 0}^{k - 1} e^{-c(s_{i + 1}) / (1 - c(w))} \cdot [f(S^*) - f(w) - f(S_0)]\\
	={} &
	e^{-c(S_{k}) / (1 - c(w))} \cdot [f(S^*) - f(w) - f(S_0)]
	\leq
	e^{-1} \cdot [f(S^*) - f(w)]
	\enspace,
\end{align*}
where the last inequality holds by the non-negativity and monotonicity of $f$ and the observation that the definition of $k$ implies $c(S_k) \geq 1 - c(w)$. The first case of the lemma now follows by rearranging the last inequality since the monotonicity of $f$ guarantees that $f(S_k)$ is a lower bound on the value of the output of {\Gr}.

The proof for the second case of the of the lemma is very similar to the proof of the first part. The only two changes that need to be done are: (i) $R$ should be chosen as the empty set, and (ii) $w$ should be chosen as a dummy element of cost $0$ that does not affect the objective function $f$ at all.
\end{proof}}
\inFull{\proofLemOneElementLessGuarantee}

Let us now denote by $\{u_1, u_2\}$ the maximum value subset of $OPT$ of size two (recall that we assume $|OPT| \geq 2$). We prove \autoref{t:1} by considering the iteration of {\TG~\PG} in which it guesses $Y = \{u_1, u_2\}$. In this iteration, {\TG~\PG} executes {\PG} on the residual instance defined by the ground set $V \setminus Y$, the budget $B - c(Y)$ and the objective function $h(S) = f(S \mid Y)$. Let us denote this residual instance by $\cI'$, and let us denote by $\ell'$ and $S'_i$ the values of $\ell$ and $S_i$, respectively, corresponding to the execution of {\PG} on the instance $\cI'$. Finally, let $k'$ be the smallest integer $1 \leq k' \leq \ell' - 1$ such that there exists an element $w' \in OPT \setminus \{u_1, u_2\}$ such that $c(w') > (1 - c(Y)) - c(S'_k)$.

The following lemma completes the proof of \autoref{t:1} (note that $Y \cup S'_{\ell'}$ is one of the solutions considered by {\TG~\PG} for its output).
\begin{lemma}
$f(Y \cup S'_{\ell'}) \geq \alpha \cdot f(OPT) = (1 - 1/e) \cdot f(OPT)$.
\end{lemma}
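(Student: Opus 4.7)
The plan is to invoke Lemma~\ref{lem:one_element_less_guarantee} on the residual instance $\cI'$ with $S^* = OPT \setminus Y$, which is feasible in $\cI'$ since $c(OPT) \leq 1$ implies $c(OPT \setminus Y) \leq 1 - c(Y)$. Because the objective of $\cI'$ is $h(S) = f(S \mid Y)$, the lemma will give $h(S'_{\ell'}) \geq \alpha \cdot [h(OPT \setminus Y) - h(w')]$ when $k'$ exists, and $h(S'_{\ell'}) \geq \alpha \cdot h(OPT \setminus Y)$ otherwise. Using the identities $h(OPT \setminus Y) = f(OPT) - f(Y) = 1 - f(Y)$ and $f(Y \cup S'_{\ell'}) = f(Y) + h(S'_{\ell'})$, both cases will translate directly into lower bounds on the quantity of interest.

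When $k'$ does not exist, the bound is immediate: $f(Y \cup S'_{\ell'}) \geq f(Y) + \alpha(1 - f(Y)) = \alpha + (1 - \alpha) f(Y) \geq \alpha$. The real work lies in the case where $k'$ (and hence the element $w' \in OPT \setminus Y$) exists; here I will need to control $h(w') = f(w' \mid Y)$ from above in terms of $f(Y)$.

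The key step will exploit the fact that $Y = \{u_1, u_2\}$ is the most valuable $2$-subset of $OPT$. Since $\{u_1, w'\}$ and $\{u_2, w'\}$ are also $2$-subsets of $OPT$, this gives $f(Y) \geq f(\{u_j, w'\})$ for $j \in \{1, 2\}$. Applying submodularity to each yields $f(\{u_j, w'\}) \geq f(u_j) + f(w' \mid Y)$, so summing these two inequalities and invoking the submodular (and non-negativity) consequence $f(u_1) + f(u_2) \geq f(\{u_1, u_2\}) = f(Y)$ will deliver the clean bound $f(w' \mid Y) \leq f(Y)/2$.

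Plugging this into the lemma's conclusion gives $f(Y \cup S'_{\ell'}) \geq f(Y) + \alpha[1 - f(Y) - f(Y)/2] = \alpha + f(Y)(1 - 3\alpha/2)$, and this is at least $\alpha$ precisely because $\alpha = 1 - 1/e < 2/3$ (equivalently $e < 3$), so the coefficient of $f(Y)$ is non-negative. The main obstacle is the bound $f(w' \mid Y) \leq f(Y)/2$: this is exactly where picking $Y$ to be the best pair in $OPT$ (rather than, say, the two best singletons) is essential, and the slack $2/3 - \alpha > 0$ in the final comparison is what makes two guesses, combined with {\PG}, suffice for the tight ratio $\alpha$.
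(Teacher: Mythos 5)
Your proposal is correct and follows essentially the same route as the paper: both cases are handled by applying Lemma~\ref{lem:one_element_less_guarantee} to the residual instance, and your key bound $f(w' \mid Y) \leq f(Y)/2$ is derived from exactly the same ingredients (maximality of $Y$ among $2$-subsets of $OPT$, submodularity, and $f(u_1)+f(u_2)\geq f(Y)$) that the paper uses, with the final comparison likewise resting on $\alpha \leq 2/3$.
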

\begin{proof}
If $k'$ does not exist, then since $S^* = OPT \setminus Y$ is a feasible solution for the residual instance $\cI'$, Lemma~\ref{lem:one_element_less_guarantee} guarantees that the output $S'_{\ell'}$ of {\PG} is of value at least $\alpha \cdot h(S^*)$ according to the objective function $h$ of the residual instance. Therefore,
\begin{align*}
	f(Y \cup S'_{\ell'})
	={} &
	h(S'_{\ell'}) + f(Y)
	\geq
	\alpha \cdot h(OPT \setminus Y) + f(Y)\\
	={} &
	\alpha \cdot f(OPT) + (1 - \alpha) \cdot f(Y)
	\geq
	\alpha \cdot f(OPT)
	\enspace,
\end{align*}
where the last inequality follows from the non-negativity of $f$.

Consider now the case in which $k'$ exists, in this case Lemma~\ref{lem:one_element_less_guarantee} guarantees that the output $S'_{\ell'}$ of {\PG} is of value at least $\alpha \cdot [h(S^*) - h(w')]$ according to the objective function $h$ of the residual instance. Therefore,
\begin{align*}
	f(Y \cup S'_{\ell'})
	={} &
	h(S'_{\ell'}) + f(Y)
	\geq
	\alpha \cdot [h(OPT \setminus Y) - h(w')] + f(Y)\\
	={} &
	\alpha \cdot f(OPT) - \alpha \cdot f(w' \mid Y) + (1 - \alpha) \cdot f(Y)
	\enspace.
\end{align*}
To see that the last inequality completes the proof of the second case, we note that the definition of $Y$ and the submodularity of $f$ imply
\begin{align*}
	(1 - \alpha) \cdot f(Y)
	\geq{} &
	(1 - \alpha) \cdot [f(\{u_1, w\}) + f(\{u_2, w\}) - f(Y)]\\
	={}&
	(1 - \alpha) \cdot [f(w \mid \{u_1\}) + f(w \mid \{u_2\}) + f(u_1) + f(u_2) - f(Y)]\\
	\geq{} &
	2(1 - \alpha) \cdot f(w \mid Y)
	\geq
	\alpha \cdot f(w \mid Y)
	\enspace.
	\qedhere
\end{align*}
\end{proof}

\section{Analyzing \texorpdfstring{\OG~\GP}{\OG~\Gr+} (Theorem~\ref{t:2})} \label{s:1_guess}

%

In this section we prove the approximation guarantee stated in \autoref{t:2}. For convenience, we repeat the theorem itself below. Recall that {\GP} is a variant of {\PG} that considers as a possible output every solution that can be obtained by combining any intermediate solution $S_i$ of {\PG} with one other element.
\OneGuessTheorem*


\inFull{As a warm-up, we reprove below one of the main results of~\cite{YZA}. Let $r$ be an element of $OPT$ of maximum cost.}
\inConference{In the proof of \autoref{t:2} we use the following result of~\cite{YZA}. For completeness, we include a proof of this result in Appendix~\ref{app:G_plus_missing}.}

\begin{restatable}[Yaroslavtsev, Zhou \& Avdiukhin \cite{YZA}]{theorem}{tHalf} \label{t:1/2}
{\GP} guarantees an approximation ratio of $1/2$.
\end{restatable}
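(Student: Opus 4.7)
The plan is to case-split on whether the maximum-cost element $r \in OPT$ ever becomes infeasible during the execution of the inner {\PG}. Since $r$ has maximum cost in $OPT$, if $r$ remains feasible to add at every intermediate solution $S_i$ that {\PG} passes through, then every element of $OPT \setminus S_i$ is feasible as well; in this situation {\PG} terminates only when no element at all can be added, which forces $OPT \subseteq S_\ell$ and therefore $f(S_\ell) \geq f(OPT)$. Hence I may assume the existence of a smallest index $j \geq 1$ with $c(S_j) + c(r) > B$, i.e., $r$ is feasible together with $S_{j-1}$ but not with $S_j$.

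Let $T := S_{j-1} + r$, which is feasible and is therefore among the candidate solutions compared by {\GP}; let $A$ denote {\GP}'s output, so $f(A) \geq f(T)$ and also $f(A) \geq f(S_\ell) \geq f(S_j)$. The cost-condition of Lemma~\ref{l:r} with $S^* = OPT$, $R = \{r\}$, $i = j - 1$ holds because $c(s) \leq c(r) \leq 1 - c(S_{j-1})$ for every $s \in OPT$, and the lemma yields
\[
  f(T) \geq f(OPT) - (1 - c(r)) \cdot \Delta_g(c(S_{j-1})) = f(OPT) - \gamma \cdot (f(S_j) - f(S_{j-1})),
\]
where $\gamma := (1 - c(r))/c(s_j)$. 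Adding $f(A) \geq f(S_j)$ and rearranging yields
\[
  2 f(A) \geq f(OPT) + f(S_{j-1}) + (1 - \gamma) \cdot (f(S_j) - f(S_{j-1})).
\]

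When $\gamma \leq 1$ the right-hand side is at least $f(OPT)$ and we are done. The main obstacle is the case $\gamma > 1$, where the coefficient of the marginal increment becomes negative. My plan for that case is to exploit the non-increasing nature of the greedy densities $d_i := f(s_i \mid S_{i-1})/c(s_i)$: since $s_{i+1}$ was a feasible candidate already at iteration $i$, the greedy choice at iteration $i$ combined with submodularity gives $d_i \geq f(s_{i+1} \mid S_{i-1})/c(s_{i+1}) \geq d_{i+1}$. Writing $f(S_{j-1}) = \sum_{i=1}^{j-1} c(s_i) d_i$ as a weighted average with total weight $c(S_{j-1})$, this monotonicity implies $f(S_{j-1}) \geq c(S_{j-1}) \cdot d_j$. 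Plugging this into the required inequality $f(S_{j-1}) \geq (\gamma - 1)(f(S_j) - f(S_{j-1}))$ and cancelling $d_j$ reduces it to $c(S_{j-1}) + c(s_j) + c(r) \geq 1$, which is exactly the defining property of $j$ (rewritten as $c(S_j) + c(r) > B$). This closes the case $\gamma > 1$ and completes the proof of the $1/2$ ratio.
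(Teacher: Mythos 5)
Your proof is correct, and while it shares the paper's skeleton -- identifying the first iteration $j$ at which the costliest element $r$ of $OPT$ no longer fits, and applying Lemma~\ref{l:r} with $S^*=OPT$ and $R=\{r\}$ -- the way you extract the $1/2$ bound is genuinely different. The paper applies the lemma at \emph{every} iteration $i<j$: assuming all augmented candidates $S_i+r$ have value below $\nicefrac{1}{2}$, it rearranges the resulting differential inequality into $f(S_{i+1})-f(S_i)\geq \frac{c(s_{i+1})}{2(1-c(r))}$ and telescopes, using $c(S_j)>1-c(r)$ to conclude $f(S_j)>\nicefrac{1}{2}$. You instead apply the lemma only once, at the critical iteration, add the two candidate bounds $f(A)\geq f(S_{j-1}+r)$ and $f(A)\geq f(S_j)$, and handle the problematic case $\gamma=\frac{1-c(r)}{c(s_j)}>1$ by invoking the (standard, correctly justified) fact that greedy densities are non-increasing, which yields $f(S_{j-1})\geq c(S_{j-1})\cdot d_j$ and reduces the claim to the defining inequality $c(S_j)+c(r)>B$. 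Your handling of the degenerate case (no such $j$) is also slightly stronger than the paper's: you get $OPT\subseteq S_\ell$ directly rather than invoking the $\alpha$-guarantee of Lemma~\ref{lem:one_element_less_guarantee}. What the paper's telescoping buys is uniformity with the rest of its analysis (the same differential-inequality machinery drives Lemma~\ref{l:1-z} and Section~\ref{s:no_guess}); what your version buys is locality -- a single application of Lemma~\ref{l:r} plus an elementary monotonicity fact -- at the cost of the extra density lemma. Both arguments are complete and tight.
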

\newcommand{\proofTHalf}{\begin{proof}
Let $k$ be the smallest integer $1 \leq k \leq \ell$ for which $c(S_k) > 1 - c(r)$. If $k$ does not exist, then Lemma~\ref{lem:one_element_less_guarantee} guarantees that {\PG}, and therefore also {\GP}, achieves in fact an approximation ratio of $\alpha = 1 - 1/e > 1/2$. Thus, we assume below that $k$ exists.

Applying Lemma~\ref{l:r} with $R=\{r\}$ and $S^* = OPT$, we get for every $0 \leq i < k$,
\[
	1
	\leq
	(1 - c(r)) \cdot \Delta_g(c(S_{i})) + f(S_{i} + r)
	=
	(1 - c(r)) \cdot \frac{f(S_{i + 1}) - f(S_{i})}{c(s_{i + 1})} + f(S_{i} + r)
	\enspace.
\]
If $f(S_{i} + r) \geq \nicefrac{1}{2}$ for any integer $0 \leq i < k$, then we are done because $S_i + r$ is one of the solutions considered by {\GP}. Otherwise, rearranging the last inequality yields
\[
	\frac{c(s_{i + 1})}{2(1 - c(r))}
	\leq
	f(S_{i + 1}) - f(S_{i})
	\enspace.
\]
Adding up this inequality over all values of $i$ gives
\[
	f(S_k)
	\geq
	\sum_{i = 0}^{k - 1} \frac{c(s_{i + 1})}{2(1 - c(r))} + f(S_0)
	>
	\frac{1}{2}
	\enspace,
\]
where the second inequality follows from the non-negativity of $f$ and the fact that by the definition of $k$ we have $\sum_{i = 0}^{k - 1} c(s_{i + 1}) = c(S_k) > 1 - c(r)$. The theorem follows since $f(S_k)$ is a lower bound on the value of the output of {\GP} by the monotonicity of $f$.
%
\end{proof}}
\inFull{\proofTHalf}

\inConference{Let $r$ be an element of $OPT$ of maximum cost. }The guarantee of \autoref{t:1/2} is completely independent of the properties of the element $r$. However, when $r$ has a small value, we intuitively expect the guarantee of {\GP} to improve because in this regime $OPT - r$ is a solution of high value whose individual elements can still be added to the solution of {\PG} until this solution reaches a cost of at least $1 - c(r)$. Lemma~\ref{l:1-z} below formally states such an improved guarantee for {\PG} in the case of a small $f(r)$, but before getting to this lemma we need to present some preliminaries.

\begin{figure}
\centering 
\includegraphics{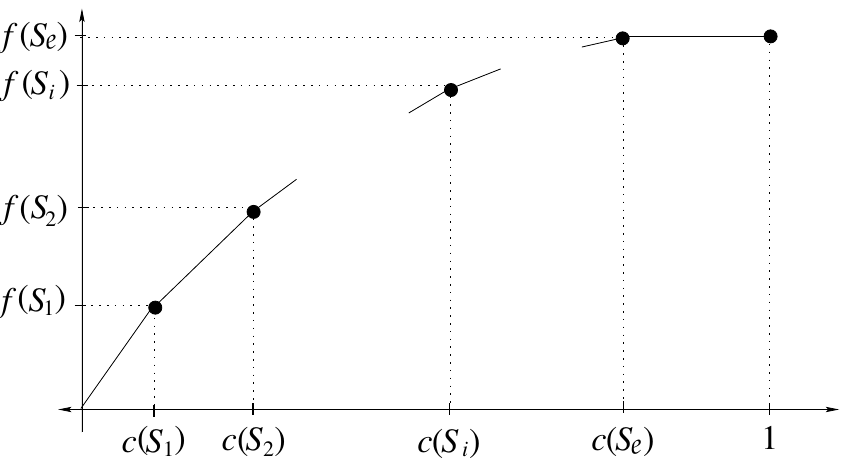}
\caption{Illustration of the piecewise linear extension of $g$.}
\label{f:G+}
\end{figure}

As defined above, the function $g$ is defined only for points $t \in [0, 1]$ that happen to have $t = c(S_i)$ for some $0 \leq i\leq \ell$. We extend it to all points of $[0, 1]$ by (i) defining $g(1) = f(S_\ell)$ and then (ii) connecting every two adjacent points in the graph of $g$ with a linear segment (see Figure~\ref{f:G+}). Formally, if we define $S_{\ell + 1} = S_\ell$ and let $i(t)$ be the largest integer such that $c(S_{i(t)}) \leq t$, then for every $t \in [0, 1]$ we have
\[
	g(t)
	=
	\frac{(t - c(S_{i(t)})) \cdot f(S_{i(t) + 1}) + (c(S_{i(t) + 1}) - t) \cdot f(S_{i(t)})}{c(s_{i + 1})} \enspace.
\]

Note that the function $g$ is of interest because the equality $f(S_\ell) = g(1)$, which holds by $g$'s definition, and our assumption that $f(OPT) = 1$ mean together that $g(1)$ is the approximation ratio {\GP}. In light of this observation, it is interesting to study some basic properties of the function $g$, which we do next.
\begin{observation}
The function $g$ is non-decreasing and continuous. Moreover, its derivative is defined at every point in the range $[0, 1]$ except for maybe a finite set of points, and at every point $t \in [0, 1 - c(r)]$ in which $g'(t)$ is defined we have
\begin{equation} \label{e:g}
	g'(t) \geq \max\left\{\f{1-g(t)-f(r)}{1-c(r)},\f{1-f(S_{i(t)}+r)}{1-c(r)} \right\} \ .
\end{equation}
\end{observation}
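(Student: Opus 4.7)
The plan is to read off the three qualitative properties of $g$ directly from its piecewise linear construction, and then obtain the quantitative bound on $g'(t)$ as a straightforward application of Lemma~\ref{l:r} to the iteration index $i(t)$ with $R=\{r\}$ and $S^*=OPT$.

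First I would note that, by construction, $g$ interpolates linearly between the finitely many data points $(c(S_i),f(S_i))$ for $i=0,1,\dots,\ell$, extended by a flat piece to $(1,f(S_\ell))$. Continuity is then immediate. Monotonicity follows from the fact that $f$ is monotone and non-negative, so each marginal $f(s_{i+1}\mid S_i)$ is non-negative, which means each linear piece has non-negative slope. The derivative of $g$ fails to exist only at the breakpoints $c(S_0),c(S_1),\dots,c(S_\ell)$, a finite set; on each open interval $(c(S_{i(t)}),c(S_{i(t)+1}))$ the derivative equals the slope of that piece, namely $\Delta_g(c(S_{i(t)}))$.

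Next I would verify the hypothesis of Lemma~\ref{l:r} at $i=i(t)$ with $S^*=OPT$ and $R=\{r\}$. Since $r$ is an element of $OPT$ of maximum cost, every $s\in OPT\setminus (R\cup S_{i(t)})$ satisfies $c(s)\leq c(r)$; and from $c(S_{i(t)})\leq t\leq 1-c(r)$ we get $1-c(S_{i(t)})\geq c(r)\geq \max_{s\in OPT\setminus(R\cup S_{i(t)})} c(s)$, as required. The condition $c(S^*)\leq 1$ holds because $OPT$ is a feasible solution and $B=1$.

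Once Lemma~\ref{l:r} is in hand, both bounds in \eqref{e:g} fall out. The first inequality of the lemma gives
\[
(1-c(r))\cdot g'(t)\;\geq\; f(OPT)-f(S_{i(t)}\cup\{r\})\;=\;1-f(S_{i(t)}+r),
\]
which yields the second term inside the $\max$ after dividing by $1-c(r)>0$. The second inequality of the lemma gives
\[
(1-c(r))\cdot g'(t)\;\geq\; 1-g(c(S_{i(t)}))-f(r),
\]
and since $g$ is non-decreasing and $c(S_{i(t)})\leq t$, we have $g(c(S_{i(t)}))\leq g(t)$, so the right-hand side is at least $1-g(t)-f(r)$; dividing by $1-c(r)$ produces the first term inside the $\max$.

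I do not anticipate a real obstacle here: all the work is encapsulated in Lemma~\ref{l:r}, and the only subtle bookkeeping is checking that its cost hypothesis holds for every $t\in[0,1-c(r)]$, which is exactly where the assumption that $r$ has maximum cost in $OPT$ is used.
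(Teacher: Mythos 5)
Your argument is correct and matches the paper's for the main case, but it has a gap: it silently assumes that every $t \in [0, 1-c(r)]$ at which $g'(t)$ exists lies in the ``active'' part of the greedy run, i.e.\ that $t < c(S_\ell)$. If $c(S_\ell) \leq 1 - c(r)$, then for $t \in (c(S_\ell), 1-c(r)]$ you have $i(t) = \ell$ (or $\ell+1$ with the convention $S_{\ell+1}=S_\ell$), and two things break: Lemma~\ref{l:r} is stated only for $i = 0,1,\dots,\ell-1$, so it cannot be invoked at $i(t)$; and the quantity $\Delta_g(c(S_{i(t)}))$ is not even defined there, since $c(S_{\ell+1}) - c(S_\ell) = 0$ --- the derivative of $g$ on that final flat piece is $0$, not the slope of a greedy step. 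So your claim that ``on each open interval the derivative equals $\Delta_g(c(S_{i(t)}))$'' and the subsequent application of the lemma do not cover this range.

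The missing idea, which is exactly the second case in the paper's proof, is that in this regime the right-hand side of the claimed bound is non-positive, so $g'(t)=0$ suffices. Concretely: if $t \geq c(S_\ell)$ and $t \leq 1-c(r)$, then every element of $OPT$ has cost at most $c(r) \leq 1 - c(S_\ell)$ (as $r$ is the costliest element of $OPT$), so every element of $OPT \setminus S_\ell$ could still be added without violating the budget; since {\PG} nevertheless terminated, we must have $OPT \subseteq S_\ell$. Monotonicity then gives $f(S_{i(t)}+r) \geq f(S_\ell) \geq f(OPT) = 1$ and $g(t) \geq f(S_\ell) \geq 1$, so both terms inside the $\max$ are non-positive while $g'(t) = 0$. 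Adding this case makes your proof complete; everything else (the verification of the cost hypothesis of Lemma~\ref{l:r} via $c(S_{i(t)}) \leq t \leq 1-c(r)$, and the passage from $g(c(S_{i(t)}))$ to $g(t)$ by monotonicity) is correct and is the same route the paper takes.
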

\begin{proof}
The non-decreasing and continuity properties of $g$ follow from the definition of $g$ and the monotonicity of $f$ because {\GP} only adds elements to its solution over time. Moreover, because $g$ is defined as the piecewise linear extension of its values in the points of the set $\{1\} \cup \{c(S_i) \mid 0 \leq i \leq \ell\}$, its derivative is well defined in every point within the range $[0, 1]$ except for maybe the points in the last set. Therefore, to prove the observation it only remains to show the lower bound on $g'(t)$ given by Inequality~\eqref{e:g}.

Consider some point $t \in [0, 1 - c(r)]$ in which the derivative $g'(t)$ is defined. If $t < c(S_\ell)$, then by choosing $S^* = OPT$ and $R = {r}$ in Lemma~\ref{l:r}, we get
\[
	g'(t)
	=
	\Delta_g(c(S_{i(t)}))
	\geq
	\f{1-f(S_{i(t)}+r)}{1-c(r)}
	\geq
	\f{1-g(c(S_{i(t)}))-f(r)}{1-c(r)}
	=
	\f{1-f(S_{i(t)})-f(r)}{1-c(r)} \enspace . 
\]
Consider now the case $t \geq c(S_\ell)$. Since $t \leq 1 - c(r)$, in this case every element of $OPT$ can be added to $S_\ell$ without violating feasibility (recall that $r$ is the costliest element in $OPT$). However, we also know that {\GP} terminated after $\ell$ iterations, and the only way in which these two observations can co-exist is when $OPT \subseteq S_{\ell}$. Therefore, the right hand side of Inequality~\eqref{e:g} is non-positive in this case, and the left hand side is non-negative since $g$ is non-decreasing.
\end{proof}

We need the following lemma, which defines an auxiliary function $z(y)$ playing a central part in the rest of this section. Since the proof of this lemma is mainly technical, we defer it to Appendix~\ref{app:G_plus_missing}.

\begin{restatable}{lemma}{lz} \label{l:z}
For every $y \in [0, 1/2]$, there is a unique value $z(y) \in [y, 1/2]$ satisfying the equation 
\[
	\f{y}{z(y)} - 1 = \ln\left(\frac{z(y)}{1 - y}\right) \enspace .
\]
Moreover, $z'(y) = \frac{z(y) (1 - y - z(y))}{(1 - y)(z(y) + y)}$,
$z(y)$ is a non-decreasing function of $y$, and $z(y) \geq z(0) = 1/e$.
\end{restatable}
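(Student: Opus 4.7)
The plan is to treat $z(y)$ as implicitly defined by the single-variable equation
\[
F(y, z) \triangleq \frac{y}{z} - 1 - \ln z + \ln(1-y) = 0,
\]
and to extract every claimed property from elementary monotonicity arguments combined with the implicit function theorem. So the work splits cleanly into three parts: existence and uniqueness of the root, computation of $z'(y)$, and then sign inspection to deduce monotonicity and the lower bound on $z$.

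For existence and uniqueness, I would first fix $y \in (0, 1/2]$ and note that $\partial F/\partial z = -(y+z)/z^2 < 0$, so $F(y, \cdot)$ is strictly decreasing on $(0, \infty)$. Evaluating at the endpoints of the candidate interval $[y, 1/2]$, one gets $F(y, y) = -\ln(y/(1-y)) \geq 0$ (since $y \leq 1 - y$ on $[0, 1/2]$) and $F(y, 1/2) = \psi(y)$, where $\psi(y) \triangleq 2y - 1 + \ln(2(1-y))$. Checking that $\psi'(y) = 2 - 1/(1-y) \geq 0$ on $[0, 1/2]$ and $\psi(1/2) = 0$ forces $\psi(y) \leq 0$ throughout the interval, so the intermediate value theorem combined with strict monotonicity in $z$ delivers a unique root $z(y) \in [y, 1/2]$. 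The boundary case $y = 0$ is handled separately: the equation collapses to $-1 = \ln z$, giving $z(0) = 1/e$, which indeed lies in $[0, 1/2]$.

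Next, I would apply the implicit function theorem. Computing $\partial F/\partial y = 1/z - 1/(1-y) = (1 - y - z)/(z(1-y))$ and reusing $\partial F/\partial z = -(y+z)/z^2$, the formula $z'(y) = -(\partial F/\partial y)/(\partial F/\partial z)$ simplifies directly to the claimed expression $z(y)(1-y-z(y))/\bigl((1-y)(y + z(y))\bigr)$. For monotonicity, I would then observe that on $[0, 1/2]$ we have $z(y) \leq 1/2 \leq 1 - y$, so $1 - y - z(y) \geq 0$, and every other factor in the derivative is manifestly non-negative. Hence $z'(y) \geq 0$, so $z$ is non-decreasing and therefore $z(y) \geq z(0) = 1/e$.

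The main obstacle here is the uniform endpoint bound $F(y, 1/2) \leq 0$, i.e., verifying that $1/2$ really traps the root from the right for every $y \in [0, 1/2]$; without this, the solution could in principle leave the target interval and the subsequent sign analysis in the derivative formula would fail (since the sign of $1 - y - z(y)$ is exactly what makes monotonicity work). Once the calculus lemma on $\psi$ is in place, every remaining step is routine — an implicit differentiation and a sign check.
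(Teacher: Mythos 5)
Your proposal is correct and follows essentially the same route as the paper's proof: existence and uniqueness via monotonicity in $z$ plus endpoint comparison on $[y,1/2]$ (your calculus lemma on $\psi$ is equivalent to the paper's use of $\ln t \le t-1$ to show the same endpoint inequality), the implicit function theorem for differentiability and the formula for $z'(y)$, and the sign of $1-y-z(y)$ for monotonicity and the bound $z(y)\ge z(0)=1/e$. No gaps.
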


We are now ready to give the promised approximation guarantee for {\GP} for the case in which the element $r$ has a (relatively) small value.

\begin{lemma} \label{l:1-z}
If $f(r) \leq f(OPT)/2$, then {\GP} admits ratio $1 - z(f(r))$.
\end{lemma}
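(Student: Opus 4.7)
The plan is to argue by contradiction. Suppose the value $V$ returned by {\GP} satisfies $V < 1 - z$, where $y := f(r)$ and $z := z(y)$; I will derive $g(1) \geq 1 - z$ for a contradiction. The key observation underlying the second lower bound in \eqref{e:g} is that {\GP} explicitly considers every feasible set of the form $S_{i(t)} + r$. For $t \in [0, 1 - c(r)]$ the feasibility of $S_{i(t)} + r$ is automatic since $r$ is the costliest element of $OPT$, and even if $r \in S_{i(t)}$, the equality $f(S_{i(t)} + r) = f(S_{i(t)})$ together with monotonicity gives $f(S_{i(t)} + r) \leq f(S_\ell) \leq V$. Hence, on the entire range $[0, 1 - c(r)]$,
\[
	g'(t) \geq \max\left\{\frac{1 - y - g(t)}{1 - c(r)},\; \frac{z}{1 - c(r)}\right\}.
\]

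Next I would split $[0, 1 - c(r)]$ at the crossover point $T^*$, defined as the first time at which $g(T^*) = 1 - y - z$; below $T^*$ the first bound dominates, above it the second. On $[0, T^*]$, the inequality $g'(t)(1 - c(r)) \geq 1 - y - g(t)$ combined with $g(0) \geq 0$ integrates in standard fashion to $1 - y - g(t) \leq (1 - y)\, e^{-t/(1 - c(r))}$; evaluating at $t = T^*$ yields $T^* \leq (1 - c(r)) \ln((1 - y)/z)$. The assumption $y \leq 1/2$ and Lemma~\ref{l:z} give $z \geq 1/e \geq (1 - y)/e$, which guarantees $T^* \leq 1 - c(r)$ so the split really takes place within the range where \eqref{e:g} holds. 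On $[T^*, 1 - c(r)]$ I integrate the constant bound $g'(t) \geq z/(1 - c(r))$ to obtain
\[
	g(1 - c(r)) \geq (1 - y - z) + z\left(1 - \frac{T^*}{1 - c(r)}\right) = 1 - y - z\cdot\frac{T^*}{1 - c(r)} \geq 1 - y - z \ln((1 - y)/z).
\]

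The last step is algebraic: the defining equation of $z(y)$, namely $y/z - 1 = \ln(z/(1 - y))$, rewrites as $\ln((1 - y)/z) = 1 - y/z$, so the right-hand side above collapses to $1 - y - z(1 - y/z) = 1 - z$. Since $g$ is non-decreasing and $V \geq g(1)$, we conclude $V \geq g(1) \geq g(1 - c(r)) \geq 1 - z$, which contradicts $V < 1 - z$. The main conceptual hurdle is identifying the right crossover point $T^*$ so that the two integration phases glue together cleanly; the fact that the ``phase-two deficit'' $z \, T^*/(1 - c(r))$ is exactly absorbed by the $y$ term is precisely what the implicit equation defining $z(y)$ was engineered to achieve, and this is also what dictated the target ratio $1 - z(y)$ in the first place.
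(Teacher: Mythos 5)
Your proposal is correct and follows essentially the same route as the paper's proof: argue by contradiction, use the two lower bounds on $g'(t)$ from \eqref{e:g} (the second becoming $z/(1-c(r))$ under the contradiction hypothesis), integrate the exponential bound up to a switch point, integrate the constant bound afterwards, and let the defining equation of $z(y)$ collapse the result to $1-z$. The only (cosmetic) difference is that you define the switch point as the level-crossing time $T^*$ where $g = 1 - y - z$ and then bound it, whereas the paper fixes it explicitly as $t_s = -(1-c(r))\ln\bigl(z/(1-f(r))\bigr)$; your extra care in justifying $f(S_{i(t)}+r) < 1-z$ (including the case $r \in S_{i(t)}$) is a welcome addition rather than a deviation.
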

\begin{proof}
For brevity, we use the shorthand $z = z(f(r))$ in the calculations below. 
Assume towards a contradiction that the lemma is false. 
In particular, this implies $f(S_{i(t)}+r) < 1 - z$ for every $t \in [0,1-c(r)]$.
Plugging this observation into (\ref{e:g}) yields, for every value $t \in [0,1-c(r)]$ for which $g'(t)$ is defined,
\[
	g'(t) \geq \max\left\{\frac{1 - f(r) - g(t)}{1 - c(r)}, \frac{z}{1 - c(r)}\right\} \enspace .
\]

Let us choose $t_s = -[1 - c(r)] \cdot \ln\left(\frac{z}{1 - f(r)}\right)$, 
where the subscript $s$ in $t_s$ stands for ``switch'' because we switch 
between the two lower bounds on $g'(t)$ at this value. By the definition of $z$,
\[
\ln\left(\frac{z}{1 - f(r)}\right) = \frac{f(r)}{z} - 1 \in [2f(r) - 1, 0] \subseteq [-1, 0]\enspace,
\]
where the membership holds since $z \in [f(r), 1/2]$ by Lemma~\ref{l:z}. 
Therefore, $t_s \in [0, 1 - c(r)]$, and the function $g(t)$ obeys the differential inequality 
$g'(t) \geq \frac{1 - f(r) - g(t)}{1 - c(r)}$ in every point within the range $[0, t_s]$.\footnote{Technically, this inequality holds for every value $t$ in this range, except for maybe a finite number of points in which $g'(t)$ is not defined. However, since $g$ is continuous, we can safely ignore this technical issue.}
The solution for the last inequality is $g(t) \geq [1 - f(r)] \cdot (1 - e^{-t/(1 - c(r))})$, which in particular implies 
\[
g(t_s) \geq [1 - f(r)] \cdot \left(1 - e^{-t_s / (1 - c(r))}\right) =  [1 - f(r)] \cdot \left(1-\frac{z}{1 - f(r)}\right)=1-f(r)-z
\enspace.
\]

The above inequality was obtained using one lower bound on $g'(t)$, and we now need to use the other lower bound. Specifically,
since $\f{zt_s}{1-c(r)}=-z \ln\left(\frac{z}{1-f(r)}\right)=z-f(r)$ by the definition of $z$, 
and $g'(t) \geq \f{z}{1-c(r)}$ for every value $t \in [t_s, 1 - c(r)]$ for which $g'(t)$ is defined,
\[
	g(1-c(r)) = g(t_s) + \int_{t = t_s}^{1 - c(r)} g'(t) dt \geq g(t_s) + (1-c(r)-t_s) \cdot \frac{z}{1-c(r)} \geq 1-z \enspace . 
\]
Note that the last inequality contradicts our assumption that {\GP} does not admit a ratio of $1 - z$ because the monotonicity of $f$ guarantees that the value of the output of {\GP} is at least $g(1) \geq g(1 - c(r)) \geq 1-z$.
\end{proof}

Up to this point we have considered {\GP}. Starting from this point we consider {\OG~\GP}, which is the algorithm to which \autoref{t:2} refers. Recall that we assume that $OPT$ contains at least two elements, and let us denote by $w$ the element of $OPT$ maximizing $f(w)$ and by $r'$ the element of $OPT - w$ with the maximum cost. The following two lemmata analyze the guarantee of {\OG~\GP} in two cases defined by the value of $f(w)$. \inFull{Each one of these lemmata is based on a different guarantee that was proved above for {\GP}.}\inConference{Due to space constraints, we defer the proofs of these lemmata to Appendix~\ref{app:G_plus_missing}. However, we note here that both proofs consider the iteration of {\OG~\GP} in which the set $Y$ of the guess contains exactly the element $w$, and then use one of the previous guarantees to lower bound the value of the output produced by {\GP} at this iteration with respect to the objective function of the residual instance. Specifically, the proof of \autoref{lem:large_w} uses Theorem~\ref{t:1/2} for that purpose, while the proof of \autoref{l:ph} employs \autoref{l:1-z} (and the observation that $f(r) \leq f(w)$ by the defintion of $w$).}


\begin{restatable}{lemma}{lemLargeW} \label{lem:large_w}
{\OG~\GP} admits a ratio of $\f{1+f(w)}{2}$, and in particular a ratio of at least $2/3$ whenever $f(w) \geq 1/3$.
\end{restatable}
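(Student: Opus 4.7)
The plan is to focus exclusively on the single iteration of {\OG~\GP} in which the guess set is $Y = \{w\}$, the element of $OPT$ of largest value. In this iteration, Algorithm~\ref{alg:guess} invokes {\GP} on the residual instance defined by the ground set $V \setminus \{w\}$, the budget $B - c(w) = 1 - c(w)$, and the objective $h(S) = f(S \mid \{w\})$. Standard arguments show that $h$ inherits non-negativity, monotonicity and submodularity from $f$, so the residual instance is a legitimate instance of {\BSM} and Theorem~\ref{t:1/2} may be applied to it.

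Next I would observe that $OPT \setminus \{w\}$ is feasible for the residual instance, since its cost is $c(OPT) - c(w) \leq 1 - c(w)$, and that its $h$-value equals $h(OPT \setminus \{w\}) = f(OPT) - f(w) = 1 - f(w)$ by the definition of $h$ and the normalization $f(OPT) = 1$. Therefore Theorem~\ref{t:1/2} guarantees that the output $S'$ returned by {\GP} on the residual instance satisfies
\[
	h(S') \geq \tfrac{1}{2} \cdot h(OPT \setminus \{w\}) = \tfrac{1 - f(w)}{2} \enspace .
\]

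Finally, the set $Y \cup S' = \{w\} \cup S'$ is one of the candidate feasible solutions produced by the outer loop of Algorithm~\ref{alg:guess} (it is added to $\cS_1$ in Line~\ref{line:add_Y} during the iteration with $Y = \{w\}$), so the value of the solution returned by {\OG~\GP} is at least
\[
	f(\{w\} \cup S') = f(w) + h(S') \geq f(w) + \tfrac{1 - f(w)}{2} = \tfrac{1 + f(w)}{2} \enspace .
\]
Dividing by $f(OPT) = 1$ yields the claimed ratio $\tfrac{1+f(w)}{2}$, and substituting $f(w) \geq 1/3$ gives the ``in particular'' consequence of at least $2/3$.

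I do not foresee a real obstacle here: the argument is a clean reduction that plugs the residual instance into Theorem~\ref{t:1/2} and then adds back the value $f(w)$ of the guessed element. The only points that deserve a sentence of care are the verification that the residual instance satisfies the hypotheses of Theorem~\ref{t:1/2} (in particular that $h$ is non-negative monotone submodular, and that $OPT \setminus \{w\}$ is feasible under the reduced budget), and the explicit identification of $\{w\} \cup S'$ as one of the solutions enumerated by Algorithm~\ref{alg:guess}.
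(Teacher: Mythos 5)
Your proposal is correct and follows essentially the same argument as the paper: focus on the iteration guessing $Y=\{w\}$, apply Theorem~\ref{t:1/2} to the residual instance (where $OPT - w$ is feasible with residual value $1-f(w)$), and add back $f(w)$ to obtain the bound $\frac{1+f(w)}{2}$. The extra care you note about $h$ inheriting the required properties is a reasonable addition but not a point of divergence.
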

\newcommand{\proofLemaLargeW}{\begin{proof}
Consider the iteration of {\OG~\GP} in which the set $Y$ of the guess contains exactly the element $w$. 
We note that $OPT-w$ is a feasible solution for the residual instance considered by this iteration, and the value of this
solution with respect to the objective function of this residual instance is $f(OPT - w \mid \{w\}) = f(OPT) - f(w) = 1 - f(w)$.
Hence, when {\GP} is applied by {\OG~\GP} to this residual instance, it produces a solution $S$ obeying
\[
	f(S \mid \{w\}) \geq \f{1 - f(w)}{2}
\]
because {\GP} admits a ratio of $1/2$ by Theorem~\ref{t:1/2}.
Consequently, the output $S + w$ constructed by {\OG~\GP} in the above mentioned iteration has a value of at least
\[
	f(S+w)=f(w)+ f(S \mid \{w\}) \geq f(w) + \f{1 - f(w)}{2} = \f{1 + f(w)}{2}
	\enspace.
	\qedhere
\]
\end{proof}}\inFull{\proofLemaLargeW}

\begin{restatable}{lemma}{lPh} \label{l:ph}
If $f(w) \leq 1/3$, then {\OG~\GP} achieves an approximation ratio of at least $p(f(r))$, where $p(\cdot)$ is the function
\[
p(x)\triangleq x+(1 -x) \cdot \left(1-z\left(\frac{x}{1 - x}\right)\right) \enspace .
\]
\end{restatable}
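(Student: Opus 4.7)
The plan is to analyze the iteration of {\OG~\GP} whose guess is $Y = \{w\}$. In this iteration, {\GP} is executed on the residual instance with ground set $V - w$, budget $1 - c(w)$ and objective $h(S) = f(S \mid w)$. The set $OPT - w$ is feasible for this residual instance and has value $h(OPT - w) = f(OPT) - f(w) = 1 - f(w)$. Let $\rho$ denote an element of $OPT - w$ of maximum cost. If $r \neq w$, then $\rho$ can be taken to equal $r$, so $f(\rho) = f(r)$; and if $r = w$, then $\rho = r'$ and $f(\rho) \leq f(w) = f(r)$ since $w$ maximizes $f$ on $OPT$. In either case, submodularity yields $h(\rho) \leq f(\rho) \leq f(r) \leq f(w) \leq 1/3$.

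I would now invoke Lemma~\ref{l:1-z} on the (rescaled) residual instance. Dividing the residual objective by $1 - f(w)$ and the residual budget by $1 - c(w)$ (which affects neither {\GP}'s behavior nor its approximation ratio) rescales the residual optimum value and budget to $1$, and after this rescaling the value of the costliest element $\rho$ of the residual optimum is $h(\rho)/(1 - f(w)) \leq (1/3)/(2/3) = 1/2$. Hence Lemma~\ref{l:1-z} guarantees that the solution $S$ produced by {\GP} on the residual instance satisfies $h(S) \geq (1 - z(h(\rho)/(1 - f(w)))) \cdot (1 - f(w))$, so the candidate output $S + w$ of {\OG~\GP} in this iteration has value at least
\[
	f(S + w) = f(w) + h(S) \geq 1 - (1 - f(w)) \cdot z(h(\rho)/(1 - f(w))).
\]

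It remains to show that the right-hand side is at least $p(f(r)) = 1 - (1 - f(r)) \cdot z(f(r)/(1 - f(r)))$. Writing $t = 1 - f(w)$ and $s = 1 - f(r)$, our assumptions yield $2/3 \leq t \leq s \leq 1$, and combining $h(\rho) \leq f(r) = 1 - s$ with the monotonicity of $z$ reduces the task to proving $t \cdot z((1 - s)/t) \leq s \cdot z((1 - s)/s)$. The plan here is to regard the left-hand side as a function $F(t)$ of $t$ and verify $F'(t) \geq 0$ on the relevant range. Setting $y = (1 - s)/t$ and substituting the closed form for $z'$ supplied by Lemma~\ref{l:z}, the derivative should simplify as
\[
	F'(t) = z(y) - y \cdot z'(y) = \frac{z(y)^2}{(1 - y)(z(y) + y)} \geq 0,
\]
so $F$ is non-decreasing on $[t, s]$ and hence $F(t) \leq F(s)$, as required. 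The main obstacle in this plan is the algebraic simplification that produces the neat closed form for $F'(t)$; once one has the formula for $z'$ from Lemma~\ref{l:z}, however, this is a short and routine manipulation.
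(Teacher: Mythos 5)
Your proposal is correct and follows the same route as the paper: consider the iteration of {\OG~\GP} whose guess is $Y=\{w\}$, use $OPT-w$ as the benchmark feasible solution of value $1-f(w)$ for the residual instance, and invoke Lemma~\ref{l:1-z} there. The one genuine difference is the final step. The paper's own proof upper bounds the argument of $z$ by $f(w)/(1-f(w))$ and stops at the ratio $p(f(w))$, even though the lemma is stated with $p(f(r))$; since $p$ is not monotone on $[0,1/3]$, these are not interchangeable in general, although either bound suffices for Theorem~\ref{t:2} once Lemma~\ref{lem:P} minimizes $p$ over the whole range. You instead keep $f(r)$ in the argument of $z$ and add a monotonicity argument showing that $F(t)=t\,z((1-s)/t)$ is non-decreasing; your simplification $F'(t)=z(y)-y\,z'(y)=z(y)^2/\bigl((1-y)(z(y)+y)\bigr)\geq 0$ is a correct consequence of the formula for $z'$ in Lemma~\ref{l:z}, and $y\leq f(w)/(1-f(w))\leq 1/2$ throughout the relevant range, so that lemma indeed applies. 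This extra step is sound, and it delivers the bound exactly in the form $p(f(r))$ stated in the lemma, thereby repairing the small mismatch between the paper's statement and its proof.
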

\newcommand{\proofLPh}{\begin{proof}
Like in the proof of the previous lemma, we consider in this proof the iteration of {\OG~\GP} in which the set $Y$ of the guess contains exactly the element $w$.
Furthermore, like in that proof, we note again that $OPT-w$ is a feasible solution for the residual instance considered by this iteration whose value,
with respect to the objective function of this residual instance, is $f(OPT - w \mid \{w\}) = f(OPT) - f(w) = 1 - f(w)$.

Since the definition of $w$ and the condition of the lemma imply together the inequality $\f{f(r')}{f(OPT - w)} \leq \f{f(w)}{f(OPT) - f(w))} \leq \f{1}{2}$,
Lemma~\ref{l:1-z} guarantees that in the considered iteration of {\OG~\GP} the output set $S$ of {\GP} obeys
\[
	\frac{f(S \mid \{w\})}{1 - f(w)}
	=
	\frac{f(S \mid \{w\})}{f(OPT - w \mid \{w\})}
	\geq
	1 - z\left(\f{f(r')}{f(OPT-w)}\right)  \geq 1-z\left(\f{f(w)}{1-f(w)}\right) \enspace .
\]
The last inequality holds since $z(\cdot)$ is a non-decreasing function by \autoref{l:z}.
Using this inequality, we get that the set $S + w$, which is one of the candidates considered by {\OG~\GP} for its output set, has a value of at least
\[
	f(S+w)=f(w) + f(S \mid \{w\})
	\geq
	f(w) + (1 - f(w)) \cdot \left(1-z\left(\f{f(w)}{1-f(w)}\right)\right)
	=
	p(f(w))
	\enspace.
	\qedhere
\]
\end{proof}}\inFull{\proofLPh}


Due to the use of the (quite complex) function $p(x)$, it is difficult to understand the guarantee of the last lemma. The following lemma shows that, within the relevant range, $p(x)$ is always at least $\f{3-\ln 4}{4-\ln 4}$. Since the proof of this lemma is technical, we defer it \inConference{also }to Appendix~\ref{app:G_plus_missing}. Additionally, we note that Theorem~\ref{t:2} is obtained immediately by combining Lemma~\ref{lem:P} with Lemmata~\ref{lem:large_w} and~\ref{l:ph}.

\begin{restatable}{lemma}{lemP} \label{lem:P}
$\min\{p(x):x \in [0,1/3]\}=\f{3-\ln 4}{4-\ln 4}$.
\end{restatable}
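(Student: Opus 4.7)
The plan is to locate the minimum of $p$ on $[0,1/3]$ by analyzing its critical points via the reparametrization $y = x/(1-x)$. This map sends $[0,1/3]$ bijectively onto $[0,1/2]$, which is exactly the domain of $z$ in \autoref{l:z}, and satisfies $1-x = 1/(1+y)$. Differentiating $p(x) = x + (1-x)(1-z(y))$ via the chain rule with $dy/dx = 1/(1-x)^2$ gives
\[
p'(x) = z(y) - \frac{z'(y)}{1-x} = z(y) - (1+y)\,z'(y),
\]
so any interior critical point is characterized by $z'(y) = z(y)/(1+y)$.

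Next I would substitute the explicit formula $z'(y) = z(y)(1-y-z(y))/[(1-y)(z(y)+y)]$ from \autoref{l:z} into this condition. After dividing by $z(y) > 0$ and cross-multiplying, the algebra collapses to the clean identity $z(y) = (1-y)/2$. Plugging this back into the defining relation $y/z(y) - 1 = \ln(z(y)/(1-y))$ reduces it to the linear equation $(3y-1)/(1-y) = -\ln 2$, whose unique solution is
\[
y^\star = \frac{1-\ln 2}{3-\ln 2}, \qquad x^\star = \frac{y^\star}{1+y^\star} = \frac{1-\ln 2}{4-\ln 4} \in (0,1/3).
\]
Using $z(y^\star) = (1-y^\star)/2 = 1/(3-\ln 2)$ together with $1-x^\star = (3-\ln 2)/(4-\ln 4)$, a short computation yields $p(x^\star) = (3-\ln 4)/(4-\ln 4)$.

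To confirm this is the minimum I would check two things. First, $x^\star$ is the only interior critical point, because the function $(3y-1)/(1-y)$ has derivative $2/(1-y)^2 > 0$, so the equation $(3y-1)/(1-y) = -\ln 2$ admits a unique root in $[0,1/2]$. Second, both endpoint values exceed $p(x^\star) \approx 0.6174$: the defining equation for $z$ gives $z(0) = 1/e$ and $z(1/2) = 1/2$ directly, whence $p(0) = 1 - 1/e \approx 0.632$ and $p(1/3) = 1/3 + (2/3)\cdot(1/2) = 2/3$. Since $p$ is continuous on the compact interval $[0,1/3]$ and its minimum is attained either at an endpoint or at an interior critical point, the minimum is exactly $p(x^\star) = (3-\ln 4)/(4-\ln 4)$.

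The main obstacle I foresee is the algebraic simplification that reduces the critical-point condition to the elegant identity $z(y) = (1-y)/2$: it depends on a fortunate cancellation when the defining equation is combined with the formula for $z'(y)$, and without this clean form we would be stuck with an implicit equation far harder to resolve. Once $z(y) = (1-y)/2$ is in hand, the remaining steps are routine bookkeeping.
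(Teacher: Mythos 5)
Your proposal is correct and follows essentially the same route as the paper: the same substitution $y=x/(1-x)$, the same use of the formula for $z'(y)$ from Lemma~\ref{l:z} to collapse the critical-point condition to $z(y)=(1-y)/2$, and the same closed-form solution $y_0=(1-\ln 2)/(3-\ln 2)$ yielding the value $(3-\ln 4)/(4-\ln 4)$. The only (cosmetic) difference is in the last step: the paper deduces global minimality by noting that the sign of the derivative equals the sign of the increasing function $2z(y)+y-1$, so the function is unimodal, whereas you combine uniqueness of the interior critical point with an endpoint comparison; both are valid.
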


\section{Analyzing {\Gr} (Theorem~\ref{t:3})} \label{s:no_guess}

In this section we analyze the approximation ratio of {\Gr}, and prove Theorem~\ref{t:3}, which is restated below for convenience. The lower bound stated in the theorem is proved in Section~\ref{ssc:lower_bound_greedy}, and the upper bound stated is proved in Section~\ref{ssc:upper_bound_greedy}.
\NoGuessTheorem*


\subsection{Lower Bounding {\Gr}'s Approximation Ratio} \label{ssc:lower_bound_greedy}

In this section we prove the lower bound on the approximation ratio of {\Gr} stated in \autoref{t:3}. We use again the continuous version of the function $g$ introduced in Section~\ref{s:1_guess}. Furthermore, like in the last section, our proof is focused on showing that this function grows relatively quickly.

The next observation provides some lower bounds on the rate in which the discrete version of the function $g$ grows. Corollary~\ref{cor:greedy_continuous_derivative} later converts these bounds into guarantees for the continuous version of $g$\inConference{ (due to space constraints, we defer the proof showing that \autoref{cor:greedy_continuous_derivative} follows from \autoref{obs:greedy_discrete_derivative} to Appendix~\ref{app:no_guess_missing})}. To state the observation, let us define by $r$ and $r'$ the costliest and second costliest, respectively, elements in $OPT$ (recall that, by our assumption, $|OPT| \geq 2$). In other words, $c(r) \geq c(r') \geq \max_{v \in OPT \setminus\{r, r'\}} c(v)$.
\begin{observation} \label{obs:greedy_discrete_derivative}
For every $1 \leq i \leq \ell$,
\begin{align}
	\Delta_g(c(S_i)) \geq{} & 1 - g(c(S_i)) &\qquad& \text{if $c(S_i) \leq 1-c(r)$} \enspace,           \label{e:1} \\
	\Delta_g(c(S_i)) \geq{} & \f{1-g(c(S_i))-f(r)}{1-c(r)} && \text{if $c(S_i) \leq 1-c(r')$} \enspace,           \label{e:2} \\
	\Delta_g(c(S_i)) \geq{} & \f{1-g(c(S_i))-f(r)-f(r')}{1-c(r)-c(r')} && \text{if $c(S_i) \leq c(r)+c(r') < 1$} \enspace. \label{e:3}
\end{align}
\end{observation}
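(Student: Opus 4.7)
The plan is to deduce all three inequalities as direct applications of \autoref{l:r} with $S^* = OPT$. Recall that $f(OPT) = 1$ by normalization and $c(OPT) \leq 1$ by feasibility, so the precondition $c(S^*) \leq 1$ of \autoref{l:r} is always satisfied. For each of the three inequalities I would invoke the lemma with a different choice of $R \subs OPT$, namely $R = \varnothing$ for \eqref{e:1}, $R = \{r\}$ for \eqref{e:2}, and $R = \{r, r'\}$ for \eqref{e:3}. In each case the conclusion of the lemma reads $(1 - c(R)) \cdot \Delta_g(c(S_i)) \geq f(OPT) - g(c(S_i)) - \sum_{s \in R} f(s)$, which, upon dividing by $1 - c(R)$ and plugging in $f(OPT) = 1$, is exactly the claimed bound. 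Hence all that remains is to verify, in each case, that the applicability condition $1 - c(S_i) \geq \max_{s \in OPT \sem (R \cup S_i)} c(s)$ of \autoref{l:r} is implied by the case hypothesis.

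For \eqref{e:1} with $R = \varnothing$, any element of $OPT$ has cost at most $c(r)$ by the choice of $r$ as the costliest element of $OPT$, so the hypothesis $c(S_i) \leq 1 - c(r)$ directly gives applicability. Analogously, for \eqref{e:2} with $R = \{r\}$, every element of $OPT \sem \{r\}$ has cost at most $c(r')$ by the choice of $r'$ as the second-costliest element, and thus the hypothesis $c(S_i) \leq 1 - c(r')$ again suffices.

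The only mildly subtle verification concerns \eqref{e:3} with $R = \{r, r'\}$, since the case hypothesis $c(S_i) \leq c(r)+c(r')$ does not by itself bound the cost of an individual third-costliest element. The key observation here is a feasibility argument: any $s \in OPT \sem \{r, r'\}$ satisfies
\[
c(s) \leq \sum_{t \in OPT \sem \{r, r'\}} c(t) = c(OPT) - c(r) - c(r') \leq 1 - c(r) - c(r'),
\]
so combining with $c(S_i) \leq c(r)+c(r')$ yields $1 - c(S_i) \geq 1 - c(r) - c(r') \geq c(s)$, as required. The strict inequality $c(r)+c(r') < 1$ is not needed for applicability of \autoref{l:r}; it merely ensures that the denominator $1 - c(r) - c(r')$ in \eqref{e:3} is positive so that the division is valid. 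Beyond this bookkeeping no step presents a genuine obstacle, since each bound is essentially a one-line consequence of \autoref{l:r}.
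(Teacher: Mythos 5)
Your proposal is correct and follows essentially the same route as the paper: all three bounds are obtained from Lemma~\ref{l:r} with $S^* = OPT$ and $R = \varnothing$, $\{r\}$, $\{r,r'\}$ respectively, and the applicability condition for the third case is verified via exactly the same chain $1 - c(S_i) \geq 1 - c(r) - c(r') \geq c(OPT \setminus \{r,r'\}) \geq \max_{s \in OPT \setminus \{r,r'\}} c(s)$ that the paper uses. Your explicit verification of the condition for the first two cases (which the paper dismisses as immediate) is a welcome extra detail and is correct.
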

\begin{proof}
Inequalities~\eqref{e:1} and~\eqref{e:2} follow immediately from Lemma~\ref{l:r} by setting $S^* = OPT$ and $R = \varnothing$ or $R = \{r\}$, respectively. Inequality~\eqref{e:3} also follows from Lemma~\ref{l:r} by setting $S^* = OPT$ and $R = \{r, r'\}$ because the inequality $c(S_i) \leq c(r)+c(r')$ implies
\[
	\textstyle
	1 - c(S_i)
	\geq
	1 - c(r) - c(r')
	\geq
	c(OPT \setminus R)
	\geq
	\max_{v \in OPT \setminus R} c(v)
	\geq
	\max_{v \in OPT \setminus (R \cup S_i)} c(v)
	\enspace.
	\qedhere
\]
\end{proof}

\begin{restatable}{corollary}{corGreedyContinuousDerviative} \label{cor:greedy_continuous_derivative}
For every $t \in [0, 1]$ for which the derivative $g'(t)$ exists,
\[
	g'(t)
	\geq
	\max
	\begin{cases}
		1 - g(t) & \text{if $t \leq 1-c(r)$} \enspace, \\
		\f{1-g(t)-f(r)}{1-c(r)} & \text{if $t \leq 1-c(r')$} \enspace, \\
		\f{1-g(t)-f(r)-f(r')}{1-c(r)-c(r')} & \text{if $t \leq c(r)+c(r') < 1$} \enspace,\\
		0 & \text{always} \enspace,
	\end{cases}
\]
where one should understand the expression on the right hand side of the inequality as the maximum over the expressions corresponding to all the lines whose conditions hold.
\end{restatable}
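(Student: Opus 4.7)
The plan is to transfer the bounds in Observation~\ref{obs:greedy_discrete_derivative} from the discrete knots $c(S_i)$ to arbitrary $t \in [0,1]$ at which the piecewise-linear extension of $g$ is differentiable. Since $g$ is non-decreasing by construction, the bound $g'(t) \geq 0$ is immediate. For each of the other three bounds I would split into two regimes: an \emph{active} regime $t < c(S_\ell)$, in which $g'(t)$ agrees with $\Delta_g(c(S_{i(t)}))$ for some index $i(t) \leq \ell-1$, and a \emph{terminal} regime $t \in (c(S_\ell), 1)$, in which $g'(t) = 0$.

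In the active regime, note that $c(S_{i(t)}) \leq t$, so any one of the hypotheses $t \leq 1 - c(r)$, $t \leq 1 - c(r')$, or $t \leq c(r) + c(r')$ transfers verbatim to $c(S_{i(t)})$, licensing the use of the corresponding inequality \eqref{e:1}, \eqref{e:2}, or \eqref{e:3} from the observation. The resulting bounds involve $g(c(S_{i(t)}))$ rather than $g(t)$, but monotonicity of $g$ gives $g(c(S_{i(t)})) \leq g(t)$; combined with the positivity of the denominators $1 - c(r)$ and $1 - c(r) - c(r')$ (which follow from the stated hypotheses), the weakening $g(c(S_{i(t)})) \mapsto g(t)$ preserves the inequality direction.

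In the terminal regime $g'(t) = 0$, so the task is to check that the right-hand side of each bound is non-positive whenever its guarding hypothesis is satisfied. The key observation is that {\PG} terminated at iteration $\ell$, hence no element $v \notin S_\ell$ with $c(v) \leq 1 - c(S_\ell)$ can exist. Thus if $c(S_\ell) \leq t \leq 1 - c(r)$, every element of $OPT$ fits into the residual budget, so $OPT \subseteq S_\ell$ and $g(t) = f(S_\ell) \geq 1$; if $c(S_\ell) \leq t \leq 1 - c(r')$, every element of $OPT - r$ fits, giving $OPT - r \subseteq S_\ell$ and hence $f(S_\ell) \geq f(OPT) - f(r) = 1 - f(r)$ by submodularity and monotonicity.

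The main subtlety is the third bound in the terminal regime: an individual element $v \in OPT \setminus \{r, r'\}$ need not a priori satisfy $c(v) \leq 1 - c(S_\ell)$ just from $c(S_\ell) \leq c(r) + c(r')$. To bridge this gap I would invoke the \emph{feasibility} of $OPT$, which gives $c(OPT \setminus \{r, r'\}) \leq 1 - c(r) - c(r')$, and therefore $c(v) \leq 1 - c(r) - c(r') \leq 1 - c(S_\ell)$ for every such $v$. By termination this forces $OPT \setminus \{r, r'\} \subseteq S_\ell$, yielding $f(S_\ell) \geq 1 - f(r) - f(r')$ as required. This interplay between the single-element termination criterion and the global cost bound on the ``small'' elements of $OPT$ is the only step of the proof that is not a mechanical translation of Observation~\ref{obs:greedy_discrete_derivative}.
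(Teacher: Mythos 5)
Your proposal is correct and follows essentially the same route as the paper's proof: for $t$ below $c(S_\ell)$ it transfers Observation~\ref{obs:greedy_discrete_derivative} via $g'(t)=\Delta_g(c(S_{i(t)}))$ and the monotonicity of $g$, and for $t$ beyond $c(S_\ell)$ it uses the termination condition of {\PG} together with $c(OPT\setminus\{r,r'\})\leq 1-c(r)-c(r')$ to show the right-hand side is non-positive. The paper packages the terminal regime into the single bound $g(t)\geq 1-f(\{v\in OPT\mid c(v)>1-t\})$ and then specializes, whereas you treat the three cases separately, but the content is identical.
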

\newcommand{\proofCorGreedyContinuousDerivative}{\begin{proof}
Consider first a value $t \in [0, c(S_\ell)]$ for which the derivative $g'(t)$ exists. For such a value the corollary follows from Observation~\ref{obs:greedy_discrete_derivative} because of the following three properties. First, as mentioned in Section~\ref{s:1_guess}, we have $g'(t) = \Delta_g(S_{i(t)})$; second, $g(c(S_{i(t)})) \leq g(t)$ because $g$ is a non-decreasing function and $c(S_{i(t)}) \leq t$; and finally, $g(t)$ is a non-decreasing function.

It remains to consider the case of $t > c(S_\ell)$. In this case the fact that {\Gr} terminated upon reaching the solution $S_\ell = S_{i(t)}$ implies that $S_\ell$ includes every element of $OPT$ of size at most $1 - t$. Therefore, by the monotonicity and submodularity of $f$,
\begin{align*}
	g(t)
	\geq
	f(\{v \in OPT \mid c(v) \leq 1 - t\})
	\geq{} &
	f(OPT) - f(\{v \in OPT \mid c(v) > 1- t\})\\
	={} &
	1 - f(\{v \in OPT \mid c(v) > 1- t\})
	\enspace.
\end{align*}
One can verify that the last inequality implies that the right hand side of the inequality in the lemma is always $0$ for $t > c(S_\ell)$ (note that every element $v \in OPT \setminus \{r, r'\}$ obeys $c(v) \leq c(OPT \setminus \{r, r'\}) \leq 1 - c(r) - c(r')$). Additionally, the left hand side of this inequality is $0$ by the definition of the continuous version of $g$, and thus, the inequality holds.
\end{proof}}
\inFull{\proofCorGreedyContinuousDerivative}

To get a guarantee for {\sc Greedy}, we need to get a lower bound on $f(S_\ell)=g(1)$. 
Theoretically, such a lower bound can be analytically proved 
by combining the lower bounds on $g'(t)$ proved by the last corollary.
However, to avoid tedious and non-insightful calculations, 
we use a computer to derive our lower bound.
\inFull{\inConference{
\section{Completing the Lower Bound for \texorpdfstring{{\Gr}}{Greedy}'s Approximation Ratio} \label{app:no_guess_missing}
In Section~\ref{ssc:lower_bound_greedy} we began the proof of the lower bound on the approximation guarantee of {\Gr} given in \autoref{t:3}. In this section we complete this proof. We begin with the proof of \autoref{cor:greedy_continuous_derivative}.

\corGreedyContinuousDerviative*
\proofCorGreedyContinuousDerivative

As explain in Section~\ref{ssc:lower_bound_greedy}, we would like to combine the lower bounds on $g'(t)$ proved by the last corollary to get a lower bound on $f(S_\ell)=g(1)$ (and thus, also on the guarantee of {\sc Greedy}), and our plan is to do that using a computer-based proof.
}%
Towards this goal, assume that we want to prove that the approximation ratio of {\Gr} 
is at least some target value $\rho \in (0, 1/2)$.
If $f(r) \geq \rho$ or $f(r') \geq \rho$, 
then this is trivial since {\sc Greedy} outputs a set that is better than any feasible singleton set 
(both $\{r\}$ and $\{r'\}$ are feasible solutions since $r,r' \in OPT$).
Therefore, the interesting case is when $f(r) < \rho$ and $f(r') < \rho$.

The last two inequalities mean that our computer based proof does not need to take into account the specific values of the elements $r$ and $r'$. However, we do not have such a nice property for the costs of these elements. The na\"{i}ve approach for handling this hurdle is to consider all the possible values for these costs, but this cannot be done since there are infinitely many such values. As an alternative, we develop below a way to lower bound $g(1)$ using a computer program given approximate values for $c(r)$ and $c(r')$. We later use the program to lower bound $g(1)$ for a large enough set of possible estimates so that every possible choice of real costs is close enough to one of the considered estimates, which makes the analysis apply to it.

We denote the estimates of $c(r)$ and $c(r')$ by $\tilde{c}(r)$ and $\tilde{c}(r')$, respectively. Lemma~\ref{lem:recursive_lower_bound} shows that, if these estimates are close enough to $c(r)$ and $c(r')$, then one can lower bound $g(1)$ using a computer program that (roughly) calculates the recursive series $m(i)$ defined as follows. Given any value $\delta \in (0, 1)$, and assuming $0 \leq \tilde{c}(r) + \tilde{c}(r') < 1$, we define $m(0) = 0$, and for every integer $1 \leq i$ we define
\begin{equation} \label{eq:recursive_series}
m(i)=
\max\begin{cases}
		\frac{m(i - 1) + \delta}{1 + \delta} & \text{if $i \leq \delta^{-1}(1 - \tilde{c}(r)) - 1$} \enspace,\\
		\frac{(1 - \tilde{c}(r)) \cdot m(i - 1) + \delta(1 - \rho)}{1 - \tilde{c}(r) + \delta} & \text{if $i \leq \delta^{-1}(1 - \tilde{c}(r')) - 1$} \enspace,\\
		\frac{(1 - \tilde{c}(r) - \tilde{c}(r')) \cdot m(i - 1) + \delta(1 - 2\rho)}{1 - \tilde{c}(r) - \tilde{c}(r') + \delta} & \text{if $i \leq \delta^{-1}(\tilde{c}(r) + \tilde{c}(r'))$} \enspace,\\
		m(i - 1) & \text{always} \enspace.
	\end{cases}
\end{equation}

\begin{lemma} \label{lem:recursive_lower_bound}
Given the assumptions,
\begin{align*}
	&\text{(i)} &&\mspace{-36mu}\max\{f(r), f(r')\} < \rho \enspace, &\qquad &
	\text{(ii)} &&\mspace{-36mu}\text{$\delta^{-1}$ is an integer} \enspace,\\
	&\text{(iii)} &&\mspace{-36mu}\tilde{c}(r) \in [c(r) - \delta, c(r)] \enspace, &&
	\text{(iv)} &&\mspace{-36mu}\tilde{c}(r') \in [c(r') - \delta, c(r')] \enspace,\\
	&\text{(v)}&& \mspace{-36mu} \tilde{c}(r) + \tilde{c}(r') < 1 \enspace,
\end{align*}
we have $g(i\delta) \geq m(i)$ for every $0 \leq i \leq \delta^{-1}$; and therefore, $g(1) = g(\delta^{-1} \cdot \delta) \geq m(\delta^{-1})$.
\end{lemma}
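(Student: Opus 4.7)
The plan is to prove $g(i\delta) \geq m(i)$ by induction on $i$, which immediately yields $g(1) \geq m(\delta^{-1})$ via assumption~(ii). The base case $i = 0$ holds since $g(0) = f(\varnothing) \geq 0 = m(0)$ by the non-negativity of $f$.

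For the inductive step, assuming $g((i-1)\delta) \geq m(i-1)$, the goal is to show that $g(i\delta)$ dominates each applicable branch in the definition~\eqref{eq:recursive_series} of $m(i)$; taking the maximum then gives $g(i\delta) \geq m(i)$. The ``always'' branch is trivial from the monotonicity of $g$. For each of the three non-trivial branches the recipe is identical: \textbf{(a)} verify that the range condition on $i$, combined with assumptions~(iii) and~(iv), places the whole interval $[(i-1)\delta, i\delta]$ inside the $t$-range where the matching line of \autoref{cor:greedy_continuous_derivative} applies (for instance, $i \leq \delta^{-1}(1-\tilde{c}(r)) - 1$ together with $\tilde{c}(r) \geq c(r) - \delta$ yields $i\delta \leq 1 - c(r)$; for Branch~3, the strict inequality $c(r) + c(r') < 1$ required by \autoref{cor:greedy_continuous_derivative} is forced by assumption~(i) together with $f(OPT) = 1$ and $\rho < 1/2$, since $c(r) + c(r') = 1$ would force $OPT = \{r, r'\}$ and hence $f(OPT) \leq f(r) + f(r') < 2\rho < 1$); \textbf{(b)} integrate the matching pointwise inequality across $[(i-1)\delta, i\delta]$, using the fact that $g$ is non-decreasing to lower-bound the integrand by its value at $t = i\delta$, producing a backward-Euler style inequality of the form $g(i\delta) \geq \phi(1-c(r), 1-f(r), g((i-1)\delta))$ with $\phi(a, b, x) = (ax + \delta b)/(a + \delta)$ in the case of Branch~2; and \textbf{(c)} replace the true parameters $c(r), f(r)$ (and $c(r'), f(r')$ for Branch~3) by the proxies appearing in~\eqref{eq:recursive_series} via monotonicity properties of $\phi$. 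Branch~1 admits a particularly clean version of step~(c) because its integrated form, $g(i\delta) \geq (g((i-1)\delta) + \delta)/(1 + \delta)$, involves neither $c(r)$ nor $f(r)$.

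The main obstacle is step~(c) for Branches~2 and~3. The function $\phi$ is non-decreasing in $b$ and in $x$, so using $f(r) < \rho$ to replace $1 - f(r)$ by $1 - \rho$, and using the inductive hypothesis to replace $g((i-1)\delta)$ by $m(i-1)$, both move in the intended direction. The monotonicity of $\phi$ in $a$ however flips sign: $\phi$ is non-increasing in $a$ only when $x \leq b$, so replacing $1 - c(r)$ by the larger $1 - \tilde{c}(r)$ is helpful only when $g((i-1)\delta) \leq 1 - f(r)$. I plan to resolve this via a short case split. When $g((i-1)\delta) \leq 1 - f(r)$, all three monotonicity moves point the right way and chain to give the desired Branch~2 bound. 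When $g((i-1)\delta) > 1 - f(r)$ we also have $g((i-1)\delta) > 1 - \rho$ and $g((i-1)\delta) \geq m(i-1)$, so $g(i\delta) \geq g((i-1)\delta)$ dominates both endpoints of the convex combination of $m(i-1)$ and $1-\rho$ that defines the Branch~2 value, and hence dominates the combination itself. Branch~3 follows the identical template with $f(r) + f(r')$ in place of $f(r)$, $2\rho$ in place of $\rho$, and $c(r) + c(r')$, $\tilde{c}(r) + \tilde{c}(r')$ in place of $c(r)$, $\tilde{c}(r)$.
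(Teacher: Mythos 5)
Your proposal is correct and follows essentially the same route as the paper's proof: induction on $i$, the pointwise bounds of \autoref{cor:greedy_continuous_derivative} integrated over $[(i-1)\delta, i\delta]$ into a backward-Euler-type inequality, and then replacement of the true parameters $c(r), c(r'), f(r), f(r')$ and of $g((i-1)\delta)$ by $\tilde{c}(r), \tilde{c}(r'), \rho, 2\rho$ and $m(i-1)$. Your explicit case split on whether $g((i-1)\delta) \leq 1 - f(r)$ when enlarging the denominator is a slightly more careful rendering of a step the paper handles implicitly through the ``$0$ always'' line of the max, but it is the same argument.
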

\begin{proof}
We prove the lemma by induction on $i$. For $i = 0$ the lemma holds because the non-negativity of $f$ implies 
$g(0) = f(S_0) = f(\varnothing) \geq 0 = m(0)$. Assume now that the lemma holds for $i - 1$ (for some integer $1 \leq i \leq \delta^{-1}$), 
and let us prove it for $i$. By plugging the assumptions of the lemma and the monotonicity of $g$ into the guarantee of Corollary~\ref{cor:greedy_continuous_derivative}, we get for every $\tau \in [\delta(i - 1), \delta i]$ for which $g'(\tau)$ is defined that
\begin{align*}
	g'(\tau)
	\geq{} &
	\max
	\begin{cases}
		1 - g(\tau) & \text{if $\tau \leq 1-c(r)$} \enspace, \\
		\f{1-g(\tau)-f(r)}{1-c(r)} & \text{if $\tau \leq 1-c(r')$} \enspace, \\
		\f{1-g(\tau)-f(r)-f(r')}{1-c(r)-c(r')} & \text{if $\tau \leq c(r)+c(r') < 1$} \enspace,\\
		0 & \text{always}
	\end{cases}\\
	\geq{} &
	\max
	\begin{cases}
		1 - g(\delta i) & \text{if $\tau \leq 1-\tilde{c}(r) - \delta$} \enspace, \\
		\f{1-g(\delta i) - \rho}{1-\tilde{c}(r)} & \text{if $\tau \leq 1-{c}(r') - \delta$} \enspace, \\
		\f{1-g(\delta i) - 2\rho}{1-\tilde{c}(r)-\tilde{c}(r')} & \text{if $\tau \leq \tilde{c}(r)+\tilde{c}(r')$} \enspace,\\
		0 & \text{always} \enspace.
	\end{cases}
\end{align*}
Note that in the second inequality we have dropped the condition $c(r)+c(r') < 1$. To see that this drop is of no consequence, we need to show that the assumptions of the lemma do not allow the equality $c(r) + c(r') = 1$ to hold. If this equality holds, then we must have $OPT = \{r, r'\}$. However, since $\rho \in (0, 1/2)$, this implies $f(r) + f(r') \geq f(OPT) = 1 > 2\rho$, which violates assumption (i) of the lemma.

Let us denote by $\Delta$ the difference $g(\delta i) - g(\delta(i - 1))$. Since $g$ is a continuous function whose derivative is defined for almost every $t \in [\delta(i - 1), \delta i]$, the last inequality implies
\begin{align*}
	\Delta
	\geq{} &
	\delta \cdot \max
	\begin{cases}
		1 - g(\delta i) & \text{if $\delta i \leq 1-\tilde{c}(r) - \delta$} \enspace, \\
		\f{1-g(\delta i) - \rho}{1-\tilde{c}(r)} & \text{if $\delta i \leq 1-{c}(r') - \delta$} \enspace, \\
		\f{1-g(\delta i) - 2\rho}{1-\tilde{c}(r)-\tilde{c}(r')} & \text{if $\delta i \leq \tilde{c}(r)+\tilde{c}(r')$} \enspace,\\
		0 & \text{always}
	\end{cases}\\
	={} &
	\delta \cdot \max
	\begin{cases}
		1 - g(\delta (i - 1)) - \Delta & \text{if $\delta i \leq 1-\tilde{c}(r) - \delta$} \enspace, \\
		\f{1-g(\delta (i - 1)) - \Delta - \rho}{1-\tilde{c}(r)} & \text{if $\delta i \leq 1-{c}(r') - \delta$} \enspace, \\
		\f{1-g(\delta (i - 1)) - \Delta - 2\rho}{1-\tilde{c}(r)-\tilde{c}(r')} & \text{if $\delta i \leq \tilde{c}(r)+\tilde{c}(r')$} \enspace,\\
		0 & \text{always} \enspace.
	\end{cases}
\end{align*}

Observe now that the last inequality is equivalent to $4$ inequalities, one corresponding to each line of the $\max$ operation. Isolating $\Delta$ in each one of theses inequalities yields
\[
	\Delta
	\geq
	\delta \cdot \max
	\begin{cases}
		\frac{1 - g(\delta (i - 1))}{1 + \delta} & \text{if $\delta i \leq 1-\tilde{c}(r) - \delta$} \enspace, \\
		\f{1-g(\delta (i - 1)) - \rho}{1-\tilde{c}(r) + \delta} & \text{if $\delta i \leq 1-{c}(r') - \delta$} \enspace, \\
		\f{1-g(\delta (i - 1)) - 2\rho}{1-\tilde{c}(r)-\tilde{c}(r') + \delta} & \text{if $\delta i \leq \tilde{c}(r)+\tilde{c}(r')$} \enspace,\\
		0 & \text{always} \enspace.
	\end{cases}
\]
To complete the proof of the lemma, it remains to observe that by adding $g(\delta(i - 1))$ to both sides of the last inequality we get
{\allowdisplaybreaks
\begin{align*}
	g(\delta i)
	\geq{} &
	\max
	\begin{cases}
		\frac{g(\delta (i - 1)) + \delta}{1 + \delta} & \text{if $\delta i \leq 1-\tilde{c}(r) - \delta$} \enspace, \\
		\f{(1 - \tilde{c}(r)) \cdot g(\delta (i - 1)) + \delta(1 - \rho)}{1-\tilde{c}(r) + \delta} & \text{if $\delta i \leq 1-{c}(r') - \delta$} \enspace, \\
		\f{(1 - \tilde{c}(r) - \tilde{c}(r')) \cdot g(\delta (i - 1)) + \delta(1 - 2\rho)}{1-\tilde{c}(r)-\tilde{c}(r') + \delta} & \text{if $\delta i \leq \tilde{c}(r)+\tilde{c}(r')$} \enspace,\\
		0 & \text{always}
	\end{cases}\\
	\geq{} &
	\begin{cases}
		\frac{m(i - 1) + \delta}{1 + \delta} & \text{if $\delta i \leq 1-\tilde{c}(r) - \delta$} \enspace, \\
		\f{(1 - \tilde{c}(r)) \cdot m(i - 1) + \delta(1 - \rho)}{1-\tilde{c}(r) + \delta} & \text{if $\delta i \leq 1-{c}(r') - \delta$} \enspace, \\
		\f{(1 - \tilde{c}(r) - \tilde{c}(r')) \cdot m(i - 1) + \delta(1 - 2\rho)}{1-\tilde{c}(r)-\tilde{c}(r') + \delta} & \text{if $\delta i \leq \tilde{c}(r)+\tilde{c}(r')$} \enspace,\\
		0 & \text{always}
	\end{cases}
	=
	m(i)
	\enspace,
\end{align*}%
}
where the inequality holds by the induction hypothesis.
\end{proof}

\begin{corollary} \label{cor:guarantee_with_good_estimates}
Given any value $\rho$, any value $\delta \in (0, 1)$ such that $\delta^{-1}$ is an integer and values for $\tilde{c}(r)$ and $\tilde{c}(r')$ that obey conditions (iii)-(v) of Lemma~\ref{lem:recursive_lower_bound}, the approximation ratio of {\Gr} is at least $\min\{m(\delta^{-1}),\rho\}$.
\end{corollary}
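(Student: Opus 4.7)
The plan is a short case analysis based on the magnitude of the singleton values $f(r)$ and $f(r')$, reducing the statement to Lemma~\ref{lem:recursive_lower_bound} in one case and to the singleton fallback of {\Gr} in the other. Recall that the output of {\Gr} is at least as large as (i) the {\PG}-output $f(S_\ell)=g(1)$, and (ii) the value $\max_{v\in V} f(\{v\})$ of the best feasible singleton; since $r,r'\in OPT$ are feasible singletons, this last quantity is at least $\max\{f(r),f(r')\}$.

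First I would dispose of the easy case in which $\max\{f(r),f(r')\}\geq\rho$. Here the singleton branch of {\Gr} already returns a solution of value $\geq\rho$, so the approximation ratio (with our normalization $f(OPT)=1$) is at least $\rho\geq\min\{m(\delta^{-1}),\rho\}$, regardless of what $m(\delta^{-1})$ turns out to be.

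Next I would treat the complementary case $\max\{f(r),f(r')\}<\rho$. The point is that this is exactly condition (i) of Lemma~\ref{lem:recursive_lower_bound}, while conditions (ii)--(v) are granted to us by the hypotheses of the corollary. Hence Lemma~\ref{lem:recursive_lower_bound} applies and yields $g(1)\geq m(\delta^{-1})$. Using $f(S_\ell)=g(1)$ (from the definition of the continuous extension of $g$ together with the normalization $f(OPT)=1$), we conclude that the {\PG}-branch alone gives {\Gr} a solution of value at least $m(\delta^{-1})\geq\min\{m(\delta^{-1}),\rho\}$.

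Combining the two cases produces the claimed bound $\min\{m(\delta^{-1}),\rho\}$ on the approximation ratio of {\Gr}. There is essentially no obstacle here beyond verifying that the conditions of Lemma~\ref{lem:recursive_lower_bound} are indeed met in the second case; condition (v), $\tilde c(r)+\tilde c(r')<1$, is the only one that is not literally assumption (i), and it appears in the corollary's hypotheses. Thus the corollary follows as an immediate consequence of Lemma~\ref{lem:recursive_lower_bound} and the singleton-inclusion property of {\Gr}.
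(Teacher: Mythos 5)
Your proposal is correct and follows essentially the same route as the paper's own proof: the identical case split on whether $\max\{f(r),f(r')\}\geq\rho$, with the singleton branch of {\Gr} handling the first case and Lemma~\ref{lem:recursive_lower_bound} applied to $f(S_\ell)=g(1)$ handling the second. No gaps.
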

\begin{proof}
We consider two cases in this proof. The first case is when condition (i) of Lemma~\ref{lem:recursive_lower_bound} is violated, i.e., $\max\{f(r), f(r')\} \geq \rho$. Recall now that $\{r\}$ and $\{r'\}$ are both feasible solutions because $r, r' \in OPT$, and {\Gr} considers all the feasible singletons as possible solutions. These facts imply together that in this case the value of the output of {\Gr} is at least $\max\{f(r), f(r')\} \geq \rho$.

Consider now the case in which condition (i) of Lemma~\ref{lem:recursive_lower_bound} holds. Since the condition (ii) of this lemma holds by the properties of $\delta$ and we assumed that the other conditions of the lemma hold as well, we get by Lemma~\ref{lem:recursive_lower_bound} that
\[
	f(S_\ell)
	=
	g(1)
	\geq
	m(\delta^{-1})
	\enspace.
\]
As $S_\ell$ is one of the candidate solutions considered by {\Gr}, the value of the output of {\Gr} in this case is at least $m(\delta^{-1})$.
\end{proof}

If we want to use Corollary~\ref{cor:guarantee_with_good_estimates} to lower bound the approximation ratio of {\Gr}, then we need to choose values for the four parameters $\rho$, $\delta$, $\tilde{c}(r)$ and $\tilde{c}(r')$ that obey all the requirements of the corollary. It is not difficult to do so for $\rho$ and $\delta$, but there is no possible assignment of values for $\tilde{c}(r)$ and $\tilde{c}(r')$ that will be good for all instances of {\BSM} (because the conditions of \autoref{lem:recursive_lower_bound} require $\tilde{c}(r)$ and $\tilde{c}(r')$ to be close to $c(r)$ and $c(r')$, respectively). As explained above, we solve this issue by considering a set $\cC(\delta)$ of possible pairs of values for $\tilde{c}(r)$ and $\tilde{c}(r')$ that is large enough so that it always includes at least one pair of good values, and then applying \autoref{cor:guarantee_with_good_estimates} independently to every pair from $\cC(\delta)$. The next proposition formally states the guarantee that we get in this way. To state this proposition, we define the set $\cC(\delta)$ as follows.
\[
	\cC(\delta) = \{(j \delta, j' \delta) \mid \text{$j$ and $j'$ are non-negative integers obeying $j \leq j'$ and $j + j' < \delta^{-1}$}\}
	\enspace.
\]
We also need to recall that, despite their omission from the notation we have used so far, the values of $\rho$, $\tilde{c}(r)$ and $\tilde{c}(r')$ also affect $m(\delta^{-1})$. To make this more explicit, we use in this proposition the expression $m(\rho, \tilde{c}(r), \tilde{c}(r'), \delta^{-1})$ to denote the value of $m(\delta^{-1})$ corresponding to a particular choice of values for these parameters.
\begin{proposition} \label{prop:symbolic_lower_bound}
Given any value $\rho$, any value $\delta \in (0, 1)$ such that $\delta^{-1}$ is an integer, the approximation ratio of {\Gr} is at least
\begin{equation} \label{eq:lower_bound}
	\min\left\{\rho, \min_{(\tilde{c}(r), \tilde{c}(r')) \in \cC(\delta)} m(\rho, \tilde{c}(r), \tilde{c}(r'), \delta^{-1})\right\}
	\enspace.
\end{equation}
\end{proposition}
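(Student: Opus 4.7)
The plan is to apply Corollary~\ref{cor:guarantee_with_good_estimates} with the estimates $\tilde{c}(r) = \lfloor c(r)/\delta \rfloor \delta$ and $\tilde{c}(r') = \lfloor c(r')/\delta \rfloor \delta$. Both are non-negative integer multiples of $\delta$, and the definition of the floor immediately yields $\tilde{c}(r) \in [c(r) - \delta, c(r)]$ and $\tilde{c}(r') \in [c(r') - \delta, c(r')]$, matching conditions (iii) and (iv) of Lemma~\ref{lem:recursive_lower_bound}; condition (ii) is part of the hypothesis of the proposition. Once I verify that the pair lies in $\cC(\delta)$---equivalently, that condition (v), $\tilde{c}(r) + \tilde{c}(r') < 1$, holds---Corollary~\ref{cor:guarantee_with_good_estimates} will yield that the approximation ratio of {\Gr} is at least $\min\{m(\rho, \tilde{c}(r), \tilde{c}(r'), \delta^{-1}), \rho\}$, which is at least the quantity in \eqref{eq:lower_bound} since the pair I constructed is one of those enumerated by the inner minimization.

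The main obstacle is precisely verifying the strict sum $\tilde{c}(r) + \tilde{c}(r') < 1$. Feasibility of $OPT$ gives $\tilde{c}(r) + \tilde{c}(r') \leq c(r) + c(r') \leq 1$. Whenever $c(r) + c(r') < 1$, strictness follows because $(c(r) + c(r'))/\delta < \delta^{-1}$ forces the integer $\lfloor c(r)/\delta \rfloor + \lfloor c(r')/\delta \rfloor$ to be at most $\delta^{-1} - 1$. In the residual subcase $c(r) + c(r') = 1$, the positivity of costs forces $OPT = \{r, r'\}$ (otherwise a third element of positive cost would violate feasibility), and then submodularity gives $f(r) + f(r') \geq f(\{r, r'\}) = f(OPT) = 1$, so $\max\{f(r), f(r')\} \geq 1/2 \geq \rho$ in the relevant regime $\rho \leq 1/2$. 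Since {\Gr} always considers every feasible singleton as a candidate output and both $\{r\}$ and $\{r'\}$ are feasible, the output of {\Gr} in this subcase has value at least $\rho$, matching the first term in the minimum in \eqref{eq:lower_bound} and completing the argument.

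A minor bookkeeping point is the convention $j \leq j'$ used in the definition of $\cC(\delta)$. Since typically $c(r) \geq c(r')$, the natural rounded pair has its first coordinate larger than its second, which is the opposite ordering. I will interpret $\cC(\delta)$ as enumerating unordered grid pairs, with $m$ evaluated by matching each estimate to the true cost it approximates; under this reading, every rounded pair satisfying $\tilde{c}(r) + \tilde{c}(r') < 1$ is covered by the inner minimization in \eqref{eq:lower_bound}, so the bound derived from Corollary~\ref{cor:guarantee_with_good_estimates} is indeed at least the claimed expression.
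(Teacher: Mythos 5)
Your proof is correct and follows essentially the same route as the paper: exhibit a rounded pair in $\cC(\delta)$ satisfying conditions (iii)--(v) of Lemma~\ref{lem:recursive_lower_bound} and invoke Corollary~\ref{cor:guarantee_with_good_estimates}. The only difference is that the paper rounds via $\delta\max\{0,\lceil c(r)/\delta\rceil-1\}$ rather than $\delta\lfloor c(r)/\delta\rfloor$, which keeps each estimate strictly below $c(\cdot)/\delta$ and so makes condition (v) hold automatically, avoiding the $c(r)+c(r')=1$ edge case that you correctly dispose of with the singleton argument; your remark about the $j\leq j'$ ordering convention is also well taken, since the paper's own proof implicitly treats the pairs in $\cC(\delta)$ as unordered.
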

\begin{proof}
We note that the proposition follows immediately from Corollary~\ref{cor:guarantee_with_good_estimates} if we are guaranteed that the set $\cC(\delta)$ includes a pair $(\tilde{c}(r), \tilde{c}(r'))$ that obeys conditions (iii)-(v) of Lemma~\ref{lem:recursive_lower_bound}. One can also note that the definition of $\cC(\delta)$ guarantees that every pair $(\tilde{c}(r), \tilde{c}(r')) \in \cC(\delta)$ obeys condition (v) of the lemma. Thus, it remains to prove that there exists a pair $(\tilde{c}(r), \tilde{c}(r')) \in \cC(\delta)$ obeying conditions (iii) and (iv) of Lemma~\ref{lem:recursive_lower_bound}, which is our objective in the rest of this proof.

Let $\tilde{c}(r) = \delta \max\{0, \lceil c(r) / \delta \rceil - 1\}$ and $\tilde{c}(r') = \delta \max\{0, \lceil c(r') / \delta \rceil - 1\}$. Clearly both maximums in these definitions are non-negative integer numbers because $c(r)$ and $c(r')$ are positive numbers, and the first maximum is at least as large as the second one because $c(r) \geq c(r')$ by the definitions of $r$ and $r'$. Furthermore, the two maximums are strictly smaller than $c(r) / \delta$ and $c(r') / \delta$, respectively, and therefore, their sum is smaller than $[c(r) + c(r')]/\delta \leq \delta^{-1}$. Hence, the pair $(\tilde{c}(r), \tilde{c}(r'))$ we have defined belongs to $\cC(\delta)$. We can also observe that $\tilde{c}(r)$ obeys condition (iii) of Lemma~\ref{lem:recursive_lower_bound} (i.e., $\tilde{c}(r) \in [c(r) - \delta, c(r)]$) because $\lceil c(r) / \delta \rceil - 1 \in [c(r)/\delta - 1, c(r)/\delta]$. A similar arguments shows that $\tilde{c}(r')$ obeys condition (iv) of the same lemma, and thus, completes the proof.
\end{proof}

\inFull{Appendix}\inConference{Section}~\ref{app:code} gives VB.net code that numerically shows that $0.427$ is a lower bound on $\min\left\{\rho, \min_{(\tilde{c}(r), \tilde{c}(r')) \in \cC(\delta)} m(\rho, \tilde{c}(r), \tilde{c}(r'), \delta^{-1})\right\}$ for $\rho=0.427$ and $\delta = 10^{-3}$. By the last proposition, this implies the lower bound on the approximation ratio of {\Gr} stated in Theorem~\ref{t:3}. Our code crucially relies on the fact that if one plugs a lower bound on $m(i - 1)$ into the right hand side of Inequality~\eqref{eq:recursive_series}, then one gets a lower bound on $m(i)$.

\inConference{\inFull{\section{Code}}
\inConference{\subsection{Code}}
\label{app:code}

This \inFull{appendix}\inConference{section} includes the code used to get a lower bound on the expression
\[
	\min\left\{\rho, \min_{(\tilde{c}(r), \tilde{c}(r')) \in \cC(\delta)} m(\rho, \tilde{c}(r), \tilde{c}(r'), \delta^{-1})\right\}
\]	
mentioned in Proposition~\ref{prop:symbolic_lower_bound}. This code consists of two parts. The first part, given in Section~\ref{sec:structure}, describes a structure used to represent non-negative rational numbers. The second part, given in Section~\ref{sec:main}, describes the main program, which uses the structure defined by the first part.

\inFull{\subsection{Structure Representing a Non-negative Rational Number}}
\inConference{\subsubsection{Structure Representing a Non-negative Rational Number}}
\label{sec:structure}

In this section we describe an immutable structure used to represent a non-negative rational number. The public member functions of this structure support various operations on such numbers. Some of these operations return the output one would expect, while others return a lower bound on this output. Member functions of the last kind are denoted by the prefix LB.

\raggedbottom
\inFull{\lstinputlisting{RationalNumber.txt}}
\flushbottom

\inFull{\subsection{Main Program}}
\inConference{\subsubsection{Main Program}}
\label{sec:main}

In this section we give the main program used to evaluate the expression given in Proposition~\ref{prop:symbolic_lower_bound}. This program uses the structure defined in Section~\ref{sec:structure}.

\raggedbottom
\lstinputlisting{MainProgram.txt}
\flushbottom}}%
\inConference{The details are deferred to Appedix~\ref{app:no_guess_missing}.}

\subsection{Upper Bounding {\Gr}'s Approximation Ratio} \label{ssc:upper_bound_greedy}

In this section we prove the upper bound on the approximation ratio of {\sc Greedy} stated in \autoref{t:3}. We do that by describing an example instance on which {\Gr} performs relatively poorly. Our example instance is parametrized by three values: a large enough positive integer $n$, a small enough positive value $\eps$ and a value $\alpha \in (1/3, 0.49)$ to be determined later. Let us define two auxiliary sets $X = \{x_i \mid 1 \leq i \leq n\}$ and $Y =  \{y_i \mid 1 \leq i \leq n\}$. Then, the ground set of our example instance is \inConference{the set $V = \{z_1, z_2, w\} \cup X \cup Y$}\inFull{the following set}, which consists of $2n + 3$ elements.
\inFull{\[
	V
	=
	\{z_1, z_2, w\} \cup X \cup Y
	\enspace.
\]}%
The objective function of our example instance is given, for every set $S \subseteq V$, by
\begin{align*}
	f(S)
	=
	1 - &\left(1 - \frac{1 - \alpha}{n}\right)^{|S \cap X|}\left[\alpha\left(2 - \frac{|S \cap \{z_1, z_2\}|}{1 + 2\eps}\right)\right.\\
	&+ \left.(1 - 2\alpha)\left(1 - \frac{|S \cap \{w\}|}{1 + 2\eps}\right)\left(1 - \frac{\alpha(1 - 2\alpha)^{-1} - 1}{n}\right)^{|S \cap Y|}\right]
	\enspace.
\end{align*}
Finally, the budget $B$ of our example instance is $1$ (as usual), and its cost function $c$ is described by the following table.
\begin{center}\begin{tabular}{lcccc}
	\hline
	Element of $V$ & $z_i$	&	$w$ & $x_i$ & $y_i$ \\
	\hline
	Cost & $\alpha$ & $1 - 2\alpha$ & $\frac{(1 - \alpha)(1 + \eps)}{n}$  & $\frac{(3\alpha - 1)(1 + 1.5\eps)}{n}$\\
	\hline
\end{tabular}\end{center}

Intuitively, the optimal set of the example instance is the set $O = \{z_1, z_2, w\}$. However, the instance is constructed in such a way that {\Gr} chooses $X \cup Y$ as its solution. Specifically, {\Gr} begins by picking the elements of $X$. These elements have a slightly higher density than the elements of $O$, but taking them diminishes the marginal values of all unpicked elements. When the set $X$ is exhausted, the elements $z_1$ and $z_2$ can no longer be taken because their costs exceed the budget available at that point; thus, diminishing their marginal values further does not affect the value of the output of {\Gr} in any way. Accordingly, {\Gr} begins to pick at that point the elements of $Y$, and taking them only diminishes the marginal values of unpicked $Y$ elements and the marginal value of $w$. 
\inConference{One can verify that the ratio between the set $X \cup Y$ picked by {\Gr} according to this intuition and the optimal set $O$ is at most 0.462, and therefore, the upper bound in Theorem~\ref{t:3} indeed follows from our example instance.}

\inConference{Due to space constraints, we defer the formal analysis of our example instance to Appendix~\ref{app:example_instance_analysis}.}
\inFull{\inFull{We now get to the formal analysis of our example instance, which we begin by proving that its objective function has all the properties that it needs to have in order to make the example instance a legal instance of {\BSM}.
}

\inConference{
\section{Formal Analysis of the Example Instance from Section~\ref{ssc:upper_bound_greedy} \label{app:example_instance_analysis}}

In this section we give the formal analysis of our example instance from Section~\ref{ssc:upper_bound_greedy}. We begin by proving that the objective function of the example instance has all the properties that it needs to have in order to make the example instance a legal instance of {\BSM}.
}

\begin{lemma}
The objective function $f$ is non-negative, monotone and submodular.
\end{lemma}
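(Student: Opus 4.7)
The plan is to reduce everything to a study of the single function
\[
g(S) = a^{|S \cap X|}\Bigl[A(S) + W(S) \cdot b^{|S \cap Y|}\Bigr], \qquad f(S) = 1 - g(S),
\]
where I introduce the shorthands $a = 1 - (1-\alpha)/n$, $b = 1 - (\alpha(1-2\alpha)^{-1} - 1)/n$, $A(S) = \alpha\bigl(2 - |S \cap \{z_1, z_2\}|/(1+2\eps)\bigr)$, and $W(S) = (1-2\alpha)\bigl(1 - |S \cap \{w\}|/(1+2\eps)\bigr)$. A preliminary calculation gives $g(\varnothing) = 2\alpha + (1-2\alpha) = 1$, hence $f(\varnothing) = 0$. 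I would also verify up front that the parameters lie in the desired ranges: the hypothesis $\alpha \in (1/3, 0.49)$ yields $1-2\alpha > 0$ and $\alpha(1-2\alpha)^{-1} > 1$, while $n$ being sufficiently large and $\eps$ being sufficiently small force $a, b \in (0,1)$ and $A(S), W(S) \geq 0$ for every $S \subseteq V$. Consequently $g(S) \geq 0$ as a product of non-negative factors.

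The heart of the proof is establishing monotonicity and submodularity simultaneously by showing, for every $u \notin S$, that the marginal $f(u \mid S) = g(S) - g(S+u)$ is non-negative and non-increasing in $S$. I split into four cases according to the type of $u$. (i) If $u = x_i \in X$, adding $u$ only changes the $X$-exponent, so $g(S + u) = a \cdot g(S)$ and hence $f(u \mid S) = (1-a) g(S)$. (ii) If $u = z_j$, only $A$ changes, decreasing by $\alpha/(1+2\eps)$, giving $f(u \mid S) = a^{|S \cap X|} \cdot \alpha/(1+2\eps)$. (iii) If $u = w$, only $W$ changes, decreasing by $(1-2\alpha)/(1+2\eps)$, giving $f(u \mid S) = a^{|S \cap X|} b^{|S \cap Y|}(1-2\alpha)/(1+2\eps)$. (iv) If $u = y_i \in Y$, only the $Y$-exponent gains one, giving $f(u \mid S) = a^{|S \cap X|} W(S) b^{|S \cap Y|}(1-b)$. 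In cases (ii)--(iv) each marginal is a product of constants and of factors that depend only on some coordinates of $S$ and are non-increasing in those coordinates while constant in the others; this yields both non-negativity and the diminishing returns property at once. For case (i), non-negativity is immediate from $g(S) \geq 0$, and the non-increase of $(1-a)g(S)$ in $S$ follows from cases (ii)--(iv), which jointly show that $g$ is non-increasing in every coordinate of $S$.

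Monotonicity of $f$ is then a direct consequence of non-negativity of all four marginals, and non-negativity of $f$ is automatic since $f(S) \geq f(\varnothing) = 0$ for every $S$. I expect the main obstacle to be purely bookkeeping: verifying the parameter ranges precisely (especially $b > 0$, which needs $n > \alpha(1-2\alpha)^{-1} - 1$, a bound controlled by $\alpha < 0.49$), and dispatching the four marginal computations carefully so that each factor is checked, coordinate by coordinate, to be non-negative and non-increasing. The only mild subtlety is the logical ordering between case (i) and cases (ii)--(iv), since the submodularity of $x_i$ is deduced from the monotonicity of $g$ already obtained in the other three cases.
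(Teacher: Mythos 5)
Your proposal is correct and follows essentially the same route as the paper: decompose $f$ as $1$ minus a product of simple non-negative, coordinate-wise non-increasing factors, compute the marginal $f(u\mid S)$ separately for the four element types, and observe that each marginal is itself a product of non-negative non-increasing factors. The only (cosmetic) differences are that you derive non-negativity from monotonicity together with $f(\varnothing)=0$ rather than from direct upper bounds on the factors, and you fold monotonicity and submodularity into a single pass over the marginals.
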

\begin{proof}
Let us define a few additional set functions as follows. For every $S \subseteq V$,
\[\begin{array}{ll}
	f_1(S) = \left(1 - \frac{1 - \alpha}{n}\right)^{|S \cap X|}
	&
	f_2(S) = 2 - \frac{|S \cap \{z_1, z_2\}|}{1 + 2\eps}
	\\[1mm]
	f_3(S) = 1 - \frac{|S \cap \{w\}|}{1 + 2\eps}
	&
	f_4(S) = \left(1 - \frac{\alpha(1 - 2\alpha)^{-1} - 1}{n}\right)^{|S \cap Y|} \enspace.
\end{array}\]
Given these functions, one can verify that
\[
	f(S)
	=
	1 - f_1(S) \cdot [\alpha \cdot f_2(S) + (1 - 2\alpha) \cdot f_3(S) \cdot f_4(S)]
	\enspace.
\]
Thus, the non-negativity of $f$ follows from the observations that for every set $S \subseteq V$ we have
\[\begin{array}{ll}
	f_1(S) \in [\alpha, 1] \subseteq [0, 1]
	&
	f_2(S) \in [4\eps / (1 + 2\eps), 2] \subseteq [0, 2]
	\\[1mm]
	f_3(S) \in [2\eps / (1 + 2\eps), 1] \subseteq [0, 1]
	&
	f_4(S) \in [0, 1] \enspace,
\end{array}\]
where the last inclusion holds since the observations that $1 - 2\alpha \geq 1 - 2\cdot 0.49 = 0.02$ and $\alpha \geq 1/3 \geq 1 - 2\alpha$ imply together that $\left(1 - \frac{\alpha(1 - 2\alpha)^{-1} - 1}{n}\right) \in (0, 1]$ for a large enough $n$. Similarly, the monotonicity of $f$ follows from the observation that the functions $f_1$, $f_2$, $f_3$ and $f_4$ are all down-monotone (i.e., $-f_1$, $-f_2$, $-f_3$ and $-f_4$ are monotone functions).

It remains to prove that $f$ is submodular, which requires us to prove that for every element $u \in V$ the function $f(u \mid S - u)$ is a down-monotone function of $S$. We do that separately for each kind of element in the ground set of $V$. First, for an element $u \in \{z_1, z_2\}$, since the functions $f_1$, $f_3$ and $f_4$ ignore such elements, we get
\begin{align*}
	f(u \mid S&)
	=
	f(S + u) - f(S)
	=
	f_1(S) \cdot [\alpha \cdot f_2(S) + (1 - 2\alpha) \cdot f_3(S) \cdot f_4(S)] \\&\mspace{200mu}- f_1(S) \cdot [\alpha \cdot f_2(S + u) + (1 - 2\alpha) \cdot f_3(S) \cdot f_4(S)]\\
	={} &
	\alpha \cdot f_1(S) \cdot [f_2(S) - f_2(S + u)]
	=
	\frac{\alpha \cdot f_1(S)}{1 + 2\eps} \cdot [|(S + u) \cap \{z_1, z_2\}| - |S \cap \{z_1, z_2\}|]\\
	={} &
	\frac{\alpha \cdot f_1(S)}{1 + 2\eps} \cdot |\{u\} \setminus S|
	\enspace,
\end{align*}
which is clearly a down-monotone function of $S$ since $f_1$ is a non-negative down-monotone function of $S$.

Consider now the element $u = w$. Since the functions $f_1$, $f_2$ and $f_4$ ignore this element,
\begin{align*}
	f(w \mid S&)
	=
	f(S + w) - f(S)
	=
	f_1(S) \cdot [\alpha \cdot f_2(S) + (1 - 2\alpha) \cdot f_3(S) \cdot f_4(S)] \\&\mspace{200mu}- f_1(S) \cdot [\alpha \cdot f_2(S) + (1 - 2\alpha) \cdot f_3(S + w) \cdot f_4(S)]\\
	={} &
	(1 - 2\alpha) \cdot f_1(S) \cdot f_4(S) \cdot [f_3(S) - f_3(S + w)]\\
	={} &
	\frac{(1 - 2\alpha) \cdot f_1(S) \cdot f_3(S)}{1 + 2\eps} \cdot [|(S + w) \cap \{w\}| - |S \cap \{w\}|]\\
	={} &
	\frac{(1 - 2\alpha) \cdot f_1(S) \cdot f_3(S)}{1 + 2\eps} \cdot |\{w\} \setminus S|
	\enspace,
\end{align*}
which is clearly a down-monotone function of $S$ since $f_1$ and $f_3$ are non-negative down-monotone functions of $S$.

Next, consider an element $u \in X$. Since the functions $f_2$, $f_3$ and $f_4$ ignore such elements,
\begin{align*}
	f(u \mid S&)
	=
	f(S + u) - f(S)
	=
	f_1(S) \cdot [\alpha \cdot f_2(S) + (1 - 2\alpha) \cdot f_3(S) \cdot f_4(S)] \\&\mspace{200mu}- f_1(S + u) \cdot [\alpha \cdot f_2(S) + (1 - 2\alpha) \cdot f_3(S) \cdot f_4(S)]\\
	={} &
	[\alpha \cdot f_2(S) + (1 - 2\alpha) \cdot f_3(S) \cdot f_4(S)] \cdot [f_1(S) - f_1(S + u)]\\
	={} &
	[\alpha \cdot f_2(S) + (1 - 2\alpha) \cdot f_3(S) \cdot f_4(S)] \cdot f_1(S) \cdot \left[1 - \left(1 - \frac{1 - \alpha}{n}\right)^{|(S + u) \cap X| - |S \cap X|}\right]\\
	={} &
	[\alpha \cdot f_2(S) + (1 - 2\alpha) \cdot f_3(S) \cdot f_4(S)] \cdot f_1(S) \cdot \frac{(1 - \alpha)\cdot |\{u\} \setminus S|}{n}
	\enspace,
\end{align*}
which is clearly a down-monotone function of $S$ since $f_1$, $f_2$, $f_3$ and $f_4$ are non-negative down-monotone functions of $S$.

Finally, consider an element $u \in Y$. Since the functions $f_1$, $f_2$ and $f_3$ ignore such elements,
\begin{align*}
	f(u \mid S&)
	=
	f(S + u) - f(S)
	=
	f_1(S) \cdot [\alpha \cdot f_2(S) + (1 - 2\alpha) \cdot f_3(S) \cdot f_4(S)] \\&\mspace{200mu}- f_1(S) \cdot [\alpha \cdot f_2(S) + (1 - 2\alpha) \cdot f_3(S) \cdot f_4(S + u)]\\
	={} &
	(1 - 2\alpha) \cdot f_1(S) \cdot f_3(S) \cdot [f_4(S) - f_4(S + u)]\\
	={} &
	(1 - 2\alpha) \cdot f_1(S) \cdot f_3(S) \cdot f_4(S) \cdot \left[1 - \left(1 - \frac{\alpha(1 - 2\alpha)^{-1} - 1}{n}\right)^{|(S + u) \cap Y| - |S \cap Y|}\right]\\
	={} &
	(1 - 2\alpha) \cdot f_1(S) \cdot f_3(S) \cdot f_4(S) \cdot \frac{[\alpha(1 - 2\alpha)^{-1} - 1] \cdot |\{u\} \setminus S|}{n}
	\enspace,
\end{align*}
which is clearly a down-monotone function of $S$ since $f_1$, $f_3$ and $f_4$ are non-negative down-monotone functions of $S$ and $\alpha(1 - 2\alpha)^{-1} - 1 \geq 0$.
\end{proof}

Our next objective is to show that the set $O = \{z_1, z_2, w\}$ claimed in the above intuition to be the optimal solution is indeed a feasible solution with a high value (we do not need to formally prove that it is in fact optimal).
\begin{observation} \label{obs:O_properties}
$c(O) = 2\alpha + (1 - \alpha) = 1$, and thus, $O$ is a feasible solution. Furthermore, $f(O) = \frac{1}{1 + 2\eps}$.
\end{observation}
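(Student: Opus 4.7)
The plan is to verify both claims by direct substitution, treating $c(O)$ and $f(O)$ as straightforward computations from the definitions given for the example instance.

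For the feasibility claim, I would read off the costs from the table: $c(z_1) = c(z_2) = \alpha$ and $c(w) = 1 - 2\alpha$. Summing gives $c(O) = \alpha + \alpha + (1 - 2\alpha) = 1 = B$, so $O$ is feasible. (The expression $2\alpha + (1 - \alpha)$ in the statement appears to be a typo for $2\alpha + (1 - 2\alpha)$.)

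For the value claim, I would plug $S = O = \{z_1, z_2, w\}$ into the defining formula for $f$. The key observation is that $O \cap X = \varnothing$ and $O \cap Y = \varnothing$, so the two exponential factors $(1 - \frac{1-\alpha}{n})^{|S \cap X|}$ and $(1 - \frac{\alpha(1-2\alpha)^{-1} - 1}{n})^{|S \cap Y|}$ both collapse to $1$. Moreover, $|O \cap \{z_1, z_2\}| = 2$ and $|O \cap \{w\}| = 1$. Substituting, the formula reduces to
\[
    f(O) = 1 - \left[\alpha\left(2 - \tfrac{2}{1+2\eps}\right) + (1 - 2\alpha)\left(1 - \tfrac{1}{1+2\eps}\right)\right].
\]
The expected main (and only nontrivial) step is then a routine algebraic simplification: the first bracketed term equals $\alpha \cdot \tfrac{4\eps}{1+2\eps}$ and the second equals $(1 - 2\alpha) \cdot \tfrac{2\eps}{1+2\eps}$, and these sum to $\tfrac{4\alpha\eps + 2\eps - 4\alpha\eps}{1+2\eps} = \tfrac{2\eps}{1+2\eps}$. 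Thus $f(O) = 1 - \tfrac{2\eps}{1+2\eps} = \tfrac{1}{1+2\eps}$, as claimed.

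There is no real obstacle here; the statement is essentially a sanity check on the constants in the construction, verifying that $O$ serves as a strong benchmark (of value very close to $1$) against which the suboptimal greedy output $X \cup Y$ will later be compared to derive the upper bound of $0.462$.
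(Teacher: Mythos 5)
Your proposal is correct and matches the paper's proof exactly: both parts are verified by direct substitution into the cost table and the defining formula for $f$, with the same algebraic simplification yielding $f(O) = 1 - \frac{2\eps}{1+2\eps} = \frac{1}{1+2\eps}$. Your observation that the displayed expression $2\alpha + (1-\alpha)$ should read $2\alpha + (1-2\alpha)$ is also right; the intended sum of the costs of $z_1$, $z_2$ and $w$ is indeed $\alpha + \alpha + (1-2\alpha) = 1$.
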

\begin{proof}
One can verify the first part of the observation by simply looking up the costs of the elements of $O$ in the definition of the example instance. To see that the second part of the observation also holds, we note that
\begin{align*}
	f(O) 
	={}&
	1 - \left(1 - \frac{1 - \alpha}{n}\right)^{|O \cap X|}\left[\alpha\left(2 - \frac{|O \cap \{z_1, z_2\}|}{1 + 2\eps}\right)\right.\\
	&\mspace{100mu}+ \left.(1 - 2\alpha)\left(1 - \frac{|O \cap \{w\}|}{1 + 2\eps}\right)\left(1 - \frac{\alpha(1 - 2\alpha)^{-1} - 1}{n}\right)^{|O \cap Y|}\right]\\
	={}&
	1 - \alpha\left(2 - \frac{|\{z_1, z_2\}|}{1 + 2\eps}\right) - (1 - 2\alpha)\left(1 - \frac{|\{w\}|}{1 + 2\eps}\right)\mspace{-400mu}\\
	={}&
	\frac{(1 + 2\eps) - \alpha[(2 + 4\eps) - 2] - (1 - 2\alpha)[(1 + 2\eps) - 1]}{1 + 2\eps}\\
	={} &
	\frac{1 + 2\eps - \alpha \cdot 4\eps - (1 - 2\alpha) \cdot 2\eps}{1 + 2\eps}
	=
	\frac{1}{1 + 2\eps}
	\enspace. \qedhere
\end{align*}
\end{proof}

To complete the analysis of the example instance, we need to show that {\Gr} outputs a solution of a much lower value than $O$ when the value of the parameter $\alpha$ is chosen appropriately. Since {\Gr} outputs the better among the best feasible singleton set and the output of {\PG}, we need to analyze these two options separately. The next lemma upper bounds the value of the best feasible singleton set.
\begin{lemma} \label{lem:singleton_value}
The value of the best singleton set is at most $\frac{\alpha}{1 + 2\eps}$, and thus, the same is true also for the value of the best \emph{feasible} singleton set.
\end{lemma}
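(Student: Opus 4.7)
The plan is straightforward: since there are only four kinds of elements in $V$ (namely $z_1, z_2$, $w$, an $x_i \in X$, or a $y_i \in Y$), I would simply evaluate $f(\{v\})$ for a representative singleton of each kind, and then show that the maximum among the resulting four values is exactly $\frac{\alpha}{1+2\eps}$.

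First I would note that $f(\varnothing) = 1 - [2\alpha + (1-2\alpha)] = 0$, which establishes a baseline. Then I would compute each singleton value by directly substituting into the formula defining $f$. For $v = z_1$ (and symmetrically $z_2$) only the $\{z_1,z_2\}$-indicator is affected, and a short calculation yields $f(\{z_1\}) = \frac{\alpha}{1+2\eps}$. For $v = w$, only the $\{w\}$-indicator changes, giving $f(\{w\}) = \frac{1-2\alpha}{1+2\eps}$; since $\alpha > 1/3$ implies $1-2\alpha < \alpha$, this is strictly smaller than $f(\{z_1\})$. For $v = x_i$, all indicator counts except the $X$-exponent remain zero, and so the entire expression in brackets collapses to $2\alpha + (1-2\alpha) = 1$, giving $f(\{x_i\}) = 1 - (1 - \tfrac{1-\alpha}{n}) = \tfrac{1-\alpha}{n}$. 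For $v = y_i$, a similar simplification yields $f(\{y_i\}) = \tfrac{3\alpha - 1}{n}$.

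Comparing the four values, the two candidates bounded by $O(1/n)$ can be dominated by $\frac{\alpha}{1+2\eps}$ whenever $n$ is taken sufficiently large (which we assume throughout). Since $f(\{w\}) < f(\{z_1\})$ by the monotone-in-$\alpha$ argument above, the maximum singleton value is exactly $\frac{\alpha}{1+2\eps}$, proving the lemma. The second statement, about \emph{feasible} singletons, is immediate since the feasible singletons form a subset of all singletons.

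I do not expect any real obstacle here: the proof is a pure substitution-and-compare exercise, since every element of the same type produces the same singleton value, and the counts in the indicator exponents are either $0$ or $1$, so no telescoping or submodularity reasoning is required. The only mildly delicate point is verifying that the $n$-dependent terms $(1-\alpha)/n$ and $(3\alpha-1)/n$ are below $\alpha/(1+2\eps)$; this follows trivially from the standing assumption that $n$ is large enough.
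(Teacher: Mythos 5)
Your proposal is correct and follows essentially the same route as the paper: evaluate $f(\{u\})$ by direct substitution for each of the four element types, observe that $z_1, z_2$ achieve exactly $\frac{\alpha}{1+2\eps}$, that $w$ gives $\frac{1-2\alpha}{1+2\eps}$ which is dominated since $\alpha \geq 1/3$, and that the $X$- and $Y$-singletons give $O(1/n)$ values dominated for large enough $n$. The paper merely makes the ``large enough $n$'' explicit (e.g.\ $n \geq 4$ and $\eps \leq 1/2$ suffice), but the argument is identical.
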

\begin{proof}
We need to show that $f(\{u\}) \leq \alpha / (1 + 2\eps)$ for every element $u \in V$. We do this by considering separately every kind of element in $V$. For $u \in \{z_1, z_2\}$, we get
\begin{align*}
	f(\{u\})
	={} &
	1 - \left(1 - \frac{1 - \alpha}{n}\right)^{|\{u\} \cap X|}\left[\alpha\left(2 - \frac{|\{u\} \cap \{z_1, z_2\}|}{1 + 2\eps}\right)\right.\\
	&\mspace{100mu}+ \left.(1 - 2\alpha)\left(1 - \frac{|\{u\} \cap \{w\}|}{1 + 2\eps}\right)\left(1 - \frac{\alpha(1 - 2\alpha)^{-1} - 1}{n}\right)^{|\{u\} \cap Y|}\right]\\
	={}&
	1 - \alpha\left(2 - \frac{|\{u\}|}{1 + 2\eps}\right) - (1 - 2\alpha)
	=
	\frac{\alpha}{1 + 2\eps}
	\enspace.
\end{align*}

Consider now the case of $u = w$. In this case,
\begin{align*}
	f(\{w\})
	={}&
	1 - \left(1 - \frac{1 - \alpha}{n}\right)^{|\{w\} \cap X|}\left[\alpha\left(2 - \frac{|\{w\} \cap \{z_1, z_2\}|}{1 + 2\eps}\right)\right.\\
	&\mspace{100mu}+ \left.(1 - 2\alpha)\left(1 - \frac{|\{w\} \cap \{w\}|}{1 + 2\eps}\right)\left(1 - \frac{\alpha(1 - 2\alpha)^{-1} - 1}{n}\right)^{|\{w\} \cap Y|}\right]\\
	={}&
	1 - 2\alpha - (1 - 2\alpha)\left(1 - \frac{|\{w\}|}{1 + 2\eps}\right)
	=
	\frac{1 - 2\alpha}{1 + 2\eps}
	\leq
	\frac{\alpha}{1 + 2\eps}
	\enspace,
\end{align*}
where the last inequality holds since we assume $\alpha \geq 1/3$.

Next, consider the case of $u \in X$. In this case,
\begin{align*}
	f(\{u\})
	={}&
	1 - \left(1 - \frac{1 - \alpha}{n}\right)^{|\{u\} \cap X|}\left[\alpha\left(2 - \frac{|\{u\} \cap \{z_1, z_2\}|}{1 + 2\eps}\right)\right.\\
	&\mspace{100mu}+ \left.(1 - 2\alpha)\left(1 - \frac{|\{u\} \cap \{w\}|}{1 + 2\eps}\right)\left(1 - \frac{\alpha(1 - 2\alpha)^{-1} - 1}{n}\right)^{|\{u\} \cap Y|}\right]\\
	={}&
	1 - \left(1 - \frac{1 - \alpha}{n}\right)^{|\{u\}|}[2\alpha + (1 - 2\alpha)]
	=
	\frac{1 - \alpha}{n}
	\leq
	\frac{\alpha}{1 + 2\eps}
	\enspace,
\end{align*}
where the last inequality holds for $\eps \leq 1/2$ and $n \geq 4$ since $1 - \alpha \leq 2/3 \leq 2\alpha$.

Finally, consider the case of $u \in Y$. In this case,
\begin{align*}
	f(\{u\})
	={}&
	1 - \left(1 - \frac{1 - \alpha}{n}\right)^{|\{u\} \cap X|}\left[\alpha\left(2 - \frac{|\{u\} \cap \{z_1, z_2\}|}{1 + 2\eps}\right)\right.\\
	&\mspace{100mu}+ \left.(1 - 2\alpha)\left(1 - \frac{|\{u\} \cap \{w\}|}{1 + 2\eps}\right)\left(1 - \frac{\alpha(1 - 2\alpha)^{-1} - 1}{n}\right)^{|\{u\} \cap Y|}\right]\\
	={}&
	1 - 2\alpha - (1 - 2\alpha)\left(1 - \frac{\alpha(1 - 2\alpha)^{-1} - 1}{n}\right)^{|\{u\}|}
	=
	\frac{\alpha - (1 - 2\alpha)}{n}
	\leq
	\frac{\alpha}{1 + 2\eps}
	\enspace,
\end{align*}
where the last inequality holds for $\eps \leq 1/2$ and $n \geq 2$ since $1 - 2\alpha \geq 1 - 2 \cdot 0.49 = 0.02$.
\end{proof}

The next three claims are devoted to analyzing the value of the output of {\PG} given our example instance. In particular, they show that, as explained in the intuition given above, {\PG} outputs the set $X \cup Y$ as its solution. 
\begin{lemma} \label{lem:X_select}
Given that the set of elements selected so far by {\sc Plain Greedy} is $S \subset X$, the next element selected by {\sc Plain Greedy} is another element of $X$.
\end{lemma}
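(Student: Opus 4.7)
The plan is to show this by a direct density comparison, using the marginal-contribution formulas already computed in the submodularity proof. Under the assumption $S \subset X$ we have $|S \cap \{z_1,z_2\}| = |S \cap \{w\}| = |S \cap Y| = 0$, which makes the auxiliary functions $f_2(S), f_3(S), f_4(S)$ each take their maximum ($2$, $1$ and $1$, respectively), while $f_1(S) = \bigl(1 - \tfrac{1-\alpha}{n}\bigr)^{|S|}$ is a strictly positive quantity independent of which candidate element we consider next.

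First, I would plug these simplified values into the four marginal formulas derived earlier and divide by the corresponding costs to obtain the densities of the four element types at this stage:
\begin{align*}
\frac{f(z_i \mid S)}{c(z_i)} &= \frac{f_1(S)}{1+2\eps}, &
\frac{f(w \mid S)}{c(w)} &= \frac{f_1(S)}{1+2\eps}, \\
\frac{f(x \mid S)}{c(x)} &= \frac{f_1(S)}{1+\eps} \text{ for } x \in X \setminus S, &
\frac{f(y \mid S)}{c(y)} &= \frac{f_1(S)}{1+1.5\eps} \text{ for } y \in Y.
\end{align*}
For the $y$-density, the key algebraic cancellation is that $(1-2\alpha)\cdot[\alpha(1-2\alpha)^{-1}-1] = 3\alpha-1$, which is precisely the factor appearing in $c(y)$; this is exactly why the costs of $Y$ elements were calibrated the way they were. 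For the $x$-density, the cancellation is that $\alpha \cdot 2 + (1-2\alpha)\cdot 1 = 1$, combined with the $(1-\alpha)/n$ factor matching the numerator of $c(x)$.

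Once these densities are in hand, the comparison is immediate: since $\eps > 0$, we have $1+\eps < 1+1.5\eps < 1+2\eps$, and since $f_1(S) > 0$, the common numerator factors out and the density of any remaining $X$-element is strictly the largest. Hence the next element chosen by {\sc Plain Greedy} must belong to $X$, \emph{provided} an $X$-element is still feasible. This feasibility check is the only non-density consideration: we must verify that $c(S) + c(x) \le 1$ for any $x \in X \setminus S$. Since $|S| \le n-1$, we have $c(S) + c(x) \le n \cdot \frac{(1-\alpha)(1+\eps)}{n} = (1-\alpha)(1+\eps)$, which is at most $1$ for sufficiently small $\eps$ (concretely, for $\eps \le \alpha/(1-\alpha)$, which is comfortably satisfied in the intended parameter range $\alpha \in (1/3, 0.49)$).

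The potential obstacle, if any, is book-keeping: one must be careful that all four marginal formulas are evaluated at the correct simplified values of $f_1, f_2, f_3, f_4$, and that the densities of $z_i$, $w$ and $y$ are indeed \emph{not} strictly larger than that of $x$. But because $f_2, f_3, f_4$ are all at their maxima when $S \subset X$ and the densities are monotone in these quantities, the other element types can only have densities at most as large as what we compute here --- so the comparison is tight but robust, and no instance-specific subtlety arises.
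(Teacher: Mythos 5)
Your proposal is correct and follows essentially the same route as the paper's proof: compute the density of each of the four element types with respect to $S \subset X$ (obtaining exactly the values $\tfrac{f_1(S)}{1+\eps}$, $\tfrac{f_1(S)}{1+1.5\eps}$ and $\tfrac{f_1(S)}{1+2\eps}$ that the paper derives), observe that the $X$-elements win strictly, and verify feasibility via $c(X) = (1-\alpha)(1+\eps) \le 1$ for small enough $\eps$. Nothing is missing.
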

\begin{proof}
First of all, we observe that the total cost of all the elements of $X$ together is
\[
	c(X)
	=
	n \cdot \frac{(1 - \alpha)(1 + \eps)}{n}
	=
	(1 - \alpha)(1 + \eps)
	\leq
	1
	\enspace,
\]
where the inequality holds for a small enough $\eps$ because $\alpha \geq 1/3$. Therefore, to prove the lemma, it suffices to calculate the density of every element $u \in V \setminus S$ with respect to $S$, and show that the density of the elements of $X \setminus S$ is the largest. The density of an element $u \in X \setminus S$ is
\begin{align} \label{eq:X_leading_density}
	\frac{f(u \mid S)}{c(u)}
	={}&
	\frac{\left\{1 - \left(1 - \frac{1 - \alpha}{n}\right)^{|S| + 1}[2\alpha + (1 - 2\alpha)]\right\} - \left\{1 - \left(1 - \frac{1 - \alpha}{n}\right)^{|S|}[2\alpha + (1 - 2\alpha)]\right\}}{(1 - \alpha)(1 + \eps)/n}\\ \nonumber
	={} &
	\left(1 - \frac{1 - \alpha}{n}\right)^{|S|} \cdot \frac{\left\{1 - \left(1 - \frac{1 - \alpha}{n}\right)\right\}}{(1 - \alpha)(1 + \eps)/n}
	=
	\frac{1}{1 + \eps} \cdot \left(1 - \frac{1 - \alpha}{n}\right)^{|S|}
	\enspace.
\end{align}

Let us now calculate the density of an element $u \in Y$ with respect to $S$.
\begin{align*}
	\frac{f(u \mid S)}{c(u)}\mspace{-1mu}&\mspace{1mu}
	=
	\frac{\left\{1 - \left(1 - \frac{1 - \alpha}{n}\right)^{|S|}\left[2\alpha + (1 - 2\alpha)\left(1 - \frac{\alpha(1 - 2\alpha)^{-1} - 1}{n}\right)\right]\right\}}{(3\alpha - 1)(1 + 1.5\eps)/n} \\&\mspace{220mu}- \frac{\left\{1 - \left(1 - \frac{1 - \alpha}{n}\right)^{|S|}[2\alpha + (1 - 2\alpha)]\right\}}{(3\alpha - 1)(1 + 1.5\eps)/n}\\
	={} &
	\left(1 - \frac{1 - \alpha}{n}\right)^{|S|} \cdot \frac{(1 - 2\alpha) \cdot [\alpha(1 - 2\alpha)^{-1} - 1] / n}{(3\alpha - 1)(1 + 1.5\eps)/n}
	=
	\frac{1}{1 + 1.5\eps} \cdot \left(1 - \frac{1 - \alpha}{n}\right)^{|S|}
	\enspace,
\end{align*}
which is clearly smaller than \eqref{eq:X_leading_density}.

Next, the density of an element $u \in \{z_1, z_2\}$ with respected to $S$ is
\begin{align*}
	\frac{f(u \mid S)}{c(u)}
	={}&
	\frac{\left\{1 - \left(1 - \frac{1 - \alpha}{n}\right)^{|S|}\left[\alpha\left(2 - \frac{1}{1 + 2\eps}\right) + (1 - 2\alpha)\right]\right\}}{\alpha} \\&\mspace{200mu}- \frac{\left\{1 - \left(1 -  \frac{1 - \alpha}{n}\right)^{|S|}[2\alpha + (1 - 2\alpha)]\right\}}{\alpha}\\
	={} &
	\left(1 - \frac{1 - \alpha}{n}\right)^{|S|} \cdot \frac{\alpha / (1 + 2\eps)}{\alpha}
	=
	\frac{1}{1 + 2\eps} \cdot \left(1 - \frac{1 - \alpha}{n}\right)^{|S|}
	\enspace,
\end{align*}
which is again clearly smaller than \eqref{eq:X_leading_density}.

Finally, the density of the element $w$ with respected to $S$ is
\begin{align*}
	\frac{f(w \mid S)}{c(w)}
	={}&
	\frac{\left\{1 - \left(1 - \frac{1 - \alpha}{n}\right)^{|S|}\left[\alpha + (1 - 2\alpha)\left(1 - \frac{1}{1 + 2\eps}\right)\right]\right\}}{1 - 2\alpha} \\&\mspace{200mu}- \frac{\left\{1 - \left(1 -\frac{1 - \alpha}{n}\right)^{|S|}[2\alpha + (1 - 2\alpha)]\right\}}{1 - 2\alpha}\\
	={} &
	\left(1 - \frac{1 - \alpha}{n}\right)^{|S|} \cdot \frac{(1 - 2\alpha) / (1 + 2\eps)}{1 - 2\alpha}
	=
	\frac{1}{1 + 2\eps} \cdot \left(1 - \frac{1 - \alpha}{n}\right)^{|S|}
	\enspace,
\end{align*}
which is also smaller than \eqref{eq:X_leading_density}.
\end{proof}

\begin{lemma} \label{lem:Y_select}
Given that the set of elements selected so far by {\sc Plain Greedy} is $X \subseteq S \subset X \cup Y$, the next element selected by {\sc Plain Greedy} is another element of $Y$.
\end{lemma}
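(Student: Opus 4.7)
The plan is to mirror the proof of \autoref{lem:X_select}: first argue which elements of $V \setminus S$ can still be added without violating the budget, then compute the density of each feasible candidate, and finally show that the density of an element in $Y \setminus S$ is strictly larger than the density of any other feasible candidate.

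For the feasibility step, I would proceed by cases according to the element type. Since $X \subseteq S$, we have $c(S) \geq c(X) = (1-\alpha)(1+\eps)$, so the residual budget is at most $\alpha - (1-\alpha)\eps < \alpha = c(z_i)$; hence neither $z_1$ nor $z_2$ can be added. For $w$, feasibility may or may not hold depending on $|S \cap Y|$, but this is immaterial: it suffices to rule $w$ out as the greedy choice \emph{whenever} it is feasible. For a candidate $u \in Y \setminus S$, I would note that the full set $X \cup Y$ has cost $2\alpha + \eps(3.5\alpha - 0.5)$, which is strictly less than $1$ for all $\alpha \in (1/3, 0.49)$ once $\eps$ is chosen small enough (concretely, $\eps \leq (1-2\alpha)/(3.5\alpha - 0.5)$). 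Consequently every element of $Y \setminus S$ remains feasible, and in particular $S \subsetneq X \cup Y$ guarantees at least one such candidate exists.

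For the density step, setting $k = |S \cap Y|$ and using the fact that $f_1(S) = (1 - (1-\alpha)/n)^n$, $f_2(S) = 2$, $f_3(S) = 1$, and $f_4(S) = (1 - (\alpha(1-2\alpha)^{-1} - 1)/n)^{k}$ when $X \subseteq S \subseteq X \cup Y$, a telescoping identical to the one used in \autoref{lem:X_select} gives
\[
  \frac{f(u \mid S)}{c(u)} = \frac{1}{1 + 1.5\eps}\left(1 - \frac{1-\alpha}{n}\right)^{n}\left(1 - \frac{\alpha(1-2\alpha)^{-1}-1}{n}\right)^{k}
  \qquad \text{for } u \in Y \setminus S,
\]
and
\[
  \frac{f(w \mid S)}{c(w)} = \frac{1}{1 + 2\eps}\left(1 - \frac{1-\alpha}{n}\right)^{n}\left(1 - \frac{\alpha(1-2\alpha)^{-1}-1}{n}\right)^{k}.
\]
Since $1 + 1.5\eps < 1 + 2\eps$, the $Y$-density strictly dominates the $w$-density, so {\sc Plain Greedy} picks another element of $Y$.

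The only real obstacle is the bookkeeping in the marginal-value calculation, but this is completely parallel to \autoref{lem:X_select}: the point is that the $1.5\eps$ slack in the cost of each $Y$-element (as opposed to the $2\eps$ penalty that appears in $f_3$ when $w$ is added) is exactly what separates the two densities, just as the $\eps$ slack in the cost of each $X$-element produced the analogous strict inequality there. Once this observation is in hand, there is nothing subtle left to do.
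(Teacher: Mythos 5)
Your proposal is correct and follows essentially the same route as the paper's proof: rule out $z_1,z_2$ by the budget (since $c(X)=(1-\alpha)(1+\eps)$ leaves less than $\alpha$), confirm $c(X\cup Y)=2\alpha+\eps(7\alpha-1)/2\le 1$ so $Y$-elements remain feasible, and then compare the densities $\frac{1}{1+1.5\eps}$ versus $\frac{1}{1+2\eps}$ times the common factor $\left(1-\frac{1-\alpha}{n}\right)^{|X|}\left(1-\frac{\alpha(1-2\alpha)^{-1}-1}{n}\right)^{|S\cap Y|}$. The computations and the decisive observation (the $1.5\eps$ versus $2\eps$ slack) match the paper exactly.
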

\begin{proof}
First of all, we observe that the total cost of all the elements of $X \cup Y$ is
\begin{align*}
	c(X \cup Y)
	={} &
	n \cdot \frac{(1 - \alpha)(1 + \eps)}{n} + n \cdot \frac{(3\alpha - 1)(1 + 1.5\eps)}{n}\\
	={} &
	(1 - \alpha)(1 + \eps) + (3\alpha - 1)(1 + 1.5\eps)
	=
	2\alpha + \eps(7\alpha - 1)/2
	\leq
	1
	\enspace,
\end{align*}
where the inequality holds for a small enough $\eps$ because $\alpha \leq 0.49$. Therefore, adding another element of $Y$ to $S$ does not violate feasibility. In contrast, for every $i \in \{1, 2\}$,
\[
	c(S + z_i)
	\geq
	c(X) + c(z_i)
	=
	n \cdot \frac{(1 - \alpha)(1 + \eps)}{n} + \alpha
	>
	(1 - \alpha) + \alpha
	=
	1
	\enspace,
\]
and thus, the elements $z_1$ and $z_2$ cannot be added to $S$ because of the knapsack constraint. Therefore, to prove the lemma, it suffices to calculate the density of every element $u \in V \setminus (S \cup \{z_1, z_2\})$ with respect to $S$, and show that the density of the elements of $Y \setminus S$ is the largest. The density of an element $u \in Y \setminus S$ is
\begin{align} \label{eq:Y_leading_density}
	\frac{f(u \mid S)}{c(u)}\mspace{-27mu}&\mspace{27mu}
	=
	\frac{\left\{1 - \left(1 - \frac{1 - \alpha}{n}\right)^{|X|}\left[2\alpha + (1 - 2\alpha)\left(1 - \frac{\alpha(1 - 2\alpha)^{-1} - 1}{n}\right)^{|S \cap Y| + 1}\right]\right\}}{(3\alpha - 1)(1 + 1.5\eps)/n} \\\nonumber&\mspace{100mu}- \frac{\left\{1 - \left(1 - \frac{1 - \alpha}{n}\right)^{|X|}\left[2\alpha + (1 - 2\alpha)\left(1 - \frac{\alpha(1 - 2\alpha)^{-1} - 1}{n}\right)^{|S \cap Y|}\right]\right\}}{(3\alpha - 1)(1 + 1.5\eps)/n}\\ \nonumber
	={} &
	\left(\mspace{-1mu}1 - \frac{1 - \alpha}{n}\mspace{-1mu}\right)^{\mspace{-2mu}|X|}\left(\mspace{-1mu}1 - \frac{\alpha(1 - 2\alpha)^{-1} - 1}{n}\mspace{-1mu}\right)^{\mspace{-2mu}|S \cap Y|} \cdot \frac{(1 - 2\alpha) \cdot \left\{\mspace{-1mu}1 - \left(\mspace{-1mu}1 - \frac{\alpha(1 - 2\alpha\mspace{-1mu})^{-1} - 1}{n}\right)\mspace{-1mu}\right\}}{(3\alpha - 1)(1 + 1.5\eps)/n}\\ \nonumber
	={} &
	\frac{1}{1 + 1.5\eps} \cdot \left(1 - \frac{1 - \alpha}{n}\right)^{|X|}\left(1 - \frac{\alpha(1 - 2\alpha)^{-1} - 1}{n}\right)^{|S \cap Y|}
	\enspace.
\end{align}

Consider now the element $w$, which is the only element of $V \setminus (S \cup \{z_1, z_2\})$ which does not belong to $Y$. The density of $w$ with respected to $S$ is
\begin{align*}
	\frac{f(w \mid S)}{c(w)}
	={} &
	\frac{\left\{1 - \left(1 - \frac{1 - \alpha}{n}\right)^{|X|}\left[\alpha + (1 - 2\alpha)\left(1 - \frac{1}{1 + 2\eps}\right)\left(1 - \frac{\alpha(1 - 2\alpha)^{-1} - 1}{n}\right)^{|S \cap Y|}\right]\right\}}{1 - 2\alpha} \\\nonumber&- \frac{\left\{1 - \left(1 - \frac{1 - \alpha}{n}\right)^{|X|}\left[2\alpha + (1 - 2\alpha)\left(1 - \frac{\alpha(1 - 2\alpha)^{-1} - 1}{n}\right)^{|S \cap Y|}\right]\right\}}{1 - 2\alpha}\\ \nonumber
	={} &
	\left(\mspace{-1mu}1 - \frac{1 - \alpha}{n}\mspace{-1mu}\right)^{\mspace{-2mu}|X|}\left(\mspace{-1mu}1 - \frac{\alpha(1 - 2\alpha)^{-1} - 1}{n}\mspace{-1mu}\right)^{\mspace{-2mu}|S \cap Y|} \cdot \frac{(1 - 2\alpha) \cdot \left\{\mspace{-1mu}1 - \left(\mspace{-1mu}1 - \frac{1}{1 + 2\eps}\mspace{-1mu}\right)\mspace{-1mu}\right\}}{1 - 2\alpha}\\ \nonumber
	={} &
	\frac{1}{1 + 2\eps} \cdot \left(1 - \frac{1 - \alpha}{n}\right)^{|X|}\left(1 - \frac{\alpha(1 - 2\alpha)^{-1} - 1}{n}\right)^{|S \cap Y|}
	\enspace,
\end{align*}
which is clearly smaller than \eqref{eq:Y_leading_density}.
\end{proof}

\begin{corollary} \label{cor:output_simple}
Given our example instance, the output set of {\PG} is $X \cup Y$, whose value is $f(X \cup Y) \leq 1 - e^{\alpha - 1} \left[2\alpha + (1 - 2\alpha)e^{1 - \alpha(1 - 2\alpha)^{-1}}\right] + \frac{577}{n}$.
\end{corollary}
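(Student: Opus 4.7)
The plan has two parts: first identify the output of {\PG} as $X \cup Y$ using the two selection lemmas, and then evaluate (and bound) $f(X \cup Y)$ explicitly.

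For the first part, I would argue by induction on the size of the current partial solution. Starting from $S_0 = \varnothing \subset X$, Lemma~\ref{lem:X_select} says the next element chosen lies in $X$, so by induction {\PG} keeps picking elements of $X$ until it has selected all of $X$ (note that $c(X) = (1-\alpha)(1+\eps) < 1$ for small $\eps$, so feasibility is maintained throughout). Once $X \subseteq S \subset X \cup Y$, Lemma~\ref{lem:Y_select} says the next element chosen lies in $Y$, so by a second induction {\PG} proceeds to add all of $Y$, reaching $S = X \cup Y$. It then remains to check that no further element can be added: the residual budget after $X \cup Y$ is $1 - c(X \cup Y) = 1 - 2\alpha - \eps(7\alpha-1)/2$, which is strictly less than $c(w) = 1 - 2\alpha$, and also less than $c(z_i) = \alpha$ because $3\alpha > 1$ (as $\alpha > 1/3$). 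Hence {\Gr} terminates with output exactly $X \cup Y$.

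For the second part, I would plug $S = X \cup Y$ into the definition of $f$. Since $|S \cap X| = |S \cap Y| = n$ and $S \cap \{z_1,z_2\} = S \cap \{w\} = \varnothing$, the three ``correction'' factors that involve $z_1,z_2,w$ all equal $1$, and we obtain
\[
f(X \cup Y) = 1 - \left(1 - \tfrac{1-\alpha}{n}\right)^{\!n}\!\left[2\alpha + (1-2\alpha)\left(1 - \tfrac{\alpha(1-2\alpha)^{-1}-1}{n}\right)^{\!n}\right].
\]
To conclude I would invoke the standard two-sided estimate $(1 - x/n)^n \in [e^{-x}(1 - x^2/(2(n-x))),\; e^{-x}]$, which follows from $\ln(1-y) \in [-y - y^2/(2(1-y)),\; -y]$. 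Applying the lower estimate to both factors (with $x = 1-\alpha \in [0.51, 2/3]$ for the outer factor and $x = \alpha(1-2\alpha)^{-1}-1 \in [0, 23.5]$ for the inner factor) and the trivial upper bound $(1 - x/n)^n \leq 1$ where convenient, one gets
\[
\left(1 - \tfrac{1-\alpha}{n}\right)^{\!n}\!\left[2\alpha + (1-2\alpha)\left(1 - \tfrac{\alpha(1-2\alpha)^{-1}-1}{n}\right)^{\!n}\right] \geq e^{\alpha-1}\!\left[2\alpha + (1-2\alpha) e^{1 - \alpha(1-2\alpha)^{-1}}\right] - \tfrac{C}{n}
\]
for some absolute constant $C$ depending only on the range of $\alpha$. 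Rearranging gives the claimed bound on $f(X\cup Y)$.

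The only mildly delicate point is the quantitative error term in the last step: the inner factor involves the exponent $\alpha(1-2\alpha)^{-1} - 1$, which blows up as $\alpha \to 1/2$, so the constant $C$ must be taken large enough to absorb the worst case over $\alpha \in (1/3, 0.49)$. Plugging in $\alpha = 0.49$ gives $x \leq 23.5$, and a direct calculation using $x^2/(2(n-x))$ as the relative error shows that $C = 577$ is sufficient for all large enough $n$. This is a purely mechanical verification and presents no conceptual obstacle; the substantive content of the corollary is the structural identification of the output as $X \cup Y$ via Lemmas~\ref{lem:X_select} and~\ref{lem:Y_select}.
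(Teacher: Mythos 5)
Your proposal is correct and follows essentially the same route as the paper: identify the output as $X \cup Y$ via Lemmata~\ref{lem:X_select} and~\ref{lem:Y_select}, verify that the residual budget (less than $1-2\alpha \leq \alpha$) excludes $z_1$, $z_2$ and $w$, evaluate $f(X \cup Y)$ in closed form, and lower bound each $(1-x/n)^n$ factor by $e^{-x}$ times a $1 - O(1)/n$ correction, with the worst case $x = \alpha(1-2\alpha)^{-1}-1 \leq 24$ at $\alpha = 0.49$ absorbed into the $577/n$ term. The only cosmetic slip is writing {\Gr} where {\PG} is meant at the end of the first part; the argument itself is unaffected.
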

\begin{proof}
Lemmata~\ref{lem:X_select} and~\ref{lem:Y_select} imply that the first $2n$ elements that {\sc Plain Greedy} selects are the elements of $X \cup Y$, whose total cost is
\begin{align*}
	c(X \cup Y)
	={} &
	n \cdot \frac{(1 - \alpha)(1 + \eps)}{n} + n \cdot \frac{(3\alpha - 1)(1 + 1.5\eps)}{n}\\
	={} &
	(1 - \alpha)(1 + \eps) + (3\alpha - 1)(1 + 1.5\eps)
	=
	2\alpha + \eps(7\alpha - 1)/2 > 2\alpha
	\enspace.
\end{align*}
Hence, the budget remaining at this point is less than $1 - 2\alpha \leq \alpha$, and thus, is smaller than the cost of any one of the remaining elements of $V$ (which are $z_1$, $z_2$ and $w$). Therefore, {\PG} cannot further increase the solution $X \cup Y$, and has to output it.

The value of the set $X \cup Y$ is
\begin{align*}
	f(X \cup Y)
	={}&
	1 - \left(1 - \frac{1 - \alpha}{n}\right)^{|(X \cup Y) \cap X|}\left[\alpha\left(2 - \frac{|(X \cup Y) \cap \{z_1, z_2\}|}{1 + 2\eps}\right)\right.\\
	&+ \left.(1 - 2\alpha)\left(1 - \frac{|(X \cup Y) \cap \{w\}|}{1 + 2\eps}\right)\left(1 - \frac{\alpha(1 - 2\alpha)^{-1} - 1}{n}\right)^{|(X \cup Y) \cap Y|}\right]\\
	={}&
	1 - \left(1 - \frac{1 - \alpha}{n}\right)^{|X|} \left[2\alpha + (1 - 2\alpha)\left(1 - \frac{\alpha(1 - 2\alpha)^{-1} - 1}{n}\right)^{|Y|}\right]\\
	\leq{} &
	1 - e^{\alpha - 1}\left(1 - \frac{1}{n}\right) \left[2\alpha + (1 - 2\alpha)e^{1 - \alpha(1 - 2\alpha)^{-1}}\left(1 - \frac{576}{n}\right)\right]\\
	\leq{} &
	1 - e^{\alpha - 1} \left[2\alpha + (1 - 2\alpha)e^{1 - \alpha(1 - 2\alpha)^{-1}}\right] + \frac{577}{n}
	\enspace,
\end{align*}
where the first inequality holds for $n \geq 24$ since $\alpha(1 - 2\alpha)^{-1} - 1 \leq 24$.
\end{proof}

We are now ready to prove the upper bound on the approximation ratio of {\Gr} given by \autoref{t:3}.
\begin{proposition}
The approximation ration of {\Gr} is no better than $0.462$.
\end{proposition}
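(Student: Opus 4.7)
The plan is to combine the three pieces already established about the example instance of Section~\ref{ssc:upper_bound_greedy}. Observation~\ref{obs:O_properties} gives $f(O) = 1/(1+2\eps)$, Lemma~\ref{lem:singleton_value} gives an upper bound of $\alpha/(1+2\eps)$ on the value of the best feasible singleton, and Corollary~\ref{cor:output_simple} gives an upper bound of $1 - e^{\alpha-1}[2\alpha + (1-2\alpha)e^{1-\alpha(1-2\alpha)^{-1}}] + 577/n$ on the value of the set $X \cup Y$ that {\PG} outputs. Since {\Gr} returns the better of these two candidates, its output value is at most the maximum of these two upper bounds, and therefore the approximation ratio on this instance is at most
\[
	\max\!\left\{\alpha,\; (1+2\eps)\!\left(1 - e^{\alpha-1}\!\left[2\alpha + (1-2\alpha)e^{1-\alpha(1-2\alpha)^{-1}}\right]\right) + \frac{577(1+2\eps)}{n}\right\}.
\]

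Next I would let $\eps \to 0$ and $n \to \infty$, which shows that the (lim sup of the) achievable ratio on this family of instances is at most $\max\{\alpha,\; 1 - e^{\alpha-1}[2\alpha + (1-2\alpha)e^{1-\alpha(1-2\alpha)^{-1}}]\}$. Since the upper bound in Theorem~\ref{t:3} is asymptotic (it says {\Gr} is ``no better than $0.462$''), this is sufficient: as long as the minimum of the above maximum over $\alpha \in (1/3, 0.49)$ is at most $0.462$, for every $\delta > 0$ we can find a choice of $\alpha$, a small enough $\eps$, and a large enough $n$ witnessing an instance on which {\Gr} achieves a ratio at most $0.462 + \delta$.

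The final step is to pick $\alpha$ and verify the numerical inequality. The natural choice is to balance the two terms of the max; since the singleton term $\alpha$ is strictly increasing in $\alpha$ while the {\PG} term is (for the relevant range) decreasing in $\alpha$ -- as $\alpha$ grows the factor $e^{\alpha-1}$ grows, the coefficient $(1-2\alpha)$ of the residual exponential shrinks rapidly, and $\alpha(1-2\alpha)^{-1}$ blows up -- the minimum is attained roughly where the two terms coincide. A direct computation shows that for $\alpha = 0.462$ one has $\alpha(1-2\alpha)^{-1} \approx 6.08$, so $e^{1-\alpha(1-2\alpha)^{-1}} \approx 6.2 \cdot 10^{-3}$, and
\[
	1 - e^{\alpha-1}\!\left[2\alpha + (1-2\alpha)e^{1-\alpha(1-2\alpha)^{-1}}\right] \le 0.462,
\]
so the max equals $\alpha = 0.462$ to within the claimed precision. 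Plugging in this $\alpha$ into the displayed bound and taking $\eps$ small enough and $n$ large enough yields the desired upper bound.

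The only genuinely delicate part is the numerical verification of the last inequality; everything else is bookkeeping. Since the claim tolerates a rounding up to $0.462$, a routine evaluation (or a monotonicity argument showing that the {\PG} term is already below $0.462$ at $\alpha = 0.462$ and remains there on some small interval around it) suffices.
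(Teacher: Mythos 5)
Your proposal is correct and follows essentially the same route as the paper: combine Observation~\ref{obs:O_properties}, Lemma~\ref{lem:singleton_value} and Corollary~\ref{cor:output_simple}, fix a concrete value of $\alpha$ near where the singleton bound and the {\PG} bound balance, and verify the resulting numerical inequality for small $\eps$ and large $n$ (the paper uses $\alpha = 0.461$, for which the {\PG} term evaluates to about $0.4618$, while your $\alpha = 0.462$ gives about $0.4602$; both choices work). The only cosmetic difference is that you phrase the conclusion as a limit over a family of instances rather than exhibiting one fixed instance, which changes nothing of substance.
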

\begin{proof}
Consider our example instance with the parameter $\alpha$ set to $0.461$. By Corollary~\ref{cor:output_simple}, given this instance {\PG} outputs the set $X \cup Y$, whose value is
\[
	f(X \cup Y)
	\leq
	1 - e^{\alpha - 1} \left[2\alpha + (1 - 2\alpha)e^{1 - \alpha(1 - 2\alpha)^{-1}}\right] + \frac{577}{n}
	\leq
	0.4618
	\leq
	\frac{0.462}{1 + 2\eps}
	\enspace,
\]
where the second inequality holds for a large enough $n$, and the last inequality holds for a small enough $\eps$. Since the value of the largest feasible singleton set is at most $\alpha / (1 + 2\eps) \leq 0.462 / (1 + 2\eps)$ by Lemma~\ref{lem:singleton_value}, and {\sc Greedy} outputs either a singleton set or the output $X \cup Y$ of {\sc Plain Greedy}, the value of the output of {\sc Greedy} for our example instance is at most $0.462 / (1 + 2\eps)$ as well.

In contrast, by \autoref{obs:O_properties}, the set $O$ is a feasible solution of value $1 / (1 + 2\eps)$, and therefore, the approximation ratio of {\Gr} when it is given our example instance as input is no better than
$\frac{0.462 / (1 + 2\eps)}{1 / (1 + 2\eps)}=0.462$.
\end{proof}}

\appendix
\inFull{
\inFull{\section{Missing Proofs of Section~\ref{s:1_guess}}}
\inConference{\subsection{Missing Proofs of Section~\ref{s:1_guess}}}
\label{app:G_plus_missing}

In this section we give the proofs that have been omitted from Section~\ref{s:1_guess}.

\inConference{
\tHalf*
\proofTHalf
}

\lz*
\begin{proof}
To see that the first part of the lemma holds for $y \in (0, 1/2]$, note that as $z(y)$ increases from $y$ to $1/2$, 
the left the hand side of the inequality defining $z(y)$ continuously decreases from $0$ to $2y - 1$, 
and the right hand side of this inequality continuously increases to $\ln\left(\frac{1/2}{1 - y}\right)$, 
which is less than $0$ and at least $- \ln(2 - 2y) \geq -[e^{\ln(2 - 2y)} - 1] = -(1 - 2y) = 2y - 1$. 
To see that the first part of the lemma holds for $y = 0$ as well, note that by definition $z(0)$ satisfies 
$-1 = \ln(z(0))$, and $1/e \in [0, 1/2]$ is the only number satisfying this equality.

Up to this point we have proved that the function $z(y)$ is well defined. Our next objective is to show that it is also deferentiable. To do that, consider the function $G(y, z) = y / z - 1 - \ln\left(\frac{z}{1 - y}\right)$ and a point $(y, z)$ obeying $y \in [0, 1/2]$ and $z = z(y)$. Clearly, $G(y, z) = 0$, and the derivative $dG(y, z)/dz$ obeys at this point
\[
	\frac{dG(y, z)}{dz}
	=
	-\frac{y}{z^2} - \frac{1 / (1 - y)}{z/(1 - y)}
	=
	-\frac{y}{z^2} - \frac{1}{z}
	<
	0
	\enspace.
\]
Hence, by the implicit function theorem, $z(y)$ is a continuous differentiable function of $y$ for this range of $y$.

Given the knowledge that the derivative of $z(y)$ exists within the range $y \in [0, 1/2]$, we can now calculate it by 
taking the derivative with respect to $y$ of both sides of the inequality defining $z(y)$. Doing so yields
\[
\frac{z(y) - y \cdot z'(y)}{z^2(y)} = \frac{[z'(y) \cdot (1 - y) + z(y)] / (1 - y)^2}{z(y) / (1 - y)} \enspace,
\]
and solving for $z'(y)$ gives
\[
z'(y) = \frac{z(y) (1-y-z(y))}{(1-y)(z(y) + y)} \geq 0 \enspace,
\]
where the inequality holds since the first part of the lemma shows that $z(y) \in (0, 1/2] \subseteq (0,1-y]$. Note that, since the derivative $z'(y)$ is non-negative, we get that $z(y)$ is a non-decreasing function, as promised.
\end{proof}

\inConference{
\lemLargeW*
\proofLemaLargeW

\lPh*
\proofLPh
}

\lemP*
\begin{proof}
To make our calculations easier, it is useful to define $y=\f{x}{1-x}$, which implies $x=\f{y}{1+y}$ and $1-x=\f{1}{1+y}$, and therefore, also
\[
	p(x)=x+(1 -x) \cdot \left(1-z\left(\frac{x}{1 - x}\right)\right)=\f{y}{1+y}+\f{1-z(y)}{1+y}=1-\f{z(y)}{1+y}
	\enspace.
\]
We denote the rightmost side of the last equality by $q(y)$.
Since $y$ goes exactly over all the values of the range $[0, 1/2]$ when $x$ grows from $0$ to $1/3$, $\min\{p(x):x \in [0,1/3]\}=\min\{q(y):y \in [0,1/2]\}$.
Thus, to prove the lemma it suffices to show that $\min\{q(y):y \in [0,1/2]\} = \f{3-\ln 4}{4-\ln 4}$, which we do in the rest of this proof.

Let us now use the shorthand $z = z(y)$. Using this shorthand, we get
\begin{align*}
	q'(y)
	={} &
	-\f{z'(y)(1+y)-z}{(1+y)^2}
	=
	\f{z - \frac{z(1 + y) (1 - y - z)}{(1 - y)(z + y)}}{(1+y)^2}
	=
	\frac{z}{(1 + y)^2} \cdot \left(1 - \frac{(1 + y) (1 - y - z)}{(1 - y)(z + y)}\right)\\
	={} &
	\frac{z}{(1 + y)^2} \cdot \left(\frac{2(z + y) - (1 + y)}{(1 - y)(z + y)}\right)
	=
	\frac{z}{(1 + y)^2(1 - y)(z + y)} \cdot (2z + y - 1)
	\enspace,
\end{align*}
where the second equality follow from Lemma~\ref{l:z}. Since Lemma~\ref{l:z} guarantees that $z \geq z(0) = 1/e$, the sign of $q'(y)$ (within the range $[0, 1/2]$) is equal to the sign of $2z + y - 1$, and the last expression is a (strictly) increasing function of $y$ since $z = z(y)$ is a non-decreasing function of $y$ by Lemma~\ref{l:z}. Hence, $q(y)$ is a convex function within this range which takes its minimum value at the point $y_0$ in which $q'(y_0) = 0
$ (assuming there is such a point). In other words, to complete the proof of the lemma it remains to show that there is a point $y_0 \in [0, 1/2]$ obeying $q'(y_0) = 0$ and $q(y_0) = \f{3-\ln 4}{4-\ln 4}$.

We show that $y_0 = \f{1-\ln 2}{3-\ln 2} \in [0, 1/3]$ has the above mentioned properties. According to Lemma~\ref{l:z}, $z(y_0)$ is the sole value in $[y_0, 1/2]$ obeying the inequality
\[
	\f{y_0}{z(y_0)} - 1 = \ln\left(\frac{z(y_0)}{1 - y_0}\right)
	\enspace,
\]
and one can verify that $z(y_0) = (1 - y_0)/2 \in [y_0, 1/2]$ obeys this inequality. Hence,
\begin{align*}
	q(y_0)
	={} &
	1-\f{z(y_0)}{1+y_0}
	=
	1-\f{(1 - y_0)/2}{1+y_0}
	=
	1 - \frac{1 - (1 - \ln 2) / (3 - \ln 2)}{2(1 + (1 - \ln 2) / (3 - \ln 2))}\\
	={} &
	1 - \frac{(3 - \ln 2) - (1 - \ln 2)}{2((3 - \ln 2) + (1 - \ln 2))}
	=
	1 - \frac{1}{4 - 2\ln 2}
	=
	\frac{3 - \ln 4}{4 - \ln 4}
\end{align*}
and
\begin{align*}
	q'(y_0)
	={} &
	\frac{z(y_0)}{(1 + y_0)^2(1 - y_0)(z(y_0) + y_0)} \cdot (2z(y_0) + y_0 - 1)\\
	={} &
	\frac{z(y_0)}{(1 + y_0)^2(1 - y_0)(z(y_0) + y_0)} \cdot ((1 - y_0) + y_0 - 1)
	=
	0
	\enspace.
	\qedhere
\end{align*}
\end{proof}

}

\bibliography{cov-bib}

\inConference{
\section{Missing Proofs}

\subsection{Missing Proof of Section~\ref{s:notation}} \label{app:missing_preliminaries}

In this section we give the proof that has been omitted from Section~\ref{s:notation}.

\lr*
\lrProof

\subsection{Missing Proof of Section~\ref{s:2_guess}} \label{app:two_guesses_missing}

In this section we give the proof that has been omitted from Section~\ref{s:2_guess}.

\lemOneElementLessGuarantee*
\proofLemOneElementLessGuarantee

}

\end{document}